%%Proprieta documento
\documentclass[preprint]{acmtrans2m}
\sloppy
%%Packages
\usepackage{proof} 
\usepackage{amsmath}
\usepackage{amssymb}
\usepackage{stmaryrd}
\usepackage{pstricks,pst-node,pst-text,pst-3d}
\usepackage[all]{xy}
\usepackage{graphicx}
%\usepackage{rotating}
%\usepackage[mathscr]{euscript}
%%Theorem
\newtheorem{definition}{Definition}
\newtheorem{theorem}{Theorem}
\newtheorem{property}{Property}
\newtheorem{example}{Example}
\newtheorem{remark}{Remark}
\newtheorem{lemma}{Lemma}

%% \newtheorem{fact}{Fact}
%% \newenvironment{proof}{\noindent {\em Proof.} }{\hfill $\Box$ 
%% \vspace{\baselineskip}}
%% %% Simboli e comandi
\newtheorem{notation}{Notation}{\bf}{\it}
\newcommand{\lin}{\ensuremath{\multimap}}
\newcommand{\redbeta}{\rightarrow_{\beta} }
\newcommand{\reddelta}{\rightarrow_{\delta} }
\renewcommand{\red}{\rightarrow_{\beta\delta} }
\newcommand{\der}{\vdash }
\newcommand{\derf}{\vdash_{F} }
\newcommand{\pder}{\mathop{\rhd} }
\newcommand{\df}{\doteq}

\newcommand{\0}{{\tt 0}}
\renewcommand{\1}{{\tt 1}}

\newcommand{\lif}[3]{\ {\tt if }\  #1\  {\tt then }\   #2\  {\tt else }\  #3\ }
\newcommand{\ifz}{\ {\tt if }\ }
\newcommand{\thenz}{\ {\tt then }\ }
\newcommand{\elsez}{\ {\tt else }\ }

%\newcommand{\rhd}{rk}
%Simboli per insiemi
\newcommand{\BSTA}{\mathrm{STA}_{\bt}}
\newcommand{\STA}{\mathrm{STA}}
\newcommand{\TM}{\mathbf{ATM}}
\newcommand{\bt}{\mathbf{B}}
\newcommand{\T}{\mathcal{T}}
\newcommand{\A}{\Gamma}
\newcommand{\B}{\Delta}

\newcommand{\ev}{\Downarrow}
\newcommand{\FV}{\mathrm{FV}}
\newcommand{\FTV}{\mathrm{FTV}}

\newcommand{\dom}{\mathrm{dom}}
\newcommand{\prog}{\mathcal{P}}
\newcommand{\pathi}{\tt path}

%Simboli per termini
\newcommand{\M}{{\tt M}}
\newcommand{\N}{{\tt N}}
\newcommand{\R}{{\tt R}}
\newcommand{\Q}{{\tt Q}}
\newcommand{\V}{{\tt V}}

\newcommand{\x}{{\tt x}}
\newcommand{\y}{{\tt y}}

\newcommand{\letp}[4]{\mathtt{let }\ #1\ \mathtt{ be }\  #2,#3\ \mathtt{ in }\ #4}

%Simboli per misure
\newcommand{\rk}{{\tt rk}}

%%Bibliografia

%\bibliographystyle{plainnat}
%\renewcommand{\bibfont}{\normalsize}
%\bibliographystyle{alpha}
%\bibliographystyle{splncs}
%\title{A logical account of PSPACE}
\title{An Implicit Characterization of PSPACE}
\author{Marco Gaboardi\\ 
Dipartimento di Scienze dell'Informazione, Universit\`a degli Studi di Bologna - 
Mura Anteo Zamboni 7, 40127 Bologna, Italy,
gaboardi@cs.unibo.it
\and Jean-Yves Marion\\
Nancy-University, ENSMN-INPL, Loria 
B.P. 239, 54506 Vandoeuvre-l\`es-Nancy, France,
Jean-Yves.Marion@loria.fr
\and Simona Ronchi Della Rocca\\ 
Dipartimento di Informatica, Universit\`a degli Studi di Torino - 
Corso Svizzera 185, 10149 Torino, Italy,
ronchi@di.unito.it}
\begin{abstract}
We present a
type system for an extension of lambda calculus with a conditional construction, named $\BSTA$, that characterizes the PSPACE class.
This system is obtained by extending $\STA$, 
a type assignment for lambda-calculus inspired by 
Lafont's  Soft Linear Logic and characterizing the PTIME class.
We extend STA by means of a ground type and terms for booleans and
conditional.
The key issue in the design of the type system is 
to manage the contexts in the rule for conditional
in an additive way. Thanks to this rule, we
are able to program polynomial time Alternating Turing Machines.
From the well-known result APTIME = PSPACE, it follows that $\BSTA$ is complete for PSPACE.\\
Conversely, inspired by the simulation of Alternating Turing machines 
by means of Deterministic Turing machine,
we introduce a call-by-name evaluation machine with two memory devices in order to
evaluate programs in polynomial space.
As far as we know, this is the first characterization of  PSPACE  that is based on lambda calculus
and light logics. 
% We present a characterization of PSPACE by means of a
% type assignment for an extension of lambda calculus with a conditional
% construction. The type assignment
% $\BSTA$ is an extension of
% $\STA$, a type assignment for lambda-calculus inspired by 
% Lafont's  Soft Linear Logic.\\
% We extend STA by means of a ground type and terms for booleans.
% The key point is that the elimination rule for booleans is managed
% in an additive way. Thus, we
% are able to program polynomial time Alternating Turing Machines.
% Conversely, we introduce a call-by-name evaluation machine in order to
% compute programs in polynomial space.
% As far as we know, this is the first characterization of  PSPACE  which is based on lambda calculus
% and light logics. 
\end{abstract}
\category{F.3.3}{Logics and meanings of programs}{Studies of program constructs}[type structure]
\category{F.4.1}{Mathematical logic and formal languages}{Mathematical logic}[lambda calculus and related systems, proof theory]
\terms{ Languages, Theory, Design}
\keywords{
Implicit Computational Complexity, Polynomial Space, Linear Logic, Type Assignment, Operational Semantics
}
\begin{document} 
%\begin{bottomstuff} 
%Work partially supported by the projects ANR-08-BLANC-0211-01 ``COMPL%ICE'',  MIUR-PRIN'07 ``CONCERTO'' and INRIA Associated Team CRISTAL.
%\end{bottomstuff} 
 \maketitle
\section{Introduction}

The argument of this paper fits in the so called Implicit Computational Complexity area, whose aim is to provide
complexity control through language restrictions, without using explicit machine models or external measures.
In this setting, we are interested in the design of programming languages with bounded computational complexity.
We want to use a ML-like approach, so having a $\lambda$-calculus like language, and a 
type assignment system for it, where the types guarantee, besides the functional correctness,
also complexity properties. So, types can be used in a static way in order to check the correct 
behaviour of the programs, also with respect to the resource usage. 
According to these lines, we design in this paper a language correct and complete with respect to PSPACE.
Namely, we supply, besides the calculus, a type assignment system and an evaluation machine, and we prove 
that well typed programs can be evaluated by the machine in polynomial space, and moreover that all
decision functions computable in polynomial space can be coded by well typed programs.\\
\medskip

\textbf{Light Logics and type systems}
Languages characterizing complexity classes through type assignment systems for $\lambda$-calculus are already present in the
literature, but they are quite all related to time complexity. 
The key idea is to use as types the formulae of the light logics, which characterize some classes of time complexity:
Light Linear Logic (LLL) of Girard \cite{Girard98ic}, and Soft Linear Logic (SLL) of Lafont \cite{Lafont04}
characterize polynomial time, while Elementary Linear Logic (EAL) characterizes 
elementary time. The characterization is based on the fact that cut-elimination on proofs in these logics is performed 
in a number of steps which depends in a polynomial or elementary way from the initial size of the proof (while the
degree of the proof, i.e., the nesting of exponential rules, is fixed). 
Moreover, the size of each proof in the cut elimination process can be bound by a  polynomial or an elementary function in the initial size of the proof, respectively.
In addition, all these logics are 
also complete with respect to the related complexity class, using proof-nets for coding functions.\\
The good properties of such logics have been fruitfully used in order to design type assignment systems for
$\lambda$-calculus which are correct and complete with respect to the polynomial or elementary time complexity bound.
Namely, every well typed term $\beta$-reduces to normal form in a number of steps that depends 
in a polynomial or elementary way from its size, and moreover all functions with the corresponding complexity
are representable by a well typed term. Examples of polynomial type assignment systems are in  
\cite{Baillot04lics,journals/iandc/BaillotT09} and ~\cite{GaboardiRonchi07csl,GaboardiRonchi09igpl}, based respectively on LAL (an affine variant of LLL
designed by Asperti and Roversi ~\cite{Asperti02tocl}) and on SLL.
Moreover, an example of an elementary type assignment 
system is in 
\cite{Cop-DLag-Ron:EALCBV-05,DBLP:journals/lmcs/CoppolaLR08}.\\
\medskip

\textbf{Contribution} In order to use a similar approach for measuring space complexity, since there is no previous 
logical characterization of PSPACE from which we can start, we exploit the fact that polynomial space
computations coincide with polynomial time alternating Turing machine computations
(APTIME). In particular, by the results in ~\cite{Sav70} and  ~\cite{ChKS81}, it follows
$$
  \text{PSPACE} = \text{NPSPACE} = \text{APTIME}
$$
So, we start from the type assignment system  STA for $\lambda$-calculus introduced in 
\cite{GaboardiRonchi07csl}. It is based on SLL, in the sense that in STA both types are a proper subset
of SLL formulae, and type assignment derivations correspond, through the Curry-Howard isomorphism,
to a proper subset of SLL derivations. STA is correct and complete (in the sense said before) 
with respect to polynomial time computations. 

Then we design the language
$\Lambda_{\mathcal{B}}$, which is an extension of $\lambda$-calculus with
two boolean constants and a conditional constructor, and we supply it by a type assignment system ($\BSTA$), 
where the types are STA types plus a constant type $\bt$ for booleans, and rules for conditional.
In particular, the elimination rule for conditional is the following:
 $$
\infer[(\bt E)]{\Gamma \der \lif{{\tt M}}{{\tt N_{\0}}}{{\tt N_{\1}}}:A}{\Gamma\der 
{\tt M}:\bt & \A\der {\tt N_{\0}}:A  &\A\der {\tt N_{\1}}:A}
$$
In this rule, contexts are managed in an additive way, that is
with free contractions. From a computational point of
view, this intuitively means that a computation can repeatedly fork
into subcomputations and the result is obtain by a backward
computation from all subcomputation results. \\
While the time complexity result for STA is not related to a particular evaluation strategy,
here, for characterizing space complexity, the evaluation should be done carefully. Indeed, an uncontrolled %call-by-value 
evaluation can construct exponential size terms. So
we define a call-by-name SOS evaluation machine, $\mathrm{K}_{\mathcal{B}}^{\mathcal{C}}$, inspired by Krivine's machine
\cite{Krivine07} for $\lambda $-calculus, where substitutions are made only on head variables. 
This machine is equipped with two memory devices, and the space used by it
is proved to be the dimension of its maximal configuration. The proof is made through the design of an
equivalent small-step machine.
Then we prove that, if $\mathrm{K}_{\mathcal{B}}^{\mathcal{C}}$ takes a program (i.e., a closed 
term well typed with a constant type) as input, then
the size of each configuration is polynomially bounded 
in the
size of the input. So every program is evaluated by the machine in 
polynomial space.
Conversely, we encode every polynomial time alternating Turing machine by a
program well typed in $\BSTA$. The simulation relies on a higher order
representation of a parameter substitution recurrence schema inspired by the one in~ \cite{conf/csl/Marion94}. \\
\medskip

\textbf{Related works}
The present work extends the preliminary results that have been presented to POPL '08 \cite{GaboardiMarionRonchi08popl}.
The system $\BSTA$ is the first characterization of PSPACE through a type assignment system in the light logics setting.
A proposal for a similar characterization has been made by Terui 
\cite {Terui00}, but the work has never been completed.

The characterization presented here is strongly based on the additive rule $(\bt E)$ presented above.
The key role played by this rule in the characterization of 
the PSPACE class has been independently suggested by Hofmann in the context of non-size-increasing computations \cite{journals/iandc/Hofmann03}.
There, the author showed that by adding to his LFPL language
a form of \emph{restricted duplication} one can encode the ``quantified 
boolean formulas problem'' and recover exactly the
behaviour of the rule $(\bt E)$.
Besides the difference in the setting where our study is developed with respect
to the Hofmann one, our work improves on this 
in the fact that we give a concrete syntactical proof of PSPACE soundness for programs by means of an evaluation machine while Hofmann PSPACE soundness relies on a semantic argument that
hides the technical difficulties that one needs to deal with in the 
evaluation of programs. Moreover, we here give a PSPACE completeness result based on the definability of all polynomial time Alternating Turing Machines.

In our characterization we make use of boolean constants in order to have a fine control of the space needed to evaluate  programs. A
use of constants similar in spirit to the present one 
has been also employed by the second author in 
\cite{conf/tlca/LeivantM93}, in order to give a characterization of the PTIME class.

There are several other implicit characterizations of polynomial space
computations using principles that differ from the ones explored in this paper. The characterizations in ~\cite{conf/csl/Marion94,1997:tapsoft:leivant} and 
\cite{conf/dagstuhl/Oitavem01,DBLP:journals/mlq/Oitavem08} are based on ramified recursions
over binary words. In finite model theory,
PSPACE is captured by first order queries with a partial fixed point
operator \cite{Var82,AV89}. The reader may consult the recent
book~\cite{FMTA}. Finally there are some algebraic characterizations like the
one~\cite{Go92b} or~\cite{Jones00} but which are, in essence, over
finite domains.

Apart from the class PSPACE, the light logic principles have been 
used to characterize other interesting complexity classes.  
In \cite{conf/tlca/Maurel03} and \cite{GaboardiMarionRonchi08lsfa} 
an explicit sum rule to deal with 
non deterministic computation has been studied in the setting of
Light Linear Logic and Soft Linear Logic, respectively. Both these
works give implicit characterizations of the class NPTIME.
Another important  work in this direction is the one 
in \cite{Schopp07lics} where 
a logical system characterizing
logarithmic space computations is defined, the  
Stratified Bounded Affine Logic (SBAL). 
Interestingly, the logarithmic space 
soundness for SBAL is proved in an interactive way by means of a geometry of 
interaction algorithm considering only proofs of certain sequents to represent
the functions computable in logarithmic space. This idea was already present in the previous work \cite{conf/csl/Schopp06} of the same author and it has been further
explored in the recent work \cite{conf/esop/LagoS10}.\\

\medskip

\textbf{Outline of the paper}
In Section 2 the system $\BSTA$ is introduced and the proofs of
subject reduction and strong normalization properties are given. In Section 3 the operational semantics
of $\BSTA$ program is defined, through two equivalent abstract evaluation machines.
In Section 4 we show that $\BSTA$ programs can be executed in polynomial
space. In Section 5 the completeness for PSPACE is proved.
Section 6 contains some conclusions.

%%% Local Variables: 
%%% mode: latex
%%% TeX-master: "main"
%%% End: 

% !TEX root = main.tex
\section{Soft Type Assignment system with Booleans}
In this section we present the paradigmatic language 
$\Lambda_{\mathcal{B}}$ and a type assignment for it, $\BSTA$, and we will prove that
$\BSTA$ enjoys the properties of subject reduction and strong normalization.
$\Lambda_{\mathcal{B}}$ is an extension of the $\lambda$-calculus with boolean constants 
$\0,\1$ and an $\ifz$ constructor. 
$\BSTA$ is an extension of the type system $\STA$ for 
$\lambda$-calculus introduced in \cite{GaboardiRonchi07csl},
which assigns to $\lambda$-terms a proper subset of 
formulae of Lafont's Soft Linear Logic \cite{Lafont04}, and it is
correct and complete for polynomial time
computations. 
%$\BSTA$ is obtained from $\STA$ by extending both the calculus and the set of 
%types. The calculus is the $\lambda$-calculus extended by boolean constants 
%$\0,\1$ and
%an $\ifz$ constructor, types are the types of $\STA$ plus a constant type $\bt$ 
%for booleans.
\begin{definition}[($\Lambda_{\mathcal{B}}$)] \label{def:termGrammar} \ 
\begin{enumerate}
\item%[i)]
The set $\Lambda_{\mathcal{B}}$ of \emph{terms} is defined by the 
following grammar:
$$
\M::= \x\mid \0\mid \1\mid \lambda \x.\M\mid \M\M\mid 
\lif{\M}{\M}{\M}
$$
where $\x$ ranges over a countable set of variables and 
$\mathcal{B}=\{\0,\1\}$ is the set of \emph{booleans}.

\item The \emph{reduction relation} $\red\subseteq \Lambda_{\mathcal{B}}\times\Lambda_{\mathcal{B}}$ is the contextual closure of the following rules:
$$
(\lambda \x.\M)\N\redbeta \M[\N/\x]
$$ 
$$
\lif{\0}{\M}{\N}\reddelta \M 
$$
$$
\lif{\1}{\M}{\N}\reddelta \N
$$
$\red^{+}$ denotes the transitive closure of $\red$ and 
$\red^{*}$ denotes the reflexive closure of $\red^{+}$.
\item The \emph{size}  of a term $\tt M$ is denoted as $|\tt M|$ and is defined inductively as
  $${|\tt x |=|0|=|1|}=1 \qquad {\tt |\lambda x.M|=|M|+}1\qquad 
{\tt |MN|=|M|+|N|}
$$
$$
|\lif{\M}{{\tt N_0}}{{\tt N_1}}|=|{\tt M}|+|{\tt N_0}|+|{\tt N_1}|+1
$$
\end{enumerate}
\end{definition}
Note that we use the term $\0$ to denote ``true'' and  
the term $\1$ to denote ``false''.
\begin{notation}
Terms are denoted by $\tt M,N,V,P$. In order to avoid unnecessary parenthesis, we use the Barendregt convention,
so abstraction associates on the left and applications associates on the right. Moreover
$\tt \lambda x y.\tt M$ stands for $\tt \lambda x.\lambda y.\tt M$.
As usual terms are considered up to $\alpha$-equivalence, namely a bound 
variable can be renamed provided no free variable is captured. 
Moreover, $\M[\N/\x]$ denotes the capture-free substitution of all free 
occurrences of $\x$ in $\M$ by $\N$,
$\FV(\M)$ denotes the set of free variables of $\M$ and $n_o(\x,\M)$ denotes the number of free occurrences of the 
variable $\x$ in $\M$.
\end{notation}
In the sequel we will be interested only in typable terms.
\begin{definition}[($\BSTA $)]\label{def:typeGrammar} \ 
\begin{enumerate}
\item%[ii)] 
The set $\T_{\bt} $ of \emph{types} is defined as follows:
\begin{center}
\begin{tabular}{ll}
$A::= \bt \mid \alpha \mid \sigma \lin A  \mid \forall \alpha.A $ & (Linear Types)\\
$\sigma ::= A\hspace{0.7mm} \mid !\sigma$
\end{tabular}
\end{center}
where $\alpha$ ranges over a countable set of type variables and 
$\bt$ is the only \emph{ground} type. 
\item%[iii)] 
A \emph{context} is a set of assumptions of the shape $\x:\sigma$,
where all variables are different. 
% JY : Are you sure of this phrase ``are ranged over'' ? 
% JY : You could just say ``we use letters $\Gamma,\Delta$ as contexts''
%Contexts are ranged over 
We use $\Gamma, \Delta$ to denote contexts.
\item%[iv)] 
The system $\BSTA$ proves judgments of the shape
$\Gamma \vdash \M:\sigma$
where $\Gamma$ is a context, $\M$ is a term, and $\sigma$ is a  type.
The rules are given in Table \ref{systemDefinition}.
%, where as usual the 
%rule $(\forall R)$ has the side condition that $\alpha$ must not be free in $\Gamma$.
%\item%[v)] 
%Particular instances of rules in a derivation 
%are denoted by 
%$\pi,\theta$.
\end{enumerate}   
\end{definition}
%In general we will talk about a rule instead of a specific instance when it 
%will be clear from the context.
%We will use the following:
\begin{notation}
%\begin{notation}
Type variables are denoted by $\alpha, \beta$, linear types by $A,B,C$, and 
types by $\sigma,
\tau, \mu$. The symbol  $\equiv $ denotes the syntactical equality both for types and terms
(modulo renaming of bound variables). 
As usual $\lin$ associates to the right and has precedence on $\forall$, while $!$ has precedence on everything else.
The notation $\sigma[A/\alpha]$ stands for the usual capture free 
substitution in $\sigma$ of all occurrences of the type variable
$\alpha$ by the linear type $A$. 
% JY : I don't understand the sentence below
%$\FTV(\Gamma)$ denotes the set of free type 
%variables of types $\sigma$ occurring in the assumptions of the context $\Gamma$. \\
We use $\dom(\Gamma)$ and $\FTV(\Gamma)$ to denote respectively
the sets of variables and of free type 
variables  that occur in the assumptions of the context $\Gamma$.
%$\dom(\Gamma)=\{ \x\ |\ \exists \x:\sigma\in\Gamma\}$
The notation $\Gamma\#\Delta$ stands for  $\dom(\Gamma)\cap \dom(\Delta)=\emptyset$.
Derivations are denoted by $\Pi, \Sigma, \Theta$. $\Pi\pder  \Gamma \vdash \M:\sigma$ denotes a derivation $\Pi$
with conclusion $\Gamma \vdash \M:\sigma$.  We let $\vdash \M:\sigma$ 
abbreviate 
$\emptyset \vdash \M:\sigma $. 
As usual, $\forall{\vec{\alpha}}.A$ is an abbreviation for $\forall \alpha_{1}....\forall\alpha_{m}.A$,
and 
$!^{n}\sigma$ is an abbreviation for $!...!\sigma$ $n$-times  ($m,n \geq 0$). %In particular $!^{0}\sigma\equiv \sigma$,
%and $\forall{\epsilon}:A \equiv A$.
\end{notation}
We stress that each type is of the shape $!^{n}\forall{\vec{\alpha}}.A$.
The type assignment system $\BSTA$ is obtained form $\STA$ just by 
adding the rules
for dealing with the $\ifz$ constructor. Note that the rule $(\bt E)$ has an additive treatment of the contexts,
and so 
contraction is free, while all other rules are multiplicative. Moreover $\BSTA$ is  affine, since the weakening is free,
so it enjoys the following properties.

\begin{table}
\begin{center}
\begin{tabular}{|c|}
\hline
\\
$
\begin{array}{lclclcl}
\text{(Linear Types)}\quad  A, B&:=& \bt\ |\ \alpha\ |\ \sigma\multimap A\ |\ \forall \alpha.A&\qquad\quad&\text{(Types)}\quad\sigma,\tau&:=& A \ |\ !\sigma
\end{array}
$\\
\\
\hline
\\
$
\infer[(Ax)]{{\tt x}:A\der {\tt x}:A}{}
\qquad 
\infer[(\bt_{\0} I)]{\der \0:\bt}{}
\qquad
\infer[(\bt_{\1} I)]{\der \1:\bt}{}
\qquad
\infer[(w)]{\A,\x:A\der {\tt M}:\sigma}{\A \der {\tt M}:\sigma  }
$\\
\\
% JY : I erase a , after \A
$
\qquad
\infer[(\lin I)]{\A\der {\tt \lambda x.M}:\sigma\lin A}{\A, {\tt x}:\sigma\der {\tt M}:A }
\qquad 
\infer[(\lin E)]{\A,\B\der {\tt MN}:A}{\A\der {\tt M}:\sigma\lin A & \B \der {\tt N}:\sigma & \A \# \B}
%\qquad 
$
\\
\\
$
\infer[(m)]{\A, {\tt x}:!\sigma\der {\tt M}[{\tt x}/{\tt x}_1,\cdots,{\tt x}/{\tt x}_n ]:\tau}{\A,{\tt x}_1:\sigma,\ldots,{\tt x}_n:\sigma\der {\tt M}:\tau}
\qquad
\infer[(sp)]{!\A\der {\tt M}:!\sigma}{\A\der {\tt M}:\sigma}
%\infer[(w)]{\A, {\tt x}:A\der {\tt M}:\sigma}{\A\der {\tt M}:\sigma}
\qquad
\infer[(\forall E)]{\A\der {\tt M}:B[A/\alpha]}
{\A\der {\tt M}:\forall \alpha .B }
$
\\
\\
$
\infer[(\bt E)]{\Gamma \der \lif{{\tt M}}{{\tt N_{\0}}}{{\tt N_{\1}}}:A}{\Gamma\der {\tt M}:\bt & \Gamma\der {\tt N_{\0}}:A  &\Gamma\der {\tt N_{\1}}:A}
\qquad
\infer[(\forall I)]{\A\der {\tt M}:\forall \alpha .A}
{\A\der {\tt M}:A & \alpha\notin \FTV(\A)}

$
\\
\\
\hline
\end{tabular}
\end{center}
\caption{The Soft Type Assignment system with Booleans}
\label{systemDefinition}
\end{table}
%% $ $\\
%%  A context $\A$ is a finite set of type assignments ${\tt x}:\sigma$.
%%  Contexts will be ranged over by
%% $\A$,$\B$,$\C$. 
%% $dom(\A)$ denotes the set $
%% \{{\tt  x}\ |\ {\tt x}:\sigma\in \A \textrm{ for some } \sigma \in T \}$.
%% $\A\#\B$ denotes $dom(\A)\cap dom(\B)=\emptyset$. 
%% $! \Gamma$ denotes a context where all formulas are modal, i.e., of the shape $!A$, for some $A$.
%% \end{definition}
%We have the following standard properties for an affine natural deduction system.
\begin{lemma}[Free variable lemma] \
\begin{enumerate}
\item $\A\der \M:\sigma$ implies $\FV(\M)\subseteq \dom(\A)$.
\item $\A\der \M:\sigma, \B\subseteq \A$ and $\FV(\M)\subseteq \dom(\B)$
imply $\B\der \M:\sigma$.
\item $\A\der \M:\sigma, \A\subseteq \B$ implies $\B\der \M:\sigma$.
\end{enumerate}
\end{lemma}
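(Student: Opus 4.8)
The plan is to prove the three items by structural induction on the typing derivation $\Pi \pder \A \der \M : \sigma$, establishing (1) first since it is what makes the context splitting in (2) go through; (3) can be handled independently. For (1) the base cases are immediate: $(Ax)$ gives $\FV(\x) = \{\x\} = \dom(\x:A)$, while $(\bt_{\0} I)$ and $(\bt_{\1} I)$ give $\FV(\0) = \FV(\1) = \emptyset$. In every inductive case the free variables of the conclusion term are contained in the union of the free variables of the premise terms, up to the variable bound by $(\lin I)$ and the renaming of the contracted occurrences $\x_1,\dots,\x_n$ into $\x$ performed by $(m)$; since each rule enlarges (or, for $(\lin I)$, removes the discharged variable from) the domain accordingly, the inclusion $\FV(\M) \subseteq \dom(\A)$ is preserved. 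The only point worth noting is $(\bt E)$, where the three premises share the context $\A$ and $\FV(\lif{\M}{\N_{\0}}{\N_{\1}}) = \FV(\M) \cup \FV(\N_{\0}) \cup \FV(\N_{\1}) \subseteq \dom(\A)$ by three applications of the induction hypothesis.

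For (3) I would add the assumptions of $\B \setminus \A$ one at a time, so it suffices to show a single assumption $\x:\sigma$ (with $\x \notin \dom(\A)$) can always be adjoined. Writing $\sigma = !^{k}A$ with $A$ linear, I would argue by induction on $k$: when $k=0$ the assumption is linear and is introduced directly by the free weakening rule $(w)$; when $k>0$, I first adjoin $\x_1 : !^{k-1}A$ by the induction hypothesis and then apply $(m)$ with a single contracted occurrence, observing that $\M[\x/\x_1] \equiv \M$ because the fresh $\x_1$ does not occur in $\M$. This yields $\A, \x:!^{k}A \der \M : \sigma$, and iterating gives $\B \der \M : \sigma$.

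Part (2) is the delicate one, and I expect it to be the main obstacle. I would again induct on $\Pi$, restricting the context to $\B$ at each step. The unary and additive rules are routine: for $(w)$, either the weakened variable already lies in $\B$ and is kept, or it does not and the weakening step is simply dropped; for $(sp)$ and $(\bt E)$ one restricts the (uniformly $!$-ed, resp.\ shared) context and reapplies the rule, using (1) to check that each subterm's free variables remain covered. The real care is needed in the multiplicative rules. In $(\lin E)$, with conclusion $\A,\B \der \M\N : A$ and $\A \# \B$, I would split the target context $\C \subseteq \A \cup \B$ as $(\C \cap \A)$ and $(\C \cap \B)$; part (1) guarantees $\FV(\M) \subseteq \dom(\C \cap \A)$ and $\FV(\N) \subseteq \dom(\C \cap \B)$, so the induction hypothesis applies to both premises and $(\lin E)$ recombines them into $\C \der \M\N : A$.

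The rule $(m)$ is the most technical case. If the contracted variable $\x$ does not occur free in $\M[\x/\x_1,\dots,\x/\x_n]$, then none of the $\x_i$ occur free in $\M$, the substitution is vacuous, and the whole $(m)$ step together with its $n$ premise assumptions can be discarded before invoking the induction hypothesis on the premise; otherwise $\x \in \dom(\B)$, so $\x$ is retained and the rule is reapplied after restricting the remaining context and keeping $\x_1,\dots,\x_n$. Tracking these context manipulations carefully, and appealing to (1) throughout to locate where each subterm's free variables live, is the crux of the argument.
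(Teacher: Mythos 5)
Your proposal is correct and follows the same route as the paper, which simply states that all three points are proved by induction on the derivation; you have carried out that induction in detail, including the one genuinely non-obvious point (that $(w)$ only introduces linearly typed assumptions, so adjoining $\x:!^{k}A$ in part (3) must go through a vacuous application of $(m)$) and the use of part (1) to split the context in the $(\lin E)$ case of part (2).
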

\begin{proof}
All the three points can be easily proved by induction on the derivation proving $\A\der \M:\sigma$. 
\end{proof}
Moreover, the following property holds:
\begin{lemma}
\label{lem:linearVarnot}
  $\Gamma, \x:A \der \M:!\sigma$ implies $\x \not\in \FV(\M)$.
\end{lemma}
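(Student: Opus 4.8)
The plan is to argue by induction on the derivation $\Pi \pder \Gamma, \x:A \der \M:!\sigma$. The crucial preliminary observation, which drives the whole argument, is that a subject type of the shape $!\sigma$ (a \emph{banged} type) can appear in the conclusion of only three of the rules of Table~\ref{systemDefinition}, namely $(w)$, $(m)$ and $(sp)$. Indeed, every other rule produces a conclusion whose type is a \emph{linear} type, i.e. one of $\bt$, $\alpha$, $\sigma\lin A$ or $\forall\alpha.A$, and none of these is syntactically of the form $!\sigma$ (recall that by the grammar a linear type never begins with $!$). Consequently the leaf rules $(Ax)$, $(\bt_{\0}I)$, $(\bt_{\1}I)$ cannot be the last rule when the conclusion type is $!\sigma$, so there is no separate base case to treat, and the induction reduces to analysing the three cases $(w)$, $(m)$, $(sp)$.

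The $(sp)$ case I would dispose of immediately: there the conclusion context is entirely banged, i.e. of the form $!\Gamma'$, so if $\x:A$ occurred in it then $A$ would have the form $!\rho$, contradicting the hypothesis that $A$ is a linear type. Hence this case cannot arise. The $(w)$ case I would split according to whether $\x$ is the variable introduced by the weakening. If it is, then by the Free variable lemma $\FV(\M)\subseteq\dom(\Gamma')$, where $\Gamma'$ is the premise context (which does not contain $\x$), so $\x\notin\FV(\M)$ at once. Otherwise $\x:A$ survives unchanged in the premise, whose subject type is still the banged $!\sigma$, and the induction hypothesis applies directly.

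The heart of the proof, and the step I expect to require the most care, is the $(m)$ case. There the contracted variable, say $\z$, carries a banged type $!\rho$ in the conclusion context; since $A$ is linear we necessarily have $\x\neq\z$, and therefore $\x:A$ occurs in the premise context alongside the copies $\z_1:\rho,\dots,\z_n:\rho$. Applying the induction hypothesis to the premise, whose subject type is again $!\sigma$, yields $\x\notin\FV(\M)$. It then remains to check that the head substitution $\M[\z/\z_1,\dots,\z/\z_n]$ performed by the rule cannot turn the absent $\x$ into a free occurrence; this is immediate once one records that $\x$ is distinct from $\z$ and from each $\z_i$, so the substitution replaces variables different from $\x$ by a variable different from $\x$. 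Thus $\x$ remains not free in the conclusion term, completing the induction. All three cases rely on exactly the same pattern: the linearity of $A$ forbids $\x$ from being the principal banged variable of $(m)$ or $(sp)$, and the banged conclusion type is preserved up the derivation so that the inductive hypothesis is always available.
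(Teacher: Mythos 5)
Your proof is correct and follows essentially the same route as the paper's: the paper disposes of the lemma with a one-line induction on the derivation, observing that a modal conclusion can only be produced by $(sp)$ (and merely preserved by $(w)$ and $(m)$), which is exactly the case analysis you carry out in detail. Your additional observations --- that $(sp)$ forces a fully banged context incompatible with the linear assumption $\x:A$, and that the renaming in $(m)$ cannot create a free occurrence of $\x$ --- are the steps the paper leaves implicit, and they are all sound.
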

\begin{proof}
Easy, by induction on the derivation proving $\Gamma, \x:A \der \M: !\sigma$ noticing that the only way to have a modal conclusion is by using the $(sp)$ rule.
\end{proof}
In what follows, we  will need to talk about proofs modulo some simple operations.
\begin{definition}
Let $\Pi$ and $\Pi'$ be two derivations in $\BSTA$, proving the same conclusion. Then,  
$\Pi \leadsto \Pi'$ denotes the fact that $\Pi'$ is obtained from $\Pi$
by commuting or deleting some rules or by inserting some applications
of the rule $(w)$.
\end{definition}
% applications, by erasing $m$ rule applications,
%by inserting $n \leq m$ applications
%of rule ($w$), for some $n,m \geq 0$, and by renaming some variables. 

The system $\BSTA$ is not syntax directed, but the Generation Lemma shows that we can modify the derivations, using just
commutation and erasing of rules, in order to connect the shape of a term with the shape of its typings.\\ 

% JY : I have rewrited this lemma by mainly introducing an existential
% quantification and typos in case (1)
\begin{lemma}[Generation lemma]\
\label{lem:generation}
\begin{enumerate}
%%\item $\Gamma \vdash \lambda \x.\M: \sigma$ implies 
%%$\sigma \equiv !^{i}(\forall \vec{\alpha}. \tau \lin A)$, 
%%for some
%%$\tau$, $A$ and $i,|\vec{\alpha}| \geq 0$;
\item $\Pi\pder \Gamma \vdash \lambda \x.\M:\forall \alpha.A$ implies
  there is $\Pi'$, proving the same conclusion as $\Pi$ and ending with an application of rule $(\forall I)$, such that 
$\Pi \leadsto \Pi'$.
\item $\Pi\pder \Gamma \vdash \lambda \x.\M: \sigma \lin A$ implies
  there is $\Pi'$, proving the same conclusion as $\Pi$ and ending with an application of rule $(\lin I)$,
  such that $\Pi\leadsto \Pi'$.
%$$\infer[(\lin R) ]{\Gamma \vdash \lambda x.P: \sigma \lin \mu }
%{\Gamma, x:\sigma \vdash P:\mu}$$
\item $\Pi\pder \Gamma \vdash \M:!\sigma$ implies there is $\Pi'$,
proving the same conclusion as $\Pi$,
  such that $\Pi \leadsto \Pi'$
and $\Pi'$ consists of a subderivation, 
ending with the rule ($sp$) proving 
$!\Gamma' \vdash \M: !\sigma$, followed by a sequence of rules ($w$) and/or ($m$)  dealing with variables not occurring in $\M$.
%,
%($\lin L$), ($m$),($\forall L$), ($cut$), all dealing with variables not occurring in $M$.
\item $\Pi \pder !\Gamma \vdash \M:!\sigma$ implies there is $\Pi'$, proving the same 
conclusion as $\Pi$ and ending with an application of rule $(sp)$,
  such that $\Pi\leadsto \Pi'$.
%$$\infer[(sp) ]{!\Gamma \vdash M:!\sigma}{ \Gamma\vdash M:\sigma}$$
\end{enumerate}
\end{lemma}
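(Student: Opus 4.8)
The plan is to prove all four statements simultaneously by induction on the size (number of rule applications) of the derivation $\Pi$, performing a case analysis on its last rule. The guiding observation is that the rules split into \emph{term-shaping} rules, which fix the outermost constructor of the subject (here $(\lin I)$, $(Ax)$, $(\lin E)$, $(\bt E)$, and the boolean introductions), and \emph{non-shaping} rules $(w)$, $(m)$, $(sp)$, $(\forall I)$, $(\forall E)$, which leave the outer structure of the subject essentially untouched. The relation $\leadsto$ is tailored to absorb exactly the permutations and deletions of the non-shaping rules together with the insertion of fresh $(w)$'s, so the whole argument amounts to pushing the relevant introduction rule to the bottom while recording each rearrangement as a $\leadsto$-step.

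For points~1 and~2 the subject is $\lambda\x.\M$, so the last rule is one of $(\forall I)$, $(\lin I)$, $(w)$, $(m)$, $(\forall E)$ (the remaining rules either change the outer constructor or produce a modal type, which is impossible here since $\forall\alpha.A$ and $\sigma\lin A$ are linear, hence $(sp)$ is excluded). If it is the wanted introduction we are done. The rules $(w)$ and $(m)$ commute downwards harmlessly: they affect neither the type nor the leading $\lambda$ (under $(m)$ the subject changes only by a renaming, so the premise is still an abstraction of the same type), and neither adds free type variables, so the eigenvariable condition of $(\forall I)$ survives the commutation; one applies the induction hypothesis to the premise and reattaches the structural rule below the introduction. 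The genuinely interesting case is $(\forall E)$: its premise has a type beginning with $\forall$, so by point~1 applied to the smaller premise we may assume that premise ends with $(\forall I)$, producing an adjacent $(\forall I)$/$(\forall E)$ pair on the same quantifier; this pair is contracted by the standard type-substitution on derivations (replacing the bound variable by the instantiating linear type throughout, which leaves the context fixed thanks to the eigenvariable condition). The outcome is a strictly smaller derivation with the same conclusion, to which the induction hypothesis applies.

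For points~3 and~4 the type is $!\sigma$, and no term-shaping rule nor either of $(\forall I)$, $(\forall E)$ can produce a modal type; hence the last rule is $(sp)$, $(w)$, or $(m)$. If it is $(sp)$, then $\Pi$ already has the required shape. If it is $(w)$, the weakened assumption is linearly typed and, by the Free Variable Lemma, concerns a variable not free in $\M$, so we apply the induction hypothesis to the premise and keep the $(w)$ in the trailing block. If it is $(m)$, we apply the induction hypothesis to the premise and then split the contracted variables: those not free in $\M$ stay in the trailing block, while those free in $\M$ are, by the induction hypothesis, already modally typed in the context promoted by the $(sp)$ (Lemma~\ref{lem:linearVarnot} is used here to control the types), so the corresponding contraction can be permuted above the $(sp)$. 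This establishes point~3. Point~4 then follows: in an all-modal context $!\Gamma$ no trailing $(w)$ can occur, since $(w)$ would introduce a linearly typed assumption, and the trailing $(m)$ rules, acting only on non-occurring variables, can likewise be permuted above the $(sp)$, leaving a derivation ending exactly in $(sp)$.

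I expect the main obstacle to be the $(\forall E)$ case of points~1 and~2 together with the $(m)$ case of points~3 and~4, for two related reasons. First, permuting a $(\forall E)$ forces the type-substitution/cut-elimination step, so the induction must be set up on a measure (derivation size) that strictly decreases under this contraction, rather than on mere structural subderivation. Second, the $(m)$ rule simultaneously rewrites the subject and introduces a modality, so one must carefully separate the contraction of variables genuinely occurring in $\M$ (which must be moved above the promotion) from the vacuous contraction of auxiliary variables (which belongs to the trailing block), and check that each such permutation is itself a legal $\leadsto$-step; this bookkeeping, rather than any individual rule, is the delicate part.
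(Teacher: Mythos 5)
Your overall strategy coincides with the paper's: induction on the derivation with a case analysis on the last rule, commuting the structural rules $(w)$, $(m)$, $(sp)$ downwards past the relevant introduction; your handling of the $(m)$ case in points 3--4 (splitting the contracted variables according to whether they occur in the subject, pushing the genuine contractions above the $(sp)$, and invoking Lemma~\ref{lem:linearVarnot} together with point 3 to settle point 4) is exactly the paper's argument, as is the use of the Free Variable Lemma for the trailing $(w)$'s and the exclusion of $(sp)$ in points 1--2 and of the term-shaping and quantifier rules in points 3--4. The one genuine divergence is the $(\forall E)$ case of points 1--2: the paper dismisses it as ``easier'', implicitly resolving it by a further commutation (sliding the $(\forall I)$ below the $(\forall E)$ after renaming the bound type variable), whereas you contract the adjacent $(\forall I)/(\forall E)$ pair by substituting the witness type throughout the subderivation. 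Your route is arguably the more robust one --- for point 2 a pure commutation is delicate, since $(\forall E)$ may instantiate the type $\sigma$ of the abstracted variable, and no rule applied below the $(\lin I)$ can effect that change in the context --- and you rightly observe that it forces the induction to be on derivation size rather than on immediate subderivations. The price is that a global type substitution is not among the operations (commuting or deleting rules, inserting $(w)$) that define $\leadsto$, so in that one case the $\Pi'$ you build is not literally $\leadsto$-related to $\Pi$ as the statement requires; this is harmless for every later use of the lemma, which only needs a derivation of the same conclusion ending with the prescribed rule and with controlled weight, but it is a deviation from the letter of the statement that you should flag, or repair by weakening the claim in that case to mere existence of $\Pi'$.
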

\begin{proof}\begin{enumerate}
\item
By induction on $\Pi$. If the last rule of $\Pi$ is $(\forall I)$ then the conclusion follows immediately. 
Otherwise consider the case 
$\lambda\y.\M\equiv\lambda\y .\N[\x/\x_1,\cdots, \x/\x_n]$ and
$\Pi$ ends as:
$$
\infer[(m) ]{\Gamma,\x:!\sigma  \der \lambda \y.\N[\x/\x_1,\cdots, \x/\x_n]:\forall \alpha.A }
{\Sigma\pder \Gamma,\x_1:\sigma,\ldots,\x_n:\sigma \der \lambda \y.\N:\forall \alpha.A  }
$$
By induction hypothesis $\Sigma\leadsto \Sigma'$ ending as:
$$
\infer[(\forall I) ]{\Gamma,\x_1:\sigma,\ldots,\x_n:\sigma \der \lambda \y.\N:\forall\alpha.A  }
{\Sigma_1\pder\Gamma,\x_1:\sigma,\ldots,\x_n:\sigma \der \lambda \y.\N:A}
$$
Then, the  desired $\Pi'$ is:
$$
\infer[(\forall I)]{\Gamma,\x:!\sigma  \der \lambda \y.\N[\x/\x_1,\cdots, \x/\x_n]:\forall \alpha.A }{
\infer[(m) ]{\Gamma,\x:!\sigma  \der \lambda \y.\N[\x/\x_1,\cdots, \x/\x_n]:A }
{\Sigma_1\pder \Gamma,\x_1:\sigma,\ldots,\x_n:\sigma \der \lambda \y.\N:A  }}
$$
The cases where $\Pi$ ends either by $(\forall E)$ or $(w)$ rule are easier. 
The other cases are not possible.
\item Similar to the proof of the previous point of this lemma.
\item 
By induction on $\Pi$. In the case the last rule of $\Pi$ is $(sp)$, 
the proof is obvious. The case where the last rule of $\Pi$
is $(w)$ follows directly by induction hypothesis.\\
Consider the case where $\M\equiv \N[\x/\x_{1},...,\x/\x_{n}]$ and the 
last rule is:
$$
\infer[(m)]{ \Delta,\x:!\tau \der \N[\x/\x_{1},...,\x/\x_{n}]:!\sigma}{\Sigma\pder\Delta,\x_{1}:\tau,...,\x_{n}:\tau \der \N:!\sigma}
$$
In the case $\x_1,\ldots,\x_n\notin \FV(\N)$ the conclusion follows 
immediately.
Otherwise, by induction hypothesis $\Sigma\leadsto \Sigma_{1}$, where $\Sigma_{1}$ is 
composed by a subderivation $\Theta$ ending with a rule ($sp$)  proving
$!\Delta_1 \der {\tt N}: !\sigma$, 
followed by a sequence $\delta$ 
of rules ($w$) or ($m$), dealing with variables not 
occurring in $\tt N$. 
Note that for each  $\x_i$ with $1\leq i\leq n$ such that 
$\x_{i}\in \FV(\N)$, necessarily $\x_i:\tau'\in \Delta_1$ and $\tau=!\tau'$.
Let $\Delta_2$ be the context $\Delta_1-\{\x_1:\tau',\ldots,\x_n:\tau'\}$,
then the conclusion follows by the derivation:
$$
\infer[(sp)]{!\Delta_2,\x:!\tau\der \N[\x/\x_1,\cdots,\x/\x_n]:!\sigma}
{\infer[(m)]{\Delta_2,\x:!\tau' \der \N[\x/\x_1,\cdots,\x/\x_n]:\sigma}{\Delta_2, \x_1:\tau',\ldots,\x_n:\tau' \der \N:\sigma}}
$$
followed by a sequence of rules $(w)$ 
recovering the context $\Delta$ from
the context $\Delta_2$.
The other cases are not possible.
\item
By induction on $\Pi$. In the case the last rule of $\Pi$ is $(sp)$, 
the proof is obvious. The only other possible case is when the last rule 
is $(m)$. Consider the case where $\M\equiv \N[\x/\x_{1},...,\x/\x_{n}]$ and $\Pi$ ends as follows:
$$
\infer[(m)]{ !\Delta,\x:!\tau \der \N[\x/\x_{1},...,\x/\x_{n}]:!\sigma}{\Sigma\pder!\Delta,\x_{1}:\tau,...,\x_{n}:\tau \der \N:!\sigma}
$$
If $\tau \equiv !\tau'$,
by induction hypothesis $\Sigma \leadsto \Sigma_1$, where $\Sigma_1$ ends as:
$$\infer[(sp)]{!\Delta, \x_{1}: !\tau',...,\x_{n}:!\tau' \der \N:!\sigma}{\Theta\pder \Delta, \x_{1}: \tau',...,\x_{n}:\tau' \der \N:
\sigma}$$
So the desired derivation $\Pi'$ is $\Theta$, 
followed by a rule $(m)$ and a rule $(sp)$.
In the case $\tau$ is linear, by Lemma \ref{lem:linearVarnot}, $\x_{i}\not \in \FV(\N)$ 
for each $1\leq i \leq n$.
Moreover  by the previous point of this lemma, $\Sigma$ 
can be rewritten as:
$$\infer[(sp)]{!\Delta_1 \der \N:!\sigma}{\Sigma_1\pder \Delta_1 \der \N:\sigma}$$
followed by a sequence $\delta$ of rules, all dealing with variables not 
occurring in $\N$. So $\delta$ needs to contain some rules introducing the 
variables $\x_{1},...,\x_{n}$. Let $\delta'$ be the sequence of rules
obtained from $\delta$ by erasing such rules, and inserting a $(w)$ rule
 introducing the variable $\x$.
The desired derivation $\Pi'$ is $\Sigma_1$ followed by $\delta'$, followed by ($sp$).
\end{enumerate}
\end{proof}
\subsection{Subject reduction}
In order to prove subject reduction, we need to prove before that the system enjoys the property of
substitution. 
This last property cannot be proved in a standard way, since the linearity of the axioms and the fact that
the rule $(m)$ renames some variables both in the subject and in the context. 
So, in order to prove that $\A,\x:\mu \der \M:\sigma$ and $\B \der \N:\mu$ 
($\A\#\B$) implies
$\A,\B \der \M[\N/\x]:\sigma$, we need to consider 
all the axioms introducing variables which will be renamed as $\x$ in the derivation itself. We need to replace each of them
by a disjoint copy of the derivation proving $\B \der \N:\mu$, and finally to apply a suitable numbers of $(m)$ rules.
In order to formalize this procedure we need to introduce the notion of height of a variable in a derivation.
%We would now prove that $\BSTA$ enjoys the subject reduction 
%property. The substitution lemma will be the key 
%property in order to prove it.
%For technical reasons we need the notion of height of a variable in a 
%derivation. 
\begin{definition}\label{def:height}
%\label{BSTAfather}
Let $\Pi\pder \Gamma, \x:\tau \der \M:\sigma$.
The \emph{height} of $\x$ in $\Pi$ is 
inductively defined as follows:
\begin{itemize}
\item if the last rule of $\Pi$ is:
 $$\infer[(Ax)]{\x:A \der \x:A}{}\quad  \textrm{or}\quad  \infer[(w)]{\Gamma', \x:A \der \N:\sigma}{\Gamma'\der \N:\sigma}
$$
then the height of $\x$ in $\Pi$ is $0$.
\item if the last rule of $\Pi$ is:
 $$\infer[(m)]{\Gamma', \x:!\tau \der \N[\x/\x_{1},...,\x/\x_{k}]:\sigma}{\Sigma\pder \Gamma', \x_{1}:\tau,\ldots
,\x_{k}:\tau \der \N:\sigma}$$ 
then 
the height of $\x$ in $\Pi$ is the maximum between the heights of $\x_i$ in 
$\Sigma$  for $1\leq i\leq k$ plus one.
\item If $\x:\tau\in\Gamma$ and  the last rule of $\Pi$ 
is$$
\infer{\Gamma \der \lif{{\tt M}}{{\tt N_{\0}}}{{\tt N_{\1}}}:A}{\Sigma\pder \Gamma\der {\tt M}:\bt & \Theta_{\0}\pder \Gamma\der {\tt N_{\0}}:A  &\Theta_{\1}\pder \Gamma\der {\tt N_{\1}}:A}$$
Then 
the height of $\x$ in $\Pi$ 
is the maximum  between the heights of  $\x$ in $\Sigma,\Theta_{\0}$ and $\Theta_{\1}$
respectively, plus one.
\item In every other case there is only one assumption with subject 
$\x$
both in the context of the conclusion of the rule
and in the context of one of its premises $\Sigma$.
Then the height of $\x$ in $\Pi$ is equal to the height of $\x$ in $\Sigma$
plus one.
\end{itemize}
\end{definition}
We can now prove the substitution lemma.
\begin{lemma}[Substitution lemma]$ $\\
\label{lem:substitution}
Let $\A,\x:\mu \der \M:\sigma$ and $\B \der \N:\mu$ 
such that $\A\#\B$. Then
$$\A,\B \der \M[\N/\x]:\sigma$$
\end{lemma}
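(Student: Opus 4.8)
The plan is to proceed by induction on the height $h$ of $\x$ in the derivation $\Pi\pder \A,\x:\mu \der \M:\sigma$, as given by Definition~\ref{def:height}, reasoning by cases on the last rule of $\Pi$. This measure is tailored exactly to the two rules that make a naive structural induction fail, namely $(m)$ and $(\bt E)$, since in both a single occurrence of $\x$ in the conclusion splits into several independent occurrences upstream, and after substituting for one of them the derivation we must recurse into is no longer a subderivation of $\Pi$.

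For the base case $h=0$, the rule introducing $\x$ is either an axiom $\x:A \der \x:A$, where $\mu=A=\sigma$, $\A=\emptyset$, $\M\equiv\x$, and the claim is just the hypothesis $\B\der\N:\sigma$, or a weakening $(w)$ introducing $\x$, in which case $\x\notin\FV(\M)$, so $\M[\N/\x]\equiv\M$ and we conclude by the Free variable lemma, adding $\B$ by weakening. For the inductive step, every rule that does not act on $\x$ lets $\x$ pass through a single premise, where its height is $h-1$; there I would substitute by the induction hypothesis and re-apply the rule, checking that the disjointness side conditions survive (for $(\lin E)$ the hypothesis $\A\#\B$ already gives $\B$ disjoint from the other branch, since $\x\notin\FV$ of that branch) and, in the $(\forall I)$ case, renaming the bound type variable so that its freshness with respect to $\B$ is guaranteed. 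The rule $(\bt E)$ is exactly where the \emph{additive} design pays off: $\x$ occurs with the same context in all three premises, each of height at most $h-1$, so I substitute the very same $\N$ with the same $\B$ into each branch and re-apply $(\bt E)$; sharing the context across the branches is precisely what the additive rule licenses.

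The main case, and the principal obstacle, is when the last rule is $(m)$ contracting copies of $\x$, i.e. $\mu\equiv\,!\tau$ and
$$
\infer[(m)]{\A,\x:!\tau \der \N'[\x/\x_1,\ldots,\x/\x_k]:\sigma}{\A,\x_1:\tau,\ldots,\x_k:\tau \der \N':\sigma}
$$
with $\M\equiv\N'[\x/\x_1,\ldots,\x/\x_k]$, so that $\M[\N/\x]\equiv\N'[\N/\x_1,\ldots,\N/\x_k]$. Here I would first invoke the Generation lemma (Lemma~\ref{lem:generation}) to rewrite $\B\der\N:!\tau$ as an application of $(sp)$ over a core derivation $\B_0\der\N:\tau$, followed by $(w)/(m)$ steps on variables not occurring in $\N$, so that $\B = \,!\B_0$ up to such extra variables. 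Taking $k$ pairwise disjoint renamed copies $\B_0^{(i)}\der\N^{(i)}:\tau$ of this core, I substitute $\N^{(i)}$ for $\x_i$ successively: each $\x_i$ has height strictly below $h$ in the premise, and, crucially, substituting for one $\x_i$ inserts $\N$'s derivation only at the leaves belonging to $\x_i$ and so does not raise the height of the remaining $\x_j$, so the induction hypothesis applies at every step. This yields $\A,\B_0^{(1)},\ldots,\B_0^{(k)}\der\N'[\N^{(1)}/\x_1,\ldots,\N^{(k)}/\x_k]:\sigma$. Finally I contract the copies: for each $\y:\rho\in\B_0$ the copies $\y^{(1)}:\rho,\ldots,\y^{(k)}:\rho$ are merged by an $(m)$ rule into $\y:\,!\rho$, and since $(m)$ performs the matching renaming in the subject, this simultaneously collapses every $\N^{(i)}$ back to $\N$, producing the subject $\N'[\N/\x_1,\ldots,\N/\x_k]\equiv\M[\N/\x]$ and the context $\A,\,!\B_0$; the extra weakened variables of $\B$ are then restored by $(w)$. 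The symmetric modal situation for $(sp)$, when $\x$ occurs in its context and is therefore forced to have $\mu\equiv\,!\mu'$, is handled by the same Generation-lemma decomposition of $\N$ before recursing into the premise.

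I expect the delicate points to lie entirely in this $(m)$ case: correctly decomposing the derivation of $\N$ through the Generation lemma, keeping the $k$ copies and their contexts disjoint so that $\A\#\B$-style side conditions are never violated, and, most subtly, verifying that the sequential substitutions keep the heights of the not-yet-substituted $\x_j$ strictly below $h$. This last observation is exactly what makes the induction on height go through where a structural induction on $\Pi$ would break down.
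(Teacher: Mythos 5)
Your proposal is correct and follows essentially the same route as the paper: induction on the height of $\x$ in the derivation, trivial base cases for $(Ax)$ and $(w)$, direct re-application for the pass-through and additive $(\bt E)$ cases, and, for the modal cases $(sp)$ and $(m)$, a decomposition of the derivation of $\N$ via Lemma~\ref{lem:generation}.3 into an $(sp)$ core followed by $(w)/(m)$ steps, with fresh disjoint copies substituted one at a time under the induction hypothesis and finally re-contracted by $(m)$ and $(w)$. Your explicit justification that substituting for one $\x_i$ does not raise the heights of the remaining $\x_j$ makes precise the step the paper phrases as ``repeatedly apply induction hypothesis'' after choosing the $\x_i$ of maximal height.
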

\begin{proof}
Let $\Pi$ and $\Sigma$ be the derivations proving respectively $\A,\x:\mu \der \M:\sigma$ and $\B \der \N:\mu$.
By induction on the height of $\x$ in $\Pi$. 
Base cases $(Ax)$ and $(w)$ are trivial. The cases where $\Pi$ ends either by 
$(\lin I),(\forall I), (\forall E)$ or $(\lin E)$ follow 
directly from the induction hypothesis.\\
Let $\Pi$ ends by $(sp)$ rule with premise 
$\Pi'\pder \Gamma', \x:\mu' \der \M:\sigma'$. Then 
by Lemma \ref{lem:generation}.3, $\Sigma\leadsto \Sigma ''$ which is composed 
by 
a subderivation ending with an $(sp)$ rule with premise $\Sigma'\pder \B'\der \N:\mu' $ followed by a sequence of rules $(w)$ and/or $(m)$. By induction 
hypothesis we have a derivation $\Theta'\pder \A',\B'\der \M[\N/\x]:\sigma' $.
By applying the rule $(sp)$ and the sequence of $(w)$ and/or $(m)$ rules we obtain 
$\Theta\pder \A,\B\der \M[\N/\x]:\sigma$.\\
Consider the case $\Pi$ ends by:
$$
\infer[(\bt E)]{\A,\x:\mu \der \lif{{\M_{0} }}{{\M_{1}}}{{\M_{2}}}:A}{\Pi_{0}\pder \A,\x:\mu\der {\M_{0}}:\bt &  \Pi_{1}\pder \A,\x:\mu\der {\M_{1}}:A  &\Pi_{2}\pder \A,\x:\mu\der {\M_{2}}:A}
$$
Then by the induction hypothesis there are derivations
$\Theta_0\pder \Gamma,\Delta \der\M_{0}[\N/\x]:\bt$, 
$\Theta_1\pder \Gamma,\Delta \der\M_{1}[\N/\x]:A$ and 
$\Theta_2\pder \Gamma,\Delta \der\M_{2}[\N/\x]:A$.
%such that 
%$\delta(\Theta_i,r)\leq \delta(\Pi_i,r)+\delta(\Sigma,r)$ 
%for $0\leq i\leq 2$.
By applying a $(\bt E)$ rule we obtain a derivation $\Theta$ with conclusion:
$$\Gamma,\B \der \lif{{\M_{0}[\N/\x] }}{{\M_{1}[\N/\x]}}{{\M_{2}[\N/\x]}}:A$$%.\\ 
Consider the case $\Pi$ ends by: 
$$
\infer[(m)]{\A, {\tt x}:!\mu'\der {\tt M}[{\tt x}/{\tt x}_1,\cdots,{\tt x}/{\tt x}_m ]:\sigma}{\Pi'\pder \A,{\tt x}_1:\mu',\ldots,{\tt x}_m:\mu'\der {\tt M}:\sigma}
$$
By Lemma \ref{lem:generation}.3 $\Sigma\leadsto \Sigma ''$ ending by an 
$(sp)$ rule with premise $\Sigma'\pder \B'\der \N:\mu' $ followed
by a sequence of  rules $(w)$ and/or $(m)$. 
Consider fresh copies of the derivation
$\Sigma'$ i.e. $\Sigma'_{j}\pder \B_{j}'\der \N_{j}:\mu' $
 where $\N_j$ and $\B_j'$ 
 are fresh copies of $\N$ and $\B'$ ($1 \leq j \leq m$).\\
Let $\x_i$ be such that its height is maximal between the heights 
of all $\x_j$ ($1\leq j\leq m$).
By induction hypothesis there is a  derivation: 
$$\Theta_i\pder \A,\x_1:\mu',\ldots,\x_{i-1}:\mu',\x_{i+1}:\mu',\ldots,\x_m:\mu',\B'_i\der \M[\N_i/\x_i]:\sigma $$
%% where the height of $\x_i$ in $\Pi'$ is maximal 
%% between the heights of $\x_j$ in $\Pi'$ for $1 \leq j \leq m$.
%% Since the heights of $\x_j$ in $\Theta_i$ continue to be smaller
%% than the height of $\x$ in $\Pi$ we can repeatedly 
%% apply induction  hypothesis to obtain a derivation 
Then, we can repeatedly 
apply induction  hypothesis to obtain a derivation
$\Theta'\pder \A,\B'_1,\ldots ,\B'_m\der \M[\N_1/\x_1,\cdots,\N_m/\x_m]:\sigma$.
Finally
 by applying  repeatedly  the rules $(m)$ and $(w)$ 
the conclusion follows.
\end{proof}
We can finally prove the main property of this section.
\begin{lemma}[Subject Reduction]\label{lem:subjectReduction}$ $\\
Let $ \A\der \M:\sigma$ and $\M\red \N$. Then,
$\A\der \N:\sigma$.
\end{lemma}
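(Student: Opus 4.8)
The plan is to prove the statement by induction on a derivation $\Pi\pder \Gamma\der \M:\sigma$, with a case analysis on the last rule of $\Pi$ combined with the position at which the contracted redex sits in $\M$. Since $\red$ is the contextual closure of the three redex rules, every step $\M\red\N$ either contracts a redex at the root of $\M$ or takes place strictly inside a proper subterm, and this dichotomy is aligned with the term-forming rules of the system. I would set up the induction so that the structural (non-root) cases are discharged by the induction hypothesis and only the two root contractions require real work.

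First I would dispatch the congruence cases. When the last rule is one of $(w)$, $(sp)$, $(\forall I)$, $(\forall E)$, $(\lin I)$, or when it is $(\lin E)$ or $(\bt E)$ with the redex lying inside one of the premises, the subject is unchanged (or its root constructor is preserved), so I apply the induction hypothesis to the premise carrying the redex and reapply the same rule. The rule $(m)$ needs slightly more care: there the subject has the form $\M'[\x/\x_1,\cdots,\x/\x_n]$, so I first note that a reduction step is stable under renaming of variables, lifting $\M\red\N$ to some $\M'\red\N'$ with $\N\equiv\N'[\x/\x_1,\cdots,\x/\x_n]$; I then apply the induction hypothesis to the premise and reapply $(m)$. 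This is the routine bookkeeping part of the argument.

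The two interesting cases are the root contractions. For $\beta$, the relevant situation is the last rule $(\lin E)$, with conclusion $\Gamma,\Delta\der (\lambda\x.R)Q:A$ obtained from $\Gamma\der \lambda\x.R:\sigma\lin A$, $\Delta\der Q:\sigma$, and $\Gamma\#\Delta$, where the whole term is contracted so that $\N\equiv R[Q/\x]$. Here I invoke the Generation Lemma (Lemma~\ref{lem:generation}.2) to rewrite the derivation of $\lambda\x.R$ into one ending with $(\lin I)$, exposing a subderivation $\Gamma,\x:\sigma\der R:A$; then the Substitution Lemma (Lemma~\ref{lem:substitution}) applied to this together with $\Delta\der Q:\sigma$ (legitimate since $\Gamma\#\Delta$) yields $\Gamma,\Delta\der R[Q/\x]:A$. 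For $\delta$, the relevant situation is the last rule $(\bt E)$ whose guard is literally a constant: if $\M\equiv\lif{\0}{N_0}{N_1}\reddelta N_0$, the second premise already provides $\Gamma\der N_0:A$, and symmetrically for $\1$ and $N_1$. No extra work is needed, because $\0$ and $\1$ are normal, so a root $\delta$-step can only occur under a $(\bt E)$ typing one of the two constants.

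I expect the main obstacle to be the $\beta$ case, precisely because the naive substitution statement fails in this setting: the linearity of the axioms and the variable renaming carried out by $(m)$ are exactly what forced the nonstandard formulation of the Substitution Lemma. The delicate point is therefore to check that the hypotheses of that lemma are met after the Generation Lemma has reshaped the derivation — in particular the disjointness condition $\Gamma\#\Delta$ inherited from $(\lin E)$ and the matching of the (possibly modal) argument type $\sigma$ between the premise for $\lambda\x.R$ and the typing of $Q$. Once these are aligned, the earlier lemmas do the heavy lifting and the remaining cases reduce to mechanical reapplications of the rules.
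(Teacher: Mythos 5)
Your proposal is correct and follows essentially the same route as the paper: induction on the typing derivation, with the root $\beta$-case discharged by Lemma~\ref{lem:generation}.2 followed by the Substitution Lemma~\ref{lem:substitution}, the root $\delta$-case read off directly from the premises of $(\bt E)$, and all remaining cases by the induction hypothesis. Your explicit treatment of the $(m)$ case (lifting the reduction through the variable renaming) is a detail the paper subsumes under ``the others follow directly by induction hypothesis,'' but it is a correct and welcome refinement rather than a divergence.
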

\begin{proof}
By induction on the derivation $\Theta\pder \A\der \M:\sigma$.
%Without loss of generality we can only consider two cases.
Consider the case of a $\reddelta$ reduction. Without
loss of generality we can consider only the case $\Theta$ ends as:
$$
\infer[(\bt E)]{\A \der \lif{{\tt b}}{{\M_{\0}}}{{\M_{\1}}}:A}{\Pi\pder \A\der {\tt b}:\bt &  \Pi_{\0}\pder \A\der {\M_{\0}}:A  &\Pi_{\1}\pder \A\der {\M_{\1}}:A}
$$
where ${\tt b}$ is either $\0$ or $\1$. The others follow directly 
by induction hypothesis. 
If $\tt b\equiv \0$ then  $\lif{{\tt b}}{{\M_{\0}}}{{\M_{\1}}}\reddelta 
\M_{\0}$ and since $\Pi_{\0}\pder \A\der {\M_{\0}}:A$, the conclusion
follows.
Analogously if $\tt b\equiv \1$ then  $\lif{{\tt b}}{{\M_{\0}}}{{\M_{\1}}}\reddelta 
\M_{\1}$ and since $\Pi_{\1}\pder \A\der {\M_{\1}}:A$, the conclusion
follows.\\
Now consider the case of a $\redbeta$ reduction.
Without loss of generality we can consider only the case $\Theta$ ends
as:
$$
\infer[(\lin E)]{\A_1,\A_2 \der (\lambda \x.\M)\N:A}{
\Pi\pder \A_1\der \lambda \x.\M:\sigma\lin A & \Sigma\pder \A_2\der \N:\sigma
}
$$
where $\Gamma= \Gamma_1,\Gamma_2$.
The others follow directly 
by induction hypothesis. 
Clearly $(\lambda \x.\M)\N\redbeta \M[\N/\x]$.
By Lemma \ref{lem:generation}.2 $\Pi\leadsto \Pi_1$ ending
as 
$$
\infer{\A_1\der \lambda \x.\M:\sigma\lin A}{
\Pi_2\pder \A_1,\x:\sigma\der \M:A
}
$$
By the Substitution
Lemma \ref{lem:substitution} since 
$\Pi_2\pder \A_1,\x:\sigma\der \M:A$ and 
$\Sigma\pder \A_2\der \N:\sigma$ we have
$
\A_1,\A_2\der \M[\N/\x]:A
$, hence the conclusion follows.
\end{proof}
It is worth noting that, 
due to the additive rule $(\bt E)$, $\BSTA$ is no more correct
for polynomial time, 
since terms with exponential number of reductions 
can be typed by derivations with a priori fixed degree, 
where the degree is the nesting of $(sp)$ applications.
\begin{example}
\label{example}
Consider for $n\in\mathbb{N}$ terms $\M_n$ of the shape: $$(\lambda {\tt f}.\lambda {\tt z}. {\tt f}^{n}{\tt z})(\lambda \x.\ifz \x \thenz \x \elsez\x)\0$$ 
It is easy to verify that for each $\M_n$ there exist reduction sequences
of length exponential in $n$.
\end{example}
% {\bf M: It is not straightforward.\\ }
% By strong normalization of $\STA$ we have the following.
% \begin{lemma}[Strong Normalization]$ $\\
% \label{lem:StrongNormalization}
%  Let $\Gamma\vdash \M:\sigma$ then $\M$ is strongly normalizing with respect to
%  the reduction relation $\red$.
% \end{lemma}
% Nevertheless due to the additive rule $(\bt E)$, $\BSTA$ is no more correct
% for polynomial time, 
%  since terms with exponential number of reductions 
% can be typed by derivations with a priori fixed degree, i.e. the nesting of $(sp)$ applications.
% \begin{example}
% \label{example}
% Consider for $n\in\mathbb{N}$ terms $\M_n$ of the shape: $$(\lambda {\tt f}.\lambda {\tt z}. {\tt f}^{n}({\tt z}))(\lambda \x.\ifz \x \thenz \x \elsez\x)\0$$ 
% It is easy to verify that for each $\M_n$ there exist reduction sequences
% of length exponential in $n$.
% \end{example}
%%% Local Variables: 
%%% mode: latex
%%% TeX-master: "main"
%%% End: 

\subsection{Strong Normalization}
Strong normalization is proved by a translation, preserving reduction, of $\BSTA$
in a slightly variant of Girard's System F \cite{girard72}.
The variant we consider is showed in Fig. \ref{systemF} and it differs 
from the original system since it has explicit rules for weakening and contraction. It is straightforward to prove that it shares all the properties of the original one,
in particular strong normalization. 
%We here show that $\BSTA$ is strongly normalizing. We prove it 
%by showing (as expected) that every $\BSTA$ typable term $\M$
%can be straightforwardly translated in a corresponding pure lambda term $M^*$ typable 
%in System F \cite{girard72} which is itself well known to be strongly normalizing. In fact, we translate $\BSTA$ terms in a version 
%of System F that differs from the original one because of 
%the presence of explicit weakening and contraction rules.
\begin{definition}
The types of \emph{System F} are defined by the following grammar:
$$
A,B\ ::=\ \alpha\ |\ A\Rightarrow B \ |\ \forall \alpha.A 
$$
where $\alpha$ ranges over a countable set of type variables.
\end{definition}
\begin{table}
\begin{center}
\begin{tabular}{|c|}
\hline
\\
$
\infer[(Ax)]{{\tt x}:A\derf {\tt x}:A}{}
\qquad 
\infer[(w)]{\A,\x:A\derf {\tt M}:B}{\A \derf {\tt M}:B  }
\qquad
\infer[(c)]{\A, {\tt x}:A\der {\tt M}[{\tt x}/{\tt x}_1,{\tt x}/{\tt x}_2 ]:B}{\A,{\tt x}_1:A,{\tt x}_2:A\der {\tt M}:B}
$
\\
\\
$
\qquad
\infer[(\Rightarrow I)]{\A\derf {\tt \lambda x.M}:A\lin B}{\A, {\tt x}:A\derf {\tt M}:B }
\qquad 
\infer[(\Rightarrow E)]{\A,\B\derf {\tt MN}:B}{\A\derf {\tt M}:A\lin B & \B \der {\tt N}:A }
%\qquad 
$
\\
\\
$
\infer[(\forall E)]{\A\der {\tt M}:B[A/\alpha]}
{\A\der {\tt M}:\forall \alpha .B }
\qquad
\infer[(\forall I)]{\A\der {\tt M}:\forall \alpha .A}
{\A\der {\tt M}:A & \alpha\notin \FTV(\A)}

$
\\
\\
\hline
\end{tabular}
\end{center}
\caption{ System F with explicit contraction and weakening rules}
\label{systemF}
\end{table}
We firstly define a forgetful map over types and terms.
\begin{definition}The map $(-)^*$ is defined on types as:
  $$
\begin{array}{c}
(\bt)^*=\forall \alpha. \alpha \Rightarrow\alpha\Rightarrow \alpha
\qquad 
(\alpha)^*=\alpha
\qquad 
(\sigma\lin A)^*= (\sigma)^*\Rightarrow (A)^*
\\[2mm]
(!\sigma)^*=(\sigma)^*
\qquad
(\forall \alpha. A)^*=\forall \alpha.(A)^*
\end{array}
$$
and it is defined on terms as:
$$
\begin{array}{c}
(\0)^*=\lambda \x\y.\x
\qquad 
(\1)^*=\lambda \x\y.\y
\qquad
(\lif{\M}{\M_1}{\M_2})^*= (\M)^*(\M_1)^*(\M_2)^*
\\[2mm]
(\lambda\x.\M)^*=\lambda \x.(\M)^*
\qquad
(\M\N)^*=(\M)^*(\N)^*
\end{array}
$$
\end{definition}
The following lemma assures that the translation well behaves.
\begin{lemma}\label{F-trans}
If $\A \der \M:\sigma$ then $(\A)^*\derf (\M)^*:(\sigma)^*$. 
\end{lemma}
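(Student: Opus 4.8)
The plan is to argue by induction on the derivation $\Pi\pder\A\der\M:\sigma$ in $\BSTA$, after extending the forgetful map pointwise to contexts by setting $(\A)^{*}=\{\x:(\sigma)^{*}\mid \x:\sigma\in\A\}$. Before entering the induction I would record three routine facts about $(-)^{*}$, which are exactly what is needed to push the structural rules through. First, $(-)^{*}$ commutes with type substitution, $(B[A/\alpha])^{*}=(B)^{*}[(A)^{*}/\alpha]$, and with renaming of term variables, $(\M[\x/\y])^{*}=(\M)^{*}[\x/\y]$. Second, since $(!\sigma)^{*}=(\sigma)^{*}$ we have $(!\A)^{*}=(\A)^{*}$, so the map collapses the entire modal structure. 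Third, since $(\bt)^{*}=\forall\alpha.\alpha\Rightarrow\alpha\Rightarrow\alpha$ is closed and no clause of $(-)^{*}$ introduces new free type variables, $\FTV((\sigma)^{*})\subseteq\FTV(\sigma)$, and likewise for contexts.

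Most cases are then immediate translations driven by the induction hypothesis: $(Ax)$ maps to $(Ax)$, $(w)$ to $(w)$, $(\lin I)$ to $(\Rightarrow I)$, and $(\lin E)$ to $(\Rightarrow E)$, where the side condition $\A\#\B$ guarantees that $(\A)^{*}$ and $(\B)^{*}$ have disjoint domains so their union is a legal context. The boolean constants are handled once and for all by checking that $\lambda\x\y.\x$ and $\lambda\x\y.\y$ receive the Church-boolean type $\forall\alpha.\alpha\Rightarrow\alpha\Rightarrow\alpha$ in System F, via $(Ax)$, a weakening, two applications of $(\Rightarrow I)$, and a final $(\forall I)$. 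The promotion rule $(sp)$ becomes a no-op: because $(!\A)^{*}=(\A)^{*}$ and $(!\sigma)^{*}=(\sigma)^{*}$, the image of its conclusion is syntactically the image of its premise, already supplied by the induction hypothesis. The multiplexor rule $(m)$ is where contraction first enters: the premise yields $(\A)^{*},\x_{1}:(\sigma)^{*},\dots,\x_{n}:(\sigma)^{*}\derf(\M)^{*}:(\tau)^{*}$, and since $(\M[\x/\x_{1},\dots,\x/\x_{n}])^{*}=(\M)^{*}[\x/\x_{1},\dots,\x/\x_{n}]$, it suffices to apply the explicit contraction rule $(c)$ of System F $n-1$ times to merge the $\x_{i}$ into one variable $\x$. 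The universal rules are equally direct: $(\forall E)$ uses the type-substitution commutation fact, and $(\forall I)$ uses $\FTV((\A)^{*})\subseteq\FTV(\A)$ to inherit the eigenvariable condition.

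The only genuinely interesting case, and the one I expect to be the main obstacle, is $(\bt E)$, precisely because its contexts are additive. Under the map the conditional becomes an application, $(\lif{\M}{\N_{\0}}{\N_{\1}})^{*}=(\M)^{*}(\N_{\0})^{*}(\N_{\1})^{*}$, and the induction hypothesis gives three derivations all sharing the same context $(\A)^{*}$: one with $(\M)^{*}:\forall\alpha.\alpha\Rightarrow\alpha\Rightarrow\alpha$ and two with $(\N_{\0})^{*},(\N_{\1})^{*}:(A)^{*}$. Instantiating the quantifier with $(A)^{*}$ via $(\forall E)$ turns the first into type $(A)^{*}\Rightarrow(A)^{*}\Rightarrow(A)^{*}$, and two applications of $(\Rightarrow E)$ then build the term at type $(A)^{*}$. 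The delicate point is that $(\Rightarrow E)$ juxtaposes \emph{disjoint} contexts, whereas here the three premises must share $(\A)^{*}$. I would therefore first rename the free term variables of the three derivations apart (a legitimate operation, since System F derivations are closed under renaming), obtaining the application on the disjoint union of three renamed copies of $(\A)^{*}$, and then apply the contraction rule $(c)$ repeatedly to identify, for each variable of $\A$, its three copies back into one. The commutation of $(-)^{*}$ with renaming guarantees that undoing these renamings yields exactly $(\M)^{*}(\N_{\0})^{*}(\N_{\1})^{*}$. This is the concrete reason the target calculus is chosen to be System F equipped with explicit weakening and contraction: the free contraction hidden inside the additive rule $(\bt E)$ is simulated by genuine applications of $(c)$.
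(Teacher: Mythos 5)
Your proof is correct and follows essentially the same route as the paper's: induction on the $\BSTA$ derivation, with the booleans translated to Church booleans of type $\forall\alpha.\alpha\Rightarrow\alpha\Rightarrow\alpha$, the rule $(sp)$ collapsing to a no-op since $(!\sigma)^{*}=(\sigma)^{*}$, and the additive rule $(\bt E)$ simulated by a $(\forall E)$ instantiation followed by two applications. If anything, you are more explicit than the paper about reconciling the shared context of $(\bt E)$ with the multiplicative $(\Rightarrow E)$ rule --- renaming the three copies of $(\A)^{*}$ apart and then merging them with the explicit contraction rule $(c)$ --- a bookkeeping step the paper's displayed derivation silently elides.
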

\begin{proof}
  By induction on the derivation $\Pi$ proving $\A \der \M:\sigma$.\\
Let us consider base cases. The $(Ax)$ case is trivial.
Consider the case $\Pi$ consists in the rule
$$
\infer[(\bt_{\0}I)]{\der \0:\bt}{}
$$
Then we have the following derivation
$$
\infer[(\forall I)]{\derf \lambda \x\y.\x:\forall \alpha.\alpha\Rightarrow\alpha\Rightarrow \alpha}{
\infer[(\Rightarrow I)]{\derf \lambda \x\y.\x:\alpha\Rightarrow\alpha\Rightarrow \alpha}{
\infer[(\Rightarrow I)]{\x:\alpha\derf \lambda \y.\x:\alpha\Rightarrow \alpha}{
\infer[(w)]{\y:\alpha,\x:\alpha\derf \x:\alpha
}{
\infer[(Ax)]{\x:\alpha\derf \x:\alpha}{
}
}
}
}
}
$$
The case $\Pi$ consists in the $(\bt_\1I)$ rule is similar. 
The case $\Pi$ ends by $(sp)$ rule follows directly by induction hypothesis. 
The cases where $\Pi$ ends either by a $(\lin I), (\lin E)$ or $(w)$ rules follow by induction hypothesis and an application of the same rule in System F. In the case $\Pi$ ends as
$$
\infer[(\bt E)]{\Gamma \der \lif{{\tt M}}{{\tt N_{\0}}}{{\tt N_{\1}}}:A}{\Gamma\der {\tt M}:\bt & \Gamma\der {\tt N_{\0}}:A  &\Gamma\der {\tt N_{\1}}:A}
$$
we have a derivation ending as:
$$
\infer{(\A)^*\derf (\M)^*(\N_0)^*(\N_1)^*: (A)^*}{
\infer{(\A)^*\derf (\M)^*(\N_0)^*:(A)^*\Rightarrow (A)^*}{
\infer{(\A)^*\derf ({\tt M})^*:(A)^*\Rightarrow (A)^*\Rightarrow (A)^*}{(\A)^*\derf ({\tt M})^*:(\bt)^*=\forall \alpha.\alpha\Rightarrow \alpha\Rightarrow \alpha }
&
(\A)^*\derf (\N_0)^*:(A)^*}
&
(\A)^*\derf (\N_1)^*:(A)^*}
$$
\end{proof}
Moreover, the translation preserves the reduction.
\begin{lemma}[Simulation]\label{simulation}
The following diagrams commutes
$$
\begin{array}{ccc}
\M\ &\ \red\ & \N\ \\[2mm]
\downarrow * & &\downarrow *\\[2mm]
(\M)^{*} & \redbeta^+ & (\N)^*
\end{array}
$$
\end{lemma}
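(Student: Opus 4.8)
The plan is to proceed by induction on the derivation of $\M \red \N$, exploiting that $\red$ is the contextual closure of the three root rules (the $\beta$-rule and the two $\delta$-rules for the conditional). Before that, I would prove a substitution sublemma for the translation, namely that $(\M[\N/\x])^* \equiv (\M)^*[(\N)^*/\x]$, by a straightforward induction on the structure of $\M$ using the compositional definition of $(-)^*$. This identity is precisely what links the contraction of a $\beta$-redex on the source side with the corresponding substitution on the System F side.

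For the base cases (root reductions) I would compute the translation of each redex directly. For $(\lambda\x.\M)\N\redbeta \M[\N/\x]$ we have $((\lambda\x.\M)\N)^*\equiv(\lambda\x.(\M)^*)(\N)^*$, which performs a single $\beta$-step to $(\M)^*[(\N)^*/\x]$, equal to $(\M[\N/\x])^*$ by the sublemma. For $\lif{\0}{\M}{\N}\reddelta \M$ the translation is $(\lambda\x\y.\x)(\M)^*(\N)^*$; two $\beta$-steps reduce this to $(\M)^*$, discarding $(\N)^*$. Symmetrically, $\lif{\1}{\M}{\N}\reddelta \N$ translates to $(\lambda\x\y.\y)(\M)^*(\N)^*$, which reduces in two $\beta$-steps to $(\N)^*$. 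In every base case at least one $\beta$-step is consumed, and this is exactly what guarantees the strictly positive closure $\redbeta^+$ in the conclusion rather than merely $\redbeta^*$.

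The inductive cases handle the contextual closure, and here the crucial observation is that $(-)^*$ is a homomorphism with respect to the term constructors: $(\lambda\x.\M)^*\equiv\lambda\x.(\M)^*$, $(\M\N)^*\equiv(\M)^*(\N)^*$, and $(\lif{\M}{\M_1}{\M_2})^*\equiv(\M)^*(\M_1)^*(\M_2)^*$. Hence, when the redex is contracted inside a subterm, the induction hypothesis supplies a $\redbeta^+$ reduction on the translation of that subterm, and since $\redbeta$ in System F is a congruence closed under abstraction and under application in every argument position (including the three positions arising from the translated conditional), this lifts to a $\redbeta^+$ reduction of the whole translated term. I expect no serious obstacle; the only points demanding care are the $\delta$-cases, where one must check that the Church-boolean encodings genuinely erase the unused branch, relying on the freshness of the bound variables $\x$ and $\y$ in $(\0)^*$ and $(\1)^*$ so that the discarded translated branch does not occur free after the two substitutions.
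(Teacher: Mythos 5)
Your proposal is correct and follows essentially the same route as the paper's proof: translate the contracted redex, observe that the $\beta$-case needs one step (via the substitution identity $(\M[\N/\x])^*\equiv(\M)^*[(\N)^*/\x]$) and each $\delta$-case needs two $\beta$-steps on the Church-boolean encoding, then lift through the surrounding context using the homomorphic nature of $(-)^*$. The only difference is presentational: the paper phrases the lifting via an explicit term context ${\tt R}[\cdot]$ and dismisses the $\beta$-case as trivial, whereas you spell out the substitution sublemma and the induction on the contextual closure, which is a slightly more careful rendering of the same argument.
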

\begin{proof}
The case of a $\beta$-reduction is trivial, so consider a 
$\delta$-reduction as:
$$
\M={\tt R}[\lif{\0}{{\tt P}}{{\tt Q}}]\reddelta {\tt R}[{\tt P}] =\N
$$ 
the other case is analogous.
By definition of the map $(\ )^{*}$ we have:
$$(\M)^{*}=({\tt R}[\lif{\0}{{\tt P}}{{\tt Q}}])^*={\tt R}'[(\0)^{*}({\tt P})^{*}({\tt Q})^{*}]
={\tt R}'[(\lambda \x.\lambda \y.\x)({\tt P})^{*}({\tt Q})^{*}]$$
 and 
clearly:
$$
{\tt R}'[(\lambda \x.\lambda \y.\x)({\tt P})^{*}({\tt Q})^{*}]
\redbeta
{\tt R}'[(\lambda \y.({\tt P})^{*})({\tt Q})^{*}]
\redbeta
{\tt R}'[({\tt P})^{*}]=(\N)^*
$$
and so the conclusion. 
\end{proof}
Now, we have the following.
\begin{theorem}[Strong Normalization]$ $\\
\label{lem:StrongNormalization}
If $\Gamma\vdash \M:\sigma$ then $\M$ is strongly normalizing with respect to the  relation $\red$.
\end{theorem}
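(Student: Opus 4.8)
The plan is to transport strong normalization from System F back to $\BSTA$ along the forgetful translation $(-)^*$, using the two lemmas just established. The crucial feature is that Lemma~\ref{simulation} sends each single $\red$-step to a \emph{non-empty} sequence of $\beta$-reductions (note the $+$ in $\redbeta^+$), so that no reduction in $\BSTA$ is ever collapsed to nothing by the translation.

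First I would argue by contradiction. Suppose $\Gamma\der \M:\sigma$ but $\M$ is not strongly normalizing, so there is an infinite reduction sequence
$$
\M = \M_0 \red \M_1 \red \M_2 \red \cdots
$$
Applying the Simulation Lemma~\ref{simulation} to each step, I obtain for every $i$ a non-empty reduction $(\M_i)^* \redbeta^+ (\M_{i+1})^*$. Concatenating these yields
$$
(\M)^* = (\M_0)^* \redbeta^+ (\M_1)^* \redbeta^+ (\M_2)^* \redbeta^+ \cdots
$$
which is an infinite $\beta$-reduction sequence precisely because each $\redbeta^+$ contributes at least one genuine step.

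On the other hand, Lemma~\ref{F-trans} gives $(\Gamma)^* \derf (\M)^* : (\sigma)^*$, so $(\M)^*$ is a typable term of the variant of System F of Figure~\ref{systemF}. Since this variant shares strong normalization with the original system, $(\M)^*$ cannot admit an infinite $\beta$-reduction. This contradicts the sequence just constructed, and therefore $\M$ is strongly normalizing with respect to $\red$.

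The only delicate point, and the sole place where the argument could break, is the strict positivity of the simulation: if some $\delta$-step translated to zero $\beta$-steps, an infinite $\BSTA$ reduction could project to a stationary System F reduction and the pull-back would fail. This is exactly why Lemma~\ref{simulation} is phrased with $\redbeta^+$ rather than $\redbeta^*$; the $\delta$-rule unfolds through the Church-boolean encoding $(\0)^*=\lambda\x\y.\x$ by two genuine $\beta$-redexes, guaranteeing at least one step on the System F side. Everything else in the proof is routine bookkeeping, with the two lemmas doing all the real work.
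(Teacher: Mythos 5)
Your proof is correct and follows exactly the paper's argument: the paper's proof of Theorem~\ref{lem:StrongNormalization} is precisely ``by Lemmas~\ref{F-trans} and~\ref{simulation} and the strong normalization of System F,'' and you have simply spelled out the contradiction argument, correctly identifying the strictness of $\redbeta^+$ in the Simulation Lemma as the point that makes the pull-back of infinite sequences work.
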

\begin{proof}
By Lemmas \ref{F-trans} and \ref{simulation} and the strong normalization of System F.
\end{proof}

%%% Local Variables: 
%%% mode: latex
%%% TeX-master: "main"
%%% End: 

\section{Structural Operational Semantics}
\label{strategy}
In this section the operational semantics of terms of $\Lambda_{\mathcal{B}}$ is presented,
through an evaluation machine, named $\mathrm{K}_{\mathcal{B}}^{\mathcal{C}}$, defined in SOS style \cite{Plotkin81,Kahn87}.
The machine $\mathrm{K}_{\mathcal{B}}^{\mathcal{C}}$ is related to the type assignment system $\BSTA$ since
it evaluates programs (i.e., closed terms of boolean type). The machine allows us to measure the space used during the evaluation. In order to justify our space measure,
a small step version of $\mathrm{K}_{\mathcal{B}}^{\mathcal{C}}$ is used.
\subsection{The evaluation machine $\mathrm{K}_{\mathcal{B}}^{\mathcal{C}}$}
The  machine $\mathrm{K}_{\mathcal{B}}^{\mathcal{C}}$ evaluates programs according to the leftmost outermost 
strategy.
If restricted to $\lambda$-calculus,
the  machine $\mathrm{K}_{\mathcal{B}}^{\mathcal{C}}$ is quite similar to the Krivine machine \cite{Krivine07}, 
since $\beta$-reduction
is not an elementary step, but the substitution of a term to a variable is performed one occurrence at a time.
The machine $\mathrm{K}_{\mathcal{B}}^{\mathcal{C}}$ uses two memory devices, the m-context and the $\bt$-context, that 
memorize respectively the assignments to variables and the control flow.

\begin{definition} \
\begin{itemize}
\item An \emph{m-context} $\mathcal{A}$ is a sequence of variable assignments of the 
shape $\x:=\M$ where $\M$ is a term and all the variables are distinct. The symbol $\varepsilon$ denotes the empty m-context and
the set of m-contexts is denoted by $\mathrm{Ctx_m}$.\\
The \emph{cardinality} of an m-context $\mathcal{A}$, denoted by $\#(\mathcal{A})$,
is the number of variable assignments in $\mathcal{A}$. The \emph{size} of an m-context $\mathcal{A}$, denoted by $|\mathcal{A}|$, is the 
sum of the size of each variable assignment in $\mathcal{A}$, where
a variable assignment $\x:=\M$ has size $|\M|+1$.
%, and $|\M|$ is the number of symbols of $\M$. 
\item Let $\circ$ be a distinguished symbol. The set $\mathrm{Ctx}_{\bt}$ of \emph{$\bt$-contexts} is defined by 
the following grammar:
$$
\mathcal{C}[\circ]::=\circ \ |\ (\lif{\mathcal{C}[\circ]}{{\tt M}}{{\tt N}}){\tt V}_1\cdots {\tt V}_n
$$ 
The size of a $\bt$-context $\mathcal{C}[\circ]$, denoted by $|\mathcal{C}[\circ]|$, 
is the size of the term obtained by replacing the symbol $\circ$ by a variable.\\
The cardinality
of a $\bt$-context $\mathcal{C}[\circ]$, denoted by $\#(\mathcal{C}[\circ])$, is the number of nested $\bt$-contexts in it. 
i.e.: 
$$\#(\circ)=0\qquad \#((\lif{\mathcal{C}[\circ]}{{\tt M}}{{\tt N}}){\tt V}_1\cdots {\tt V}_n)=\#(\mathcal{C}[\circ])+1$$
\end{itemize} 
\end{definition}
It is worth noticing that a $\bt$-contexts  $\mathcal{C}[\circ]$ 
can be seen as a stack of atomic contexts where 
its cardinality $\#(\mathcal{C}[\circ])$ is the height of such a stack.
\
\begin{notation}
%The set of m-contexts is denoted by $\mathrm{Ctx_m}$.
%The symbol $\varepsilon$ denotes the empty m-context and 
The notation $\mathcal{A}_1@\mathcal{A}_2$ is used for the concatenation of the disjoint m-contexts $\mathcal{A}_1$ and 
$\mathcal{A}_2$. Moreover, $[\x:=\M]\in\mathcal{A}$ denotes the fact
that $\x:=\M$ is in the m-context $\mathcal{A}$. 
The notation $\FV(\mathcal{A})$ identifies the set:
$\bigcup_{[\x:=\M]\in\mathcal{A}}\FV(\M)$.\\
As usual, $\mathcal{C}[\M]$ denotes the term obtained by filling the hole $[\circ]$
in $\mathcal{C}[\circ]$ by $\M$. In general we omit the hole $[\circ]$ and we range 
over $\bt$-contexts by $\mathcal{C}$. As expected, $\FV(\mathcal{C})$ denotes the set $\FV(\mathcal{C}[\M])$ for every 
closed term $\M$.
\end{notation}
%JY : mathcal missing

Note that variable assignments in m-contexts are ordered; this fact allows us to define 
the following closure operation. 
\begin{definition}
Let $\mathcal{A}=\{\x_1:=\N_1,\ldots, \x_n:=\N_n\}$ be an m-context. 
Then, $(-)^{\mathcal{A}}:\Lambda_{\mathcal{B}} \to\Lambda_{\mathcal{B}}$  is 
the map associating
to each term $\M$
the term  $(\M)^{\mathcal{A}}\equiv\M[\N_n/\x_n][\N_{n-1}/\x_{n-1}]\cdots [\N_1/\x_1]$.
\end{definition}
%\subsection{The machine $\mathrm{K}_{\mathcal{B}}^{\mathcal{C}}$}
The correct inputs for the machine are programs, defined as follows. 
  \begin{definition}
The set $\prog$ of \emph{programs} is the set of closed terms typable
by the ground type. i.e. $\prog=\{ \M \mid\ \vdash \M:\bt\}$.
\end{definition}

The design of the evaluation machine follows the syntactic shape of programs.

\begin{remark}
It is easy to check that every term has the following shape:
$\lambda \x_{1}...\x_{n}.\zeta {\tt V}_1\cdots{\tt V}_m$, for some $n,m \geq 0$, where $\zeta$ is 
either a boolean $b$, a variable $\x$,  a redex $(\lambda\x.\N){\tt P}$,
or a subterm of the shape $\lif{{\tt P}}{\N_{\0}}{\N_{\1}}$.
It is immediate to check that, if a term is in $\prog$, then $n=0$.
Moreover, if a term in $\prog$ is a normal form, then it coincides with a boolean constant $\tt b$. 
\end{remark}

The evaluation machine
$\mathrm{K}_{\mathcal{B}}^{\mathcal{C}}$ proves statements of the shape:
$$\mathcal{C},\mathcal{A}\models {\M}\Downarrow {\tt b} $$ where $\mathcal{C},\mathcal{A}$
are a $\bt$-context and a m-context respectively, $\M$ is a term, and $\tt b$ is a boolean value.
Its rules are listed in Table \ref{BigStepB}. 
They need some comments, we describes the rules bottom-up. 
The  $(Ax)$ rule is obvious. The $(\beta)$ rule applies when the head of the subject is a $\beta $-redex, then
the association between the bound variable and the argument is remembered in the m-context and 
the body of the term in functional position is evaluated. Note that an $\alpha$-rule is always performed.
The  $(h)$ rule replaces the head occurrence of the head variable by the term associated with it in the m-context.
Rules $(\ifz\0)$ and $(\ifz\1)$ perform the $\delta$ reductions. 
In order to evaluate the test $\M$, a part of the subject is naturally
erased. 
 This  erased information is stored in the $\bt$-context, indeed 
$\bt$-contexts are stacks that permits to store all the branches 
of a computation produced by conditionals.
When the evaluation of the test $\M$ of the current conditional 
is completed, the machine pops the top $\bt$-context and
continues by evaluating the term in the right branch of the computation.
\begin{table*}
\begin{center}
\begin{tabular}{|c|}
\hline 
 \\
 \infer[(Ax)]{\mathcal{C},\mathcal{A}\models {\tt b}\Downarrow {\tt b}}{}
\\
\\
%% \infer[(\beta_w)]{\mathcal{C},\mathcal{A}\models (\lambda \x. \M)\N{\tt V}_1\cdots {\tt V}_m\Downarrow {\tt b}}{\x\notin \FV(\mathcal{\M}) & \mathcal{C},\mathcal{A}\models \M {\tt V}_1\cdots {\tt V}_m\Downarrow {\tt b} }
%% \\
%% \\
\infer[(\beta)^\S]{\mathcal{C},\mathcal{A}\models (\lambda \x. \M)\N{\tt V}_1\cdots {\tt V}_m\Downarrow {\tt b}}
{%\x\in\FV(\M)& 
\mathcal{C},\mathcal{A}@\{\x':=\N\}\models \M[\x'/\x] {\tt V}_1\cdots {\tt V}_m\Downarrow {\tt b} }
\\
\\
\infer[(h)]{\mathcal{C},\mathcal{A}\models \x{\tt V}_1\cdots {\tt V}_m\Downarrow {\tt b}}{
%x\in \FV(\mathcal{C})\cup\FV(\mathcal{A})\cup\FV(\vec{{\tt V}}) & 
\{\x:=\N\}\in \mathcal{A} &
\mathcal{C},\mathcal{A}\models \N {\tt V}_1\cdots {\tt V}_m\Downarrow {\tt b} }
%% \\
%% \\
%% \infer[(h_1)]{\mathcal{C},\mathcal{A}\cup \{\x:=\N\} \models \x{\tt V}_1\cdots {\tt V}_m\Downarrow {\tt b}}{x\notin \FV(\mathcal{C})\cup\FV(\mathcal{A})\cup\FV(\vec{{\tt V}}) & \mathcal{C},\mathcal{A}\models \N {\tt V}_1\cdots {\tt V}_m\Downarrow {\tt b} }
\\
\\
\ \infer[(\ifz\0)]{\mathcal{C},\mathcal{A}\models(\lif{\M}{\N_{\0}}{\N_{\1}}){\tt V}_1\cdots {\tt V}_m \Downarrow \tt b}
{\mathcal{C}[(\lif{[\circ]}{\N_{\0}}{\N_{\1}}){\tt V}_1\cdots {\tt V}_m],\mathcal{A}\models \M \Downarrow \tt \0
& 
\mathcal{C},\mathcal{A}\models \N_{\0}{\tt V}_1\cdots{\tt V}_m \Downarrow \tt b
}
\\
\\
\ \infer[(\ifz\1)]{\mathcal{C},\mathcal{A}\models(\lif{\M}{\N_{\0}}{\N_{\1}}){\tt V}_1\cdots {\tt V}_m \Downarrow \tt b}
{\mathcal{C}[(\lif{[\circ]}{\N_{\0}}{\N_{\1}}){\tt V}_1\cdots {\tt V}_m],\mathcal{A}\models \M \Downarrow \tt \1
& 
\mathcal{C},\mathcal{A}\models \N_{\1}{\tt V}_1\cdots{\tt V}_m \Downarrow \tt b
}

\\
\\
(\S) $\x'$ is a fresh variable.
\\
\hline
\end{tabular}
\end{center}
\caption{The Abstract Machine $\mathrm{K}_{\mathcal{B}}^{\mathcal{C}}$}
\label{BigStepB}
\end{table*}
%In order to state formally 
The behaviour of the machine $\mathrm{K}_{\mathcal{B}}^{\mathcal{C}}$ 
is formalized in the following definition.
\begin{definition}\
\label{MachineDefinition}
\begin{enumerate}
\item The \emph{evaluation relation} $\ev\subseteq \mathrm{Ctx}_{\bt}\times\mathrm{Ctx_m}\times 
\Lambda_{\mathcal{B}} \times \mathcal{B}$  
is the relation inductively defined by the rules of $\mathrm{K}_{\mathcal{B}}^{\mathcal{C}}$. 
If $\M$ is a program, and if there is a boolean $b$ such that $\circ,\varepsilon\models \M\ev b$
then we say that $\M$ \emph{evaluates}, and we write $\M\ev$. 
As usual,
$\models \M\ev b$ is a short for $\circ,\varepsilon\models \M\ev b$.
\item Derivation trees  in the abstract machine are called \emph{computations} and are denoted by $\nabla,\Diamond$.
We use 
$\nabla::\mathcal{C},\mathcal{A}\models \M\ev {\tt b}$ to denote a computation 
with conclusion $\mathcal{C},\mathcal{A}\models \M\ev {\tt b}$.
\item Given a computation $\nabla$  each 
node of $\nabla$, which is of the shape $\mathcal{C},\mathcal{A}\models \M\ev {\tt b}$ is a \emph{configuration}.
The notation $\mathcal{C},\mathcal{A}\models \M\ev {\tt b}\in\nabla$
is used to stress that 
 $\mathcal{C},\mathcal{A}\models \M\ev {\tt b}$ is a configuration in the computation $\nabla$.
Configurations are denoted by $\phi,\psi$. The notation 
$\phi\Yright \mathcal{C},\mathcal{A}\models \M\ev {\tt b}$ means that $\phi$ is
 the configuration
 $\mathcal{C},\mathcal{A}\models \M\ev {\tt b}$.
The
conclusion of the derivation tree is called the \emph{initial configuration}.
%and it will be usually denote $\phi_0$.
\item Given a computation $\nabla$, the \emph{path} to reach a configuration $\phi$
denoted $\pathi_{\nabla}(\phi)$ is 
the sequence of configurations 
between the conclusion of $\nabla$ and $\phi$.
%the subtree obtained pruning $\Pi$
%of all the subtrees not visited wolking directly from the conclusion 
%through $\phi$.
In general,  we simply write $\pathi(\phi)$ when $\nabla$
is clear from the context.
\end{enumerate}
\end{definition}
%\section{Examples}
%\footnotesize
\begin{table*}
\begin{center}
\begin{tabular}{|c|}
\hline \\
\infer{\models (\lambda {\tt f}.\lambda {\tt z}. {\tt f}^2{\tt z})(\lambda \x.\ifz \x \thenz \x \elsez\x)\0 \ev \0}{
  \infer{\mathcal{A}_0\models (\lambda z.{\tt f}_1({\tt f}_1{\tt z}))\0 \ev \0}{
  \infer{\mathcal{A}_1\models {\tt f}_1({\tt f}_1{\tt z}_1) \ev \0}{
    \infer{\mathcal{A}_1 \models (\lambda \x.\ifz \x \thenz \x \elsez\x)({\tt f}_1{\tt z}_1)\ev \0}{
      \infer{\mathcal{A}_2\models \ifz \x_1 \thenz \x_1 \elsez\x_1 \ev \0}{
\infer{\mathcal{C}_0,\mathcal{A}_2\models \x_1 \ev \0  }{
\infer{\mathcal{C}_0,\mathcal{A}_2\models {\tt f}_1{\tt z}_1\ev \0 }{
\infer{\mathcal{C}_0,\mathcal{A}_2\models (\lambda \x.\ifz \x \thenz \x \elsez \x){\tt z}_1\ev \0 }{
  \infer{\mathcal{C}_0,\mathcal{A}_3\models \ifz \x_2 \thenz \x_2 \elsez \x_2\ev \0 }{
\infer{\mathcal{C}_1,\mathcal{A}_3\models \x_2 \ev \0 }{
\infer{\mathcal{C}_1,\mathcal{A}_3\models {\tt z}_1 \ev \0 }{
\infer{\mathcal{C}_1,\mathcal{A}_3\models \0\ev \0 }{
}
}
}
&
\infer{\mathcal{C}_0,\mathcal{A}_3\models \x_2 \ev \0 }{
\infer{\mathcal{C}_0,\mathcal{A}_3\models {\tt z}_1 \ev \0 }{
  \infer{\phi\rhd \mathcal{C}_0,\mathcal{A}_3\models \0\ev \0 }{
}}}
}
}
}
}
&
\infer{\psi\rhd\mathcal{A}_2\models \x_1 \ev \0  }{
\infer{\mathcal{A}_2\models {\tt f}_1{\tt z}_1\ev \0 }{
\infer{\mathcal{A}_2\models (\lambda \x.\ifz \x \thenz \x \elsez \x){\tt z}_1\ev \0 }{
\infer{\mathcal{A}_4\models \ifz \x_3 \thenz \x_3 \elsez \x_3\ev \0 }{
\infer{\mathcal{C}_2,\mathcal{A}_4\models \x_3 \ev \0 }{
\infer{\mathcal{C}_2,\mathcal{A}_4\models {\tt z}_1 \ev \0 }{
\infer{\mathcal{C}_2,\mathcal{A}_4\models \0\ev \0 }{
}
}
}
&
\infer{\mathcal{A}_4\models \x_3 \ev \0 }{
\infer{\mathcal{A}_4\models {\tt z}_1 \ev \0 }{
\infer{\mathcal{A}_4\models \0\ev \0 }{
}}}
}
}
}
}
}
}
}
}
}\\
\\
\hline\\

\begin{tabular}{c}
\begin{tabular}{l}
$\mathcal{A}_0=[{\tt f}_1:=\lambda \x.\ifz \x \thenz \x \elsez\x ]$\\
$\mathcal{A}_1=\mathcal{A}_0@[{\tt z}_1:=\0]$\\
$\mathcal{A}_2=\mathcal{A}_1@[\x_1:={\tt f}_1{\tt z}_1]$\\
$\mathcal{A}_3=\mathcal{A}_2@[\x_2:={\tt z}_1]$\\
$\mathcal{A}_4=\mathcal{A}_2@[\x_3:={\tt z}_1]$
\end{tabular}
\begin{tabular}{l}
$\mathcal{C}_0=\lif{\circ}{\x_1}{\x_1}$\\
$\mathcal{C}_1=\mathcal{C}_0[\lif{\circ}{\x_2}{\x_2}]$\\
$\mathcal{C}_2=\lif{\circ}{\x_3}{\x_3}$\\
$ $\\
$ $
\end{tabular}
\end{tabular}\\
\\
\hline
\end{tabular}
\end{center}
\caption{An example of computation in $\mathrm{K}_{\mathcal{B}}^{\mathcal{C}}$.}
\label{TableCompEx}
\end{table*}

In Table \ref{TableCompEx}
we present an example of $\mathrm{K}_{\mathcal{B}}^{\mathcal{C}}$ 
computation on a term $\M_2$ as defined in Example \ref{example}.\\

In order to prove that the machine is sound and complete with respect to programs, we
need to prove some additional properties. First of all, the next lemma proves that the machine enjoys 
a sort of weakening, with respect to both contexts.
\begin{lemma}
\begin{enumerate}\label{weakcon}
\item Let $\mathcal{C}[\circ], \mathcal{A}\models  \M\ev \tt b$. Then, for every $\mathcal{C'}[\circ]$ such that
$(\mathcal{C'}[\mathcal{C}[\M]])^{\mathcal{A}} \in \prog$, $\mathcal{C'}[\mathcal{C}[\circ]],\mathcal{A} \models\M \ev\tt b$.
\item  Let $\nabla :: \mathcal{C},\mathcal{A}\models \M\ev {\tt b}$ and
$\x$ be a fresh variable. %\notin \FV(\mathcal{A})\cup\FV(\M)$. 
Then, 
 $\nabla :: \mathcal{C},\mathcal{A}@\{\x:=\N \}\models \M\ev {\tt b}$
\end{enumerate}
\end{lemma}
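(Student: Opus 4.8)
Both statements assert that a successful computation survives the enlargement of one of the two memory devices.

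The plan is to prove each point by induction on the structure of the computation $\nabla$, that is, on the derivation of the judgment in $\mathrm{K}_{\mathcal{B}}^{\mathcal{C}}$. The crucial observation, valid for both parts, is that the machine rules never inspect the \emph{whole} of a context in a way that would be disturbed by a larger context: the $(h)$ rule only queries whether a specific binding $[\x:=\N]$ belongs to $\mathcal{A}$, and the $(\ifz\0)/(\ifz\1)$ rules only \emph{push} a new atomic $\bt$-context on top of $\mathcal{C}$. Enlarging $\mathcal{C}$ at the bottom (part~1) or adding a fresh binding to $\mathcal{A}$ (part~2) preserves every such query, so each rule application can be replayed verbatim over the enlarged context.

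For part~2, which I would carry out first as it is the cleaner of the two, I would fix a fresh $\x$ and push the m-context extension $\{\x:=\N\}$ through every configuration of $\nabla$ simultaneously. In the base case $(Ax)$ the conclusion is immediate since the rule ignores $\mathcal{A}$. For $(\beta)$ one appends $\{\x':=\N'\}$ on the right, and since $\x$ is fresh and distinct from the freshly chosen $\x'$, the concatenation $\mathcal{A}@\{\x:=\N\}@\{\x':=\N'\}$ is still a legal (disjoint) m-context and the induction hypothesis applies to the premise. For $(h)$ the side condition $[\y:=\Q]\in\mathcal{A}$ still holds in $\mathcal{A}@\{\x:=\N\}$ because adding bindings only grows membership; then I invoke the induction hypothesis on the premise. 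The $(\ifz b)$ cases are routine: apply the induction hypothesis to both premises, the left one carrying the pushed $\bt$-context unchanged. The only point demanding care is freshness bookkeeping across the nested $(\beta)$ steps, ensuring the newly generated variables never clash with the added $\x$; this is guaranteed by $\alpha$-renaming.

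Part~1 is the genuinely delicate one, and I expect it to be the main obstacle. Here the induction hypothesis must be stated so that it applies to every sub-$\bt$-context generated by the $(\ifz b)$ rules, not merely to the outermost one: the rule replaces $\mathcal{C}$ by $\mathcal{C}[(\lif{[\circ]}{\N_{\0}}{\N_{\1}}){\tt V}_1\cdots {\tt V}_m]$, so the inner premise runs with a \emph{taller} stack, and prefixing $\mathcal{C'}$ at the bottom must commute with this push, i.e.\ $\mathcal{C'}[\mathcal{C}[(\lif{[\circ]}{\cdots}{\cdots})\vec{\V}]] = (\mathcal{C'}[\mathcal{C}])[(\lif{[\circ]}{\cdots}{\cdots})\vec{\V}]$, which follows from the associativity of $\bt$-context filling. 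The second subtlety is the programhood side condition: I must check that whenever $(\mathcal{C'}[\mathcal{C}[\M]])^{\mathcal{A}}\in\prog$, the corresponding closed-and-typed condition also holds for each premise's term under its (pushed) context, so that the induction hypothesis is legitimately applicable; this uses that closing under $\mathcal{A}$ and filling $\bt$-contexts preserve the property of being a program, together with the fact that the machine's well-formedness invariants are stable under these operations. Granting these two bookkeeping facts, the remaining rule cases $(Ax)$, $(\beta)$ and $(h)$ go through directly by the induction hypothesis, since none of them alters $\mathcal{C}$.
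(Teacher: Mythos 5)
Your proposal is correct and follows essentially the same route as the paper, which proves both points by a straightforward induction on the computation (the paper gives no further detail); your case analysis of the rules, the observation that $(h)$ only tests membership of a binding and that the $(\ifz)$ rules only push on top of $\mathcal{C}$ so that bottom-prefixing commutes with the push, is exactly the content of that induction. The extra bookkeeping you flag in part~1 about the programhood side condition is harmless but not actually needed, since no machine rule carries such a side condition.
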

\begin{proof}
Both points can be easily proved by induction on the computation.
\end{proof}

% \begin{lemma}\
% \label{betareduction}
% \begin{enumerate}
% \item Let $\M\in\prog$, $\Pi:: \models \M\ev {\tt b}$ and
% $\mathcal{C},\mathcal{A}\models \N\ev {\tt b}'\in\Pi$.
% Then 
% $(\mathcal{C}[\N])^{\mathcal{A}}, (\N)^{\mathcal{A}}\in\prog$.
% \item Let $ \M\in\prog$ and $\Pi:: \models \M\ev {\tt b}$. For each 
% $\mathcal{C},\mathcal{A}\models \N\ev {\tt b}'\in \Pi$ 
% $$\M \red^{*} (\mathcal{C}[\N])^{\mathcal{A}}
% \red^{*} \tt b
% $$
% \end{enumerate}
% \end{lemma}
\begin{lemma}\
\label{lem:redProperties}
\begin{enumerate}
\item Let $\mathcal{C},\mathcal{A} \models \M \ev b$ and let $(\mathcal{C}[\M])^{\mathcal{A}}\in \prog$.
Then, both $(\M)^{\mathcal{A}}\red^{*} \tt b$ and $(\mathcal{C}[\M])^{\mathcal{A}}\red^{*} \tt b'$,
for some $\tt b'$.
\item Let $ \M\in\prog$ and $\nabla:: \models \M\ev {\tt b}$. For each 
$\phi\Yright\mathcal{C},\mathcal{A}\models \N\ev {\tt b}'\in \nabla$, 
$(\mathcal{C}[\N])^{\mathcal{A}}\in \prog$.
% Marco: I mofdify the following point
%\item $(\mathcal{C}[\M])^{\mathcal{A}} \in \prog$ implies
%$\mathcal{C}, \mathcal{A}\models  \M\ev \tt b$, for some $\tt b$.
\item Let $(\M)^{\mathcal{A}} \in \prog$ and 
$(\M)^{\mathcal{A}}\red^{*} {\tt b}$.
Then,
$\circ, \mathcal{A}\models  \M\ev \tt b$.
%$$(\N)^{\mathcal{A}}
%\red^{*} \tt b'
%$$
%\item Let $ \M\in\prog$ and $\nabla:: \models \M\ev {\tt b}$. For each 
%$\phi\Yright\mathcal{C},\mathcal{A}\models \N\ev {\tt b}'\in \nabla$ 
%$$\M \red^{*} (\mathcal{C}[\N])^{\mathcal{A}}
%\red^{*} \tt b
%$$
%\item Let $\M\in\prog$, $\nabla:: \models \M\ev {\tt b}$ and
%$\phi\Yright \mathcal{C},\mathcal{A}\models \N\ev {\tt b}'\in\nabla$.
%Then 
%$(\mathcal{C}[\N])^{\mathcal{A}}, (\N)^{\mathcal{A}}\in\prog$.
\end{enumerate}
\end{lemma}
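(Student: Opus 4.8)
The plan is to read the three items as, respectively, a \emph{soundness}, an \emph{invariant}, and a \emph{completeness} statement relating the machine $\mathrm{K}_{\mathcal{B}}^{\mathcal{C}}$ to the reduction $\red$, and to prove them in the order 1, 2, 3, since the last two lean on the first. Throughout I would rely on three elementary facts about the closure operation $(-)^{\mathcal{A}}$, each provable by a routine induction on terms under the scoping discipline of the machine (assignments in $\mathcal{A}$ mention only earlier-bound variables, and the variable renamed in a $(\beta)$ step is fresh): (i) closure commutes with application, abstraction and conditional, e.g. $(\M\N)^{\mathcal{A}}=(\M)^{\mathcal{A}}(\N)^{\mathcal{A}}$; (ii) a $(\beta)$ step is mirrored by a genuine reduction, $((\lambda\x.\M)\N{\tt V}_1\cdots{\tt V}_m)^{\mathcal{A}}\redbeta(\M[\x'/\x]{\tt V}_1\cdots{\tt V}_m)^{\mathcal{A}@\{\x':=\N\}}$; and (iii) an $(h)$ step leaves the closure \emph{unchanged}, $(\x{\tt V}_1\cdots{\tt V}_m)^{\mathcal{A}}\equiv(\N{\tt V}_1\cdots{\tt V}_m)^{\mathcal{A}}$ whenever $[\x:=\N]\in\mathcal{A}$.

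For item 1 I would argue by induction on the computation $\nabla$ deriving $\mathcal{C},\mathcal{A}\models\M\ev{\tt b}$. The $(Ax)$ case is immediate since $({\tt b})^{\mathcal{A}}\equiv{\tt b}$; the $(\beta)$ and $(h)$ cases follow from facts (ii) and (iii) and the induction hypothesis. The interesting case is $(\ifz\0)$ (and symmetrically $(\ifz\1)$): the induction hypothesis on the first premise gives $(\M)^{\mathcal{A}}\red^{*}\0$, so by fact (i) and contextuality of $\red$ the closure $((\lif{\M}{\N_{\0}}{\N_{\1}}){\tt V}_1\cdots{\tt V}_m)^{\mathcal{A}}$ reduces — by reducing the guard and then firing the $\reddelta$ rule — to $(\N_{\0}{\tt V}_1\cdots{\tt V}_m)^{\mathcal{A}}$; Subject Reduction (Lemma \ref{lem:subjectReduction}) then certifies this reduct is again in $\prog$, which licenses the induction hypothesis on the second premise and yields $\red^{*}{\tt b}$. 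Thus everything threads together with no appeal to item 2. The second assertion, $(\mathcal{C}[\M])^{\mathcal{A}}\red^{*}{\tt b}'$, I would read off directly from $(\mathcal{C}[\M])^{\mathcal{A}}\in\prog$: by Strong Normalization (Theorem \ref{lem:StrongNormalization}) it has a normal form, which by Subject Reduction is still a closed term of type $\bt$, hence a boolean constant by the normal-form remark for programs.

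Item 2 I would prove by induction on the length of $\pathi_{\nabla}(\phi)$, showing that each machine rule sends a configuration whose reconstruction lies in $\prog$ to premises with the same property; the base case is the hypothesis $\M\in\prog$. For $(h)$ and for the first premise of each $(\ifz)$ rule the reconstructed term is literally unchanged, so there is nothing to do; for $(\beta)$ and for the second premise of the $(\ifz)$ rules the reconstruction is a $\red$-reduct of the parent's (using facts (ii)/(i) and, for the chosen branch, item 1 applied to the already-validated guard premise to learn that the guard closure reduces to the relevant boolean), so membership in $\prog$ is preserved by Subject Reduction.

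Item 3 is the delicate converse and where I expect the real work. I would first generalize it to an arbitrary $\bt$-context — if $(\mathcal{C}[\M])^{\mathcal{A}}\in\prog$ and $(\M)^{\mathcal{A}}\red^{*}{\tt b}$ then $\mathcal{C},\mathcal{A}\models\M\ev{\tt b}$ — so the induction can descend into the first premise of the conditional rules, where the $\bt$-context grows. The induction is on the lexicographic pair given by the maximal $\red$-length of $(\M)^{\mathcal{A}}$ to its boolean normal form and the length of the head-variable unfolding chain of $\M$ in $\mathcal{A}$ (finite by acyclicity of the scoping discipline), and I would case split on the head of $\M$ as classified in the Remark. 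A boolean head forces $\M\equiv{\tt b}$ and $(Ax)$ applies; a redex head strictly decreases the first component by fact (ii), so $(\beta)$ applies with the induction hypothesis; a variable head is resolved by $(h)$, where fact (iii) keeps the first component fixed while the second strictly decreases. The genuinely hard case is a conditional head: the guard closure is closed of type $\bt$, hence reduces to a \emph{determined} boolean $b_0$ — here I would invoke confluence of $\red$ (equivalently, uniqueness of the boolean normal form of a program) to know $b_0$ is well defined and that the matching branch still reduces to ${\tt b}$ — and both the guard subderivation and the chosen-branch subderivation then follow from the induction hypothesis with a strictly smaller first component, after which the corresponding $(\ifz b_0)$ rule assembles the computation. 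The two points to keep honest are therefore the termination measure for the non-reducing $(h)$ rule, handled by the second lexicographic component, and the determinacy of the guard's value in the conditional case, handled by confluence.
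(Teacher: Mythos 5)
Your proposal is correct and follows essentially the same route as the paper: induction on the computation for item 1 (with the second claim read off from strong normalization and typing), induction on the path length for item 2, and induction on the length of the reduction of $(\M)^{\mathcal{A}}$ to its boolean normal form for item 3, using the closure identities for $(\beta)$ and $(h)$ and Lemma \ref{weakcon}.1 to handle the growing $\bt$-context in the conditional case. The only differences are presentational: where you add a second lexicographic component to absorb non-reducing $(h)$ steps, the paper instead fuses each $(h)$ step with the following $(\beta)$ or $(\ifz)$ rule so that the single measure still decreases, and your explicit appeals to Subject Reduction and confluence make precise side conditions the paper leaves implicit.
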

\begin{proof}$ $
\begin{enumerate}
\item 
First of all, the property $(\mathcal{C}[\M])^{\mathcal{A}}\red^{*}\tt b'$, for some $\tt b'$ derives directly from
the fact that 
$(\mathcal{C}[\M])^{\mathcal{A}}\in \prog$.
In fact this implies $(\mathcal{C}[\M])^{\mathcal{A}}$ is a closed strongly normalizing term of type $\bt$, 
and so its normal form is necessarily a boolean constant. 
So in what follows we will prove just that $\mathcal{C},\mathcal{A} \models \M \ev \tt b$ and $(\mathcal{C}[\M])^{\mathcal{A}}\in \prog$
implies $(\M)^{\mathcal{A}}\red^{*} \tt b$.
Note that if $(\mathcal{C}[\M])^{\mathcal{A}}\in \prog$ then clearly 
$(\M)^{\mathcal{A}}\in \prog$.
We proceed by induction on the derivation proving  $\mathcal{C},\mathcal{A} \models \M \ev \tt b$. 
Let the last rule be:  $$\infer[(Ax)]{\mathcal{C},\mathcal{A}\models {\tt b}\Downarrow {\tt b}}{}$$ 
Obviously $(\tt b)^{\mathcal{A}}\red^{*}\tt b$. 
Let the derivation ends as:
$$
\infer[(\beta)]{\mathcal{C},\mathcal{A}\models (\lambda \x. {\tt P})\N{\tt V}_1\cdots {\tt V}_m\Downarrow {\tt b}}
{%\x\in\FV(\M)& 
\mathcal{C},\mathcal{A}@[\x':=\N]\models {\tt P}[\x'/\x] {\tt V}_1\cdots {\tt V}_m\Downarrow {\tt b}}
$$
By induction hypothesis
$({\tt P}[\x'/\x] {\tt V}_1\cdots {\tt V}_m)^{\mathcal{A}@[\x':=\N]}\red^{*} {\tt b}$.
Clearly since $\x'$ is fresh:
%, we have:
$$({\tt P}[\x'/\x]{\tt V}_1\cdots {\tt V}_m )^{\mathcal{A}@[\x':=\N ]}
\equiv
(({\tt P}[\x'/\x]{\tt V}_1\cdots {\tt V}_m )[\N/\x'])^{\mathcal{A}}
%(\M[\x'/\x]{\tt V}_1\cdots {\tt V}_m )^{\mathcal{A}@[\x':=\N ]}
\equiv
({\tt P}[\N/\x]{\tt V}_1\cdots {\tt V}_m )^{\mathcal{A}}
$$
hence:
$$((\lambda \x. {\tt P})\N{\tt V}_1\cdots {\tt V}_m )^{\mathcal{A}}\red ({\tt P}[\N/\x]{\tt V}_1\cdots {\tt V}_m )^{\mathcal{A}}\red^{*} {\tt b}$$
and the conclusion follows. 
The case of a rule  $(h)$ follows directly by induction hypothesis.\\
Let the derivation end as:
$$
\infer[(\ifz\ \0)]{\mathcal{C},\mathcal{A}\models(\lif{{\tt P}}{\N_{\0}}{\N_{\1}}){\tt V}_1\cdots {\tt V}_m \Downarrow \tt b}
{\mathcal{C}',\mathcal{A}\models {\tt P} \Downarrow \tt \0
& 
\mathcal{C},\mathcal{A}\models \N_{\0}{\tt V}_1\cdots{\tt V}_m \Downarrow \tt b
}
$$
where $\mathcal{C}'=\mathcal{C}[(\lif{[\circ]}{\N_{\0}}{\N_{\1}}){\tt V}_1\cdots {\tt V}_m]$.
By induction hypothesis
$({\tt P})^{\mathcal{A}}\red^{*}\0$,
 hence: 
$$((\lif{{\tt P}}{\N_{\0}}{\N_{\1}}){\tt V}_1\cdots {\tt V}_m)^{\mathcal{A}}\red^{*}
((\lif{\0}{\N_{\0}}{\N_{\1}}){\tt V}_1\cdots {\tt V}_m)^{\mathcal{A}}
$$ and by $\delta$ reduction
$$((\lif{\0}{\N_{\0}}{\N_{\1}}){\tt V}_1\cdots {\tt V}_m)^{\mathcal{A}} \reddelta
(\N_{\0}{\tt V}_1\cdots {\tt V}_m)^{\mathcal{A}}
$$
 moreover, since by induction hypothesis we also have 
 $(\N_{\0}{\tt V}_1\cdots{\tt V}_m)^{\mathcal{A}} \red^{*} {\tt b}$, the conclusion follows. 
The case of rule $(\ifz\ \1)$ is analogous.
\item Easy, by induction on the length of $\pathi(\phi)$.
\item The proof is by induction on the number of steps needed to reach
the normal form ${\tt b}$  of $(\M)^{\mathcal{A}}$ 
according to the leftmost strategy. Since $(\M)^{\mathcal{A}}$
is strongly normalizing this is clearly well-founded.\\
If $(\M)^{\mathcal{A}}$ is already in normal form, since it is must be typable of type $\bt$ then $\M \equiv \tt b$, and the result is trivial.
Otherwise $(\M)^{\mathcal{A}}$ cannot be an abstraction, since its typing, so it is an application $\N \Q {\tt V}_{1}...{\tt V}_{m}$.\\
%Consider
%the case 
Suppose
$\N \equiv \lambda \x.\R$. There are two cases, either 
$(\M)^{\mathcal{A}}\equiv ((\lambda \x.\R')\Q' {\tt V}'_{1}...{\tt V}'_{m})^{\mathcal{A}}$
or $(\M)^{\mathcal{A}}\equiv (\y\Q' \V'_{1}...\V'_{m})^{\mathcal{A}}$ and $\{\y:=\lambda \x.\R'\}\in \mathcal{A}$.\\
Let us consider the first case. Then $((\lambda \x.\R')\Q' {\tt V}'_{1}...{\tt V}'_{m})^{\mathcal{A}} \redbeta 
(\R'[\Q'/\x] \V'_{1}...\V'_{m})^{\mathcal{A}} \equiv
(\R'[\x'/\x] \V'_{1}...\V'_{m})^{\mathcal{A}@\{\x':=\Q' \}}$. By induction hypothesis we have
$[\circ], \mathcal{A}@\{\x':=\Q' \}\models \R'[\x'/\x] \V'_{1}...\V'_{m}  \ev \tt b$
and so the result follows by rule $(\beta)$.\\
In the second case, since $\{\y:=\lambda \x.\R'\}\in \mathcal{A}$, then $(\y\Q' \V'_{1}...\V'_{m})^{\mathcal{A}} \redbeta 
(\R'[\Q'/\x] \V'_{1}...\V'_{m})^{\mathcal{A}}\equiv (\R'[\x'/\x] \V'_{1}...\V'_{m})^{\mathcal{A}@\{\x':=\Q' \}} $.
%@\{\x:=\lambda \x.\R'\}} $. 
By induction hypothesis
$[\circ], \mathcal{A}@\{\x':=\Q' \}\models \R'[\x'/\x] \V'_{1}...\V'_{m}\ev \tt b$, so by one application of the rule 
$(\beta)$, $[\circ], \mathcal{A}\models (\lambda \x.\R' ){\tt Q'}\V'_{1}...\V'_{m}\ev \tt b$. Finally, by one application of the rule $(h)$,
since $\{\y:=\lambda \x.\R'\}\in \mathcal{A}$, we have
 $[\circ], \mathcal{A}\models \y\Q' \V'_{1}...\V'_{m}\ev \tt b$.\\
The remaining case is the one where $\N \equiv \lif{\M'}{\N'_{\0}}{\N'_{\1}}$. 
By definition  
$(\M)^{\mathcal{A}}\equiv 
((\lif{\M'}{\N'_{\0}}{\N'_{\1}})\Q' \V'_{1}...\V'_{m})^{\mathcal{A}}\equiv
(\lif{(\M')^{\mathcal{A}}}{(\N'_{\0})^{\mathcal{A}}}{(\N'_{\1})^{\mathcal{A}}})(\Q')^{\mathcal{A}}
 (\V'_{1})^{\mathcal{A}}...(\V'_{m})^{\mathcal{A}}$. Since $(\M)^{\mathcal{A}}\in \prog$, 
$\der (\M)^{\mathcal{A}}:\bt$, so, by the strong normalization property, $(\M)^{\mathcal{A}}\red^{*}\tt b $.
This implies either $(\M')^{\mathcal{A}}=\tt b' $ or $(\M')^{\mathcal{A}}\red^{*} \tt b' $ for some $\tt b'$. Let us consider the latter 
case. The number of reduction steps of the sequence $(\M')^{\mathcal{A}}\red^{*} \tt b' $ is shorter than that one of
$(\M)^{\mathcal{A}}\red^{*} \tt b $, so by induction $[\circ], \mathcal{A} \models \M' \ev \tt b'$, and, by  Lemma \ref{weakcon}.1, $(\lif{[\circ]}{\N'_{\0}}{\N'_{\1}})\Q' \V'_{1}...\V'_{m} , \mathcal{A} \models \M' \ev \tt b'$.
Without loss of generality, we consider only the case where $\tt b' = \0$. Then $(\M)^{\mathcal{A}}\red^{*}\tt b $ implies $(\N'_{\0}\Q' {\tt V}'_{1}...{\tt V}'_{m})^{\mathcal{A}}
\red^{*} \tt b $, so by induction $[\circ], \mathcal{A} \models \N'_{\0}\Q' {\tt V}'_{1}...{\tt V}'_{m} \ev \tt b$, and the result 
follows by rule $(\ifz\0)$. The case $(\M')^{\mathcal{A}}=\tt b' $ is easier.
The case $(\M)^{\mathcal{A}}\equiv 
(\y\Q' \V'_{1}...\V'_{m})^{\mathcal{A}}$, and $(\y:= \lif{\M'}{\N'_{\0}}{\N'_{\1}})$, is similar, but both rules $(h)$ and $(\ifz )$
must be used.
% Let $\mathcal{C}[\circ]\equiv \mathcal{C'}[(\lif{[\circ]}{\N'_{\0}}{\N'_{\1}})\Q' \V'_{1}...\V'_{m}] $, and
% $\mathcal{C}[\M]\equiv \mathcal{C'}[(\lif{\M}{\N'_{\0}}{\N'_{\1}})\Q' \V'_{1}...\V'_{m}] $. 
% Then the leftmost reduction either is in 
% $(\M)^{\mathcal{A}}$
% or $(\M)^{\mathcal{A}}\equiv b$, and the proof is similar to the correspondent case with $\mathcal{C}[\circ] \equiv [\circ]$.
\qed
\end{enumerate}
\end{proof}

Then we can state the soundness and completeness of the evaluation 
machine $\mathrm{K}_{\mathcal{B}}^{\mathcal{C}}$ with respect to the reduction on programs.
\begin{theorem}Let $\M\in \prog\ $. Then:
\begin{enumerate}
\item If $\models\M\ev{\tt b}$ then
$\M\red^* {\tt b}$.  \hspace{6cm} {\rm {(Soundness)}} 
\item If $\M\red^* {\tt b}$ then
$\models\M\ev{\tt b}$. \hspace{5.5cm} {\rm {(Completeness)}}
\end{enumerate}
% \begin{description}
% \item[\rm Soundness] \quad \
% If $\models\M\ev{\tt b}$ then
% $\M\red^* {\tt b}$.
% %$ \M\ev \textrm{ implies } \M\in \prog\ $.
% \item[\rm Completeness] 
% If $\M\red^* {\tt b}$ then
% $\models\M\ev{\tt b}$.
% %$\M\in \prog\ \textrm{ implies }\ \M\ev$.
% \end{description}
\end{theorem}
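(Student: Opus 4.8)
The plan is to obtain both statements as immediate specializations of Lemma~\ref{lem:redProperties}, instantiated at the empty $\bt$-context $\circ$ and the empty m-context $\varepsilon$. The single observation that makes everything collapse is that filling the empty hole and then closing under the empty m-context is the identity on terms, i.e. $(\circ[\M])^{\varepsilon}\equiv\M$. Consequently, for a program $\M\in\prog$ we automatically have $(\circ[\M])^{\varepsilon}\equiv\M\in\prog$, which is exactly the side condition demanded by the two relevant parts of Lemma~\ref{lem:redProperties}. So the real task is only to match the hypotheses at the initial configuration.

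For the soundness direction I would assume $\models\M\ev{\tt b}$, that is $\circ,\varepsilon\models\M\ev{\tt b}$. Since $\M\in\prog$ we have $(\circ[\M])^{\varepsilon}\equiv\M\in\prog$, so Lemma~\ref{lem:redProperties}.1 applies and yields $(\M)^{\varepsilon}\red^{*}{\tt b}$; as $(\M)^{\varepsilon}\equiv\M$, this is precisely $\M\red^{*}{\tt b}$. Note that the lemma returns the very same boolean ${\tt b}$ produced by the machine, so no appeal to confluence or uniqueness of normal forms is needed here.

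For the completeness direction I would assume $\M\red^{*}{\tt b}$. Again $(\M)^{\varepsilon}\equiv\M\in\prog$ and $(\M)^{\varepsilon}\red^{*}{\tt b}$, so the hypotheses of Lemma~\ref{lem:redProperties}.3 are met and it delivers $\circ,\varepsilon\models\M\ev{\tt b}$, i.e. $\models\M\ev{\tt b}$. In particular this also guarantees that the machine does evaluate, and to the given ${\tt b}$.

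The genuine content of the theorem lives entirely in Lemma~\ref{lem:redProperties}, whose parts~1 and~3 are established by induction on the structure of the machine computation and by induction on the length of the leftmost reduction sequence, respectively; strong normalization (Theorem~\ref{lem:StrongNormalization}) is what makes the latter induction well-founded and forces the normal form of a program to be a boolean constant. Given those, the theorem is a corollary and I do not expect any real obstacle: the only point deserving care is the base-case bookkeeping that the empty contexts act trivially (so that $(\circ[\M])^{\varepsilon}\equiv\M$ and the side conditions hold at the initial configuration), which is routine.
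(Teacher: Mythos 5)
Your proposal is correct and follows exactly the paper's own argument: the paper likewise derives soundness directly from Lemma~\ref{lem:redProperties}.1 and completeness from Lemma~\ref{lem:redProperties}.3, instantiated at the empty contexts $\circ$ and $\varepsilon$. Your explicit bookkeeping that $(\circ[\M])^{\varepsilon}\equiv\M$ just spells out what the paper leaves implicit.
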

\begin{proof}$ $
\begin{enumerate}
\item  It follows directly by Lemma \ref{lem:redProperties}.(1).
\item  It follows directly by Lemma \ref{lem:redProperties}.(3). \qed
\end{enumerate}
\end{proof}

\subsection{A small step version of $\mathrm{K}_{\mathcal{B}}^{\mathcal{C}}$}
The proof that programs are evaluated by the machine $\mathrm{K}_{\mathcal{B}}^{\mathcal{C}}$ in polynomial space needs a formal definition of the space
consumption, which in its turn needs a deep investigation on the 
machine behaviour. In fact, we will explicitly show 
% Before formally proving that $\STA_\bt$ programs can be executed in polynomial space with respect to their size using 
% the abstract machine $\mathrm{K}_{\mathcal{B}}^{\mathcal{C}}$, 
%we would explicitly shows 
that computations in the machine
$\mathrm{K}_{\mathcal{B}}^{\mathcal{C}}$ can be performed 
%in a sequential 
%deterministic way 
with no need of backtracking or complex state memorizations.\\
For this reason, in Table \ref{SmallStepMachine} we depict a small step abstract
machine $\mathrm{k}_{\mathcal{B}}^{\mathcal{C}}$ that is able to reduce 
sequentially programs in $\STA_\bt$ following a leftmost outermost strategy and that 
exploit  a use of contexts similar to the one 
implemented by the machine $\mathrm{K}_{\mathcal{B}}^{\mathcal{C}}$.
The rules are similar to the ones in Table  \ref{BigStepB}
but we need a further stack 
%the use of a garbage collector procedure described in Table \ref{garbage}
%which is needed 
in order to maintain the desired complexity property.\\
In what follows we show that every big step computation has its small step correspondent.
%the small step machine equivalent to the big step 
%one.
%The big step machine, being more abstract than the small step one, 
%hides informations about the evaluation flow.
So, the small step machine by making explicit
the evaluation order
clarifies the fact that  
every configuration 
%of the machine
depends uniquely on the previous one (thanks to the use of contexts).
From this we can deduce that the space needed in order % by the machine for
to evaluate a program is the maximum space used by one of its 
configurations.\\
The big step machine has the advantage of being 
more abstract and this makes it easy to prove the complexity properties.
In fact, the use of a further stack  makes more difficult
the proofs of such properties for the small step machine.
For this reason in what follows we
prefer to work on the big step machine. 
\begin{table*}
\begin{center}
\small
%\footnotesize
\begin{tabular}{|c|}
\hline 
\\
\infer[(\beta)^{\S}]{\langle \mathcal{S},\mathcal{C},\mathcal{A}\succ (\lambda \x. \M)\N{\tt V}_1\cdots {\tt V}_m\rangle
\mapsto 
\langle \mathcal{S},\mathcal{C},\mathcal{A}@[\x':=\N]\succ \M[\x'/\x] {\tt V}_1\cdots {\tt V}_m\rangle}{%\x\in\FV(\M)
}
%% \\
%% \\
%% \infer[(\beta_w)]{\langle \mathcal{C},\mathcal{A}\succ (\lambda \x. \M)\N{\tt V}_1\cdots {\tt V}_m\rangle
%% \mapsto 
%% \langle \mathcal{C},\mathcal{A}\succ \M{\tt V}_1\cdots {\tt V}_m\rangle}{\x\notin\FV(\M)}
\\
\\
\infer[(h)]{\langle \mathcal{S},\mathcal{C},\mathcal{A}\succ \x{\tt V}_1\cdots {\tt V}_m\rangle\mapsto \langle  \mathcal{S}, \mathcal{C},\mathcal{A}\succ \N {\tt V}_1\cdots {\tt V}_m\rangle }{
[\x:=\N]\in\overline{\mathcal{S}\cdot \mathcal{A}}
%\x\in \FV(\mathcal{A}_2)\cup\FV(\vec{\tt V})\cup \FV(\mathcal{C}) 
}
\\
\\
%% \infer[(h_1)]{\langle\mathcal{C},\mathcal{A}_1@[\x:=\N]@\mathcal{A}_2 \succ \x{\tt V}_1\cdots {\tt V}_m\rangle\mapsto \langle \mathcal{C},\mathcal{A}_1@\mathcal{A}_2\succ \N {\tt V}_1\cdots {\tt V}_m\rangle }{
%% x\notin \FV(\mathcal{A}_2)\cup\FV(\vec{\tt V})\cup \FV(\mathcal{C}) }
%% \\
%% \\
% \infer[(\ifz)]{\langle \mathcal{S},\mathcal{C},\mathcal{A}\succ(\lif{\M}{\N_{\0}}{\N_{\1}}){\tt V}_1\cdots {\tt V}_m \rangle \mapsto \langle\mathcal{S}\cdot\mathcal{A}, \mathcal{C}[(\lif{[\circ]}{\N_\0}{\N_\1}){\tt V}_1\cdots{\tt V}_n],\epsilon\succ \M\rangle}
% {
% }
\infer[(\ifz)]{\langle \mathcal{S},\mathcal{C},\mathcal{A}\succ(\lif{\M}{\N_{\0}}{\N_{\1}}){\tt V}_1\cdots {\tt V}_m \rangle \mapsto \langle\mathcal{S}\cdot\mathcal{A}, \mathcal{C}',\epsilon\succ \M\rangle}
{
\mathcal{C}'=\mathcal{C}[(\lif{[\circ]}{\N_\0}{\N_\1}){\tt V}_1\cdots{\tt V}_n]
}
\\
\\
\infer[(r_\0)]{ \langle  \mathcal{S}\cdot  \mathcal{A}, \mathcal{C}[(\lif{[\circ]}{\N_\0}{\N_\1}){\tt V}_1\cdots{\tt V}_n],\mathcal{A}'\succ \0\rangle
\mapsto
 \langle  \mathcal{S},\mathcal{C},\mathcal{A}\succ \N_\0{\tt V}_1\cdots {\tt V}_n\rangle}{
%\mathcal{A}'={\tt clear}(\mathcal{C},\mathcal{A},\N_\0{\tt V}_1\cdots {\tt V}_n)
}
\\
\\
\infer[(r_\1)]{ \langle  \mathcal{S}\cdot  \mathcal{A}, \mathcal{C}[(\lif{[\circ]}{\N_\0}{\N_\1}){\tt V}_1\cdots{\tt V}_n ],\mathcal{A}'\succ \1\rangle
\mapsto
 \langle \mathcal{S},\mathcal{C},\mathcal{A} \succ \N_\1{\tt V}_1\cdots {\tt V}_n\rangle}{
%\mathcal{A}'={\tt clear}(\mathcal{C},\mathcal{A},\N_\1{\tt V}_1\cdots {\tt V}_n)
}
\\
\\
(\S) $\x'$ is a fresh variable.\\
\hline
\end{tabular}
\normalsize
\end{center}
\caption{The small step machine $\mathrm{k}_{\mathcal{B}}^{\mathcal{C}}$}
\label{SmallStepMachine}
\end{table*}
% \begin{table*}
% \begin{center}
% \begin{tabular}{|c|}
% \hline
% \\
% \infer{{\tt clear }(\mathcal{C},\varepsilon,{\tt M})=\varepsilon}{}
% \\
% \\
% \infer{{\tt clear }(\mathcal{C},[{\tt x:= N}]@\mathcal{A},{\tt M})=[{\tt x:= N}]@\mathcal{A}'}{
% {\tt clear }(\mathcal{C},\mathcal{A},{\tt M})=
% \mathcal{A}'& \x\in \FV(\mathcal{C})\cup \FV(\M)\cup \FV(\mathcal{A})}
% \\
% \\
% \infer{{\tt clear }(\mathcal{C},[{\tt x:= N}]@\mathcal{A},{\tt M})=\mathcal{A}'}{
% {\tt clear }(\mathcal{C},\mathcal{A},{\tt M})=
% \mathcal{A}'& \x\notin \FV(\mathcal{C})\cup \FV(\M)\cup \FV(\mathcal{A})}
% \\
% \\
% \hline
% \end{tabular}
% \end{center}
% \caption{The garbage collector procedure.}
% \label{garbage}
% \end{table*}
%% In this section we present a different operational semantics for terms of 
%% $\Lambda_{\mathcal{B}}$. This second operational semantics 
%% is given through a reduction machine defined in SOS small step style.
%% The interest of such machine is twofold. This give a more concrete
%% mechanism which is near to an implementation. Moreover
%% it put some light on the computational mechanism
%% present a small step 
%% The small step machine $\mathrm{k}_{\mathcal{B}}^{\mathcal{C}}$ is defined in Table \ref{SmallStepB}. 
In order to state formally the behaviour of the machine $\mathrm{k}_{\mathcal{B}}^{\mathcal{C}}$ we need to define a further stack containing m-contexts,
this is done in the following
definition.
\begin{definition}\
%\begin{enumerate}
%\item
\begin{itemize}
\item An \emph{m-stack} $\mathcal{S}$ is a stack of m-contexts.
%where all variables $\x$ are distinct. \\
The symbol $\epsilon$ denote the empty m-stack. The expression
$\mathcal{S}\cdot \mathcal{A}$ denotes the operation of 
pushing the m-context $\mathcal{A}$ on the m-stack $\mathcal{S}$.
The set of m-stacks is denoted by $\mathrm{Stk_m}$.
The expression $\overline{\mathcal{S}}$ denotes 
the m-context obtained by concatenating all the m-context
in $\mathcal{S}$, i.e. $\overline{\epsilon}=\varepsilon$ and
$\overline{\mathcal{S}\cdot \mathcal{A}}= \overline{\mathcal{S}}@\mathcal{A}$.  
% \emph{cardinality} of an m-context $\mathcal{A}$, denoted by $\#(\mathcal{A})$,
% is the number of variable assignments in $\mathcal{A}$. The \emph{size} of an m-context $\mathcal{A}$, denoted by $|\mathcal{A}|$, is the 
% sum of the size of each variable assignment in $\mathcal{A}$, where
% a variable assignment $\x:=\M$ has size $|\M|+1$.
\item
 The \emph{reduction relation} $\mapsto\subseteq (\mathrm{Stk_m}\times\mathrm{Ctx}_{\bt}\times\mathrm{Ctx_m}\times 
\Lambda_{\mathcal{B}})\times (\mathrm{Stk_m}\times\mathrm{Ctx}_{\bt}\times\mathrm{Ctx_m}\times 
\Lambda_{\mathcal{B}})$  
is the relation inductively defined by the rules of $\mathrm{k}_{\mathcal{B}}^{\mathcal{C}}$. The relation $\mapsto^*$ is the reflexive and transitive closure of the reduction relation $\mapsto$.\\
If $\M$ is a program, and if there is a boolean $b$ such that 
$\langle \epsilon, \circ,\varepsilon\succ \M\rangle \mapsto^*\langle \epsilon, \circ,\mathcal{A}\succ b\rangle $ for some $\mathcal{A}$, 
then we say that $\M$ \emph{reduces} to $b$, and we simply write $\M\mapsto^* b$ for short.
%\item Derivation trees are called \emph{computations} 
%in the abstract machine and are denoted by $\Pi,\Sigma$.
%$\Pi::\mathcal{C},\mathcal{A}\models \M\ev {\tt b}$ denotes a computation 
%with conclusion $\mathcal{C},\mathcal{A}\models \M\ev {\tt b}$.
% \item Given a computation $\Pi$  each 
% node of $\Pi$, which is of the shape $\mathcal{C},\mathcal{A}\models \M\ev {\tt b}$ is a \emph{configuration}.
% Configurations are ranged over by $\phi,\psi$. 
% $\phi\rhd \mathcal{C},\mathcal{A}\models \M\ev {\tt b}$ means that $\phi$ is
%  the configuration
%  $\mathcal{C},\mathcal{A}\models \M\ev {\tt b}$.
% The
% conclusion of the derivation tree is called the \emph{initial configuration}.
% %and it will be usually denote $\phi_0$.
% \end{enumerate}
\end{itemize}
\end{definition}
We now prove that we have a direct correspondence between 
the configurations of a computation in the 
big step machine and the small step machine configurations.\\
%\begin{lemma}
%\label{lem:machine-translation}
Given a big step abstract machine derivation 
$\nabla :: \models \M\ev {\tt b}$, 
we can define a translation $(-)^{\tt s}$ assigning 
to each
configuration 
$
\phi\Yright \mathcal{C},\mathcal{A}\models {\tt N}\ev {\tt b'}\in \nabla
$ 
a 
small step abstract  machine configuration 
$\langle \mathcal{S},\mathcal{C},\mathcal{A}'\succ \N\rangle$
such that $\overline{\mathcal{S}\cdot\mathcal{A}'}=\mathcal{A}$.
% \end{lemma}
% \begin{proof}
Let $(-)^{\tt s}$ be 
inductively defined, for every configuration $\phi\in\nabla$, on 
the length $n$ of ${\tt path}_\nabla (\phi)$ as:
\begin{itemize}
\item if $n=1$ then
$$(\circ,\varepsilon \models \M\ev {\tt b})^{\tt s}=\langle \epsilon,\circ,\varepsilon\succ \M\rangle $$
\item if $n=m+1$ then for some $\psi$  we have 
${\tt path}_{\nabla}(\phi)={\tt path}_{\nabla}(\psi)+1$ and in particular
we have a rule $(R)$ like the following 
$$
\infer[(R)]{\psi\Yright \mathcal{C},\mathcal{A}\models \N\Downarrow {\tt b}}{ \mathcal{C}_1,\mathcal{A}_1\models \N_1\Downarrow {\tt b}_1
&\cdots &
\mathcal{C}_k,\mathcal{A}_k\models \N_k\Downarrow {\tt b}_k
}
$$
for $1\leq k\leq 2$ where the length of 
${\tt path}_{\nabla}(\psi)$ is $m$ and 
$\phi$ is one of the premise configurations of $(R)$.
We now proceed by case on $(R)$.\\
% induction on the shape of the rule
% where $\phi$ appear as a premise. 
If $(R)$ is the rule:
$$\infer[(\beta)]{\psi\Yright \mathcal{C},\mathcal{A}\models (\lambda \x. \M)\N{\tt V}_1\cdots {\tt V}_m\Downarrow {\tt b}}
{%\x\in\FV(\M)& 
\phi\Yright\mathcal{C},\mathcal{A}@\{\x':=\N\}\models \M[\x'/\x] {\tt V}_1\cdots {\tt V}_m\Downarrow {\tt b} }
$$
Then, by induction hypothesis we have $(\psi)^{\tt s}= \langle \mathcal{S},\mathcal{C},\mathcal{A}'\succ (\lambda \x.\M)\N{\tt V}_1\cdots {\tt V}_m\rangle$ such that $\overline{\mathcal{S}\cdot \mathcal{A}'}=\mathcal{A}$, so we can define 
$$(\phi)^{\tt s}=\langle \mathcal{S},\mathcal{C},\mathcal{A}'@[\x':=\N]\succ \M[\x'/\x]{\tt V}_1\cdots {\tt V}_m\rangle $$
and clearly $\overline{\mathcal{S}\cdot (\mathcal{A}'@\{\x':=\N\})}=\mathcal{A}\{\x':=\N\}$.\\
If $(R)$ is the rule:
$$
\infer[(h)]{\psi\Yright \mathcal{C},\mathcal{A}\models \x{\tt V}_1\cdots {\tt V}_m\Downarrow {\tt b}}{
%x\in \FV(\mathcal{C})\cup\FV(\mathcal{A})\cup\FV(\vec{{\tt V}}) & 
\{\x:=\N\}\in \mathcal{A} &
\phi\Yright \mathcal{C},\mathcal{A}\models \N {\tt V}_1\cdots {\tt V}_m\Downarrow {\tt b} }
$$
Then,  by induction hypothesis we have $(\psi)^{\tt s}= \langle \mathcal{S},\mathcal{C},\mathcal{A}'\succ \x{\tt V}_1\cdots {\tt V}_m\rangle$ such that $\overline{\mathcal{S}\cdot \mathcal{A}'}=\mathcal{A}$
so we can define 
$$(\phi)^{\tt s}=\langle \mathcal{S},\mathcal{C},\mathcal{A}'\succ \N{\tt V}_1\cdots {\tt V}_m\rangle $$
If $(R)$ is the rule:
$$
\ \infer[(\ifz\0)]{\psi\Yright \mathcal{C},\mathcal{A}\models(\lif{\M}{\N_{\0}}{\N_{\1}}){\tt V}_1\cdots {\tt V}_m \Downarrow \tt b}
{\mathcal{C}[(\lif{[\circ]}{\N_{\0}}{\N_{\1}}){\tt V}_1\cdots {\tt V}_m],\mathcal{A}\models \M \Downarrow \tt \0
& 
\mathcal{C},\mathcal{A}\models \N_{\0}{\tt V}_1\cdots{\tt V}_m \Downarrow \tt b
}
$$
by induction hypothesis $(\psi)^{\tt s}= \langle \mathcal{S},\mathcal{C},\mathcal{A}'\succ (\lif{\M}{\N_{\0}}{\N_{\1}}){\tt V}_1\cdots {\tt V}_m\rangle$ such that 
$\overline{\mathcal{S}\cdot \mathcal{A}'}=\mathcal{A}$
and we have two distinct cases. 
Consider the case that $\phi \Yright \mathcal{C}[(\lif{[\circ]}{\N_{\0}}{\N_{\1}}){\tt V}_1\cdots {\tt V}_m],\mathcal{A}\models \M \Downarrow \tt \0$, then we can define
$$(\phi)^{\tt s}=\langle \mathcal{S}\cdot\mathcal{A}',\mathcal{C}[(\lif{[\circ]}{\N_{\0}}{\N_{\1}}){\tt V}_1\cdots {\tt V}_m],\varepsilon \succ \M\rangle $$
Analogously, if 
$\phi \Yright \mathcal{C},\mathcal{A}\models \N_{\0}{\tt V}_1\cdots{\tt V}_m \Downarrow \tt b$ then we can define
$$(\phi)^{\tt s}=\langle \mathcal{S},\mathcal{C},\mathcal{A}' \succ \N_{\0}{\tt V}_1\cdots{\tt V}_m\rangle $$
The case $(R)$ is the rule $(\ifz\1)$ is similar.
\end{itemize}
%\end{proof}
The translation defined above is useful in order to state the correspondence between the big step and the small step machine.
In order to establish this correspondence we need to visit the 
evaluation trees of the big step machine computation following a determined visiting order. In particular, we consider the left-depth-first visit. E.g. consider the following tree:
$$\scalebox{0.7}{
\xymatrix{
 & g\ar@{-}[dr] &&h\ar@{-}[dl]  \\
e\ar@{-}[dr] &  &f\ar@{-}[dl] & \\
& d\ar@{-}[d] & & l\ar@{-}[d] \\
& c\ar@{-}[dr] & & i\ar@{-}[dl]\\
 & & b\ar@{-}[d] &   \\
 & &  a           &   }
}
$$
the left-depth-first visit coincides with the visit of the nodes in the alphabetical order.
Below, we need to talk about the visit of nodes in a given computation  $\nabla\mathop{::} \models \M\ev {\tt b}$. For this reason, we say that a configuration $\psi$ \emph{immediately follows} a configuration $\phi$ if the node visited after $\phi$ for left-depth-first visit is the node $\psi$. For instance, the node $i$ immediatly follows the node $h$ in the above figure.\\
Now we can state an important result.
\begin{lemma}
\label{lem:left-depth-first}
Let $\nabla\mathop{::} \models \M\ev {\tt b}$
and let $\phi,\psi\in \nabla$ be two distinct configurations (i.e. $\phi\neq\psi$) 
%is the configuration 
such that $\psi$ immediately follows $\phi$ in the left-depth-first visit of $\nabla$.
Then:
$$
(\phi)^{\tt s}\mapsto (\psi)^{\tt s}
$$
\end{lemma}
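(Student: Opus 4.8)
The plan is to classify the immediate-follows step $\phi\to\psi$ according to the shape of the left-depth-first traversal and, in each case, exhibit exactly one rule of $\mathrm{k}_{\mathcal{B}}^{\mathcal{C}}$ witnessing $(\phi)^{\tt s}\mapsto(\psi)^{\tt s}$. There are only two possibilities. Either (A) $\phi$ is an internal node of $\nabla$ and $\psi$ is its leftmost premise, so the traversal descends; or (B) $\phi$ is a leaf, i.e. the conclusion of an $(Ax)$ rule, and the traversal backtracks, in which case $\psi$ is the second premise of the nearest branching ancestor $N$ of $\phi$ whose first-premise subtree contains $\phi$. Since in $\mathrm{K}_{\mathcal{B}}^{\mathcal{C}}$ only the rules $(\ifz\0)$ and $(\ifz\1)$ branch, this ancestor $N$ is necessarily an instance of one of them, and $\phi$ is the rightmost leaf of its first-premise subtree.

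Case (A) I would dispatch by inspection of the internal rules, reading $(\phi)^{\tt s}$ and $(\psi)^{\tt s}$ off the definition of $(-)^{\tt s}$. If $\phi$ ends with $(\beta)$, then $(\phi)^{\tt s}$ and $(\psi)^{\tt s}$ differ exactly by pushing $[\x':=\N]$ onto the current m-context, which is the small-step $(\beta)$ rule; if $\phi$ ends with $(h)$, the side condition $\{\x:=\N\}\in\mathcal{A}=\overline{\mathcal{S}\cdot\mathcal{A}'}$ of the big step is precisely the premise $[\x:=\N]\in\overline{\mathcal{S}\cdot\mathcal{A}'}$ of the small-step $(h)$ rule; and if $\phi$ ends with $(\ifz\0)$ or $(\ifz\1)$, passing to its first premise pushes $\mathcal{A}'$ onto the m-stack and extends the $\bt$-context, which is exactly the small-step $(\ifz)$ rule. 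In each subcase the translation produces verbatim the two sides of the matching rule.

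The heart of the argument is case (B), for which I would first isolate an auxiliary invariant: for every node $\rho$ with $(\rho)^{\tt s}=\langle\mathcal{S},\mathcal{C},\mathcal{A}'\succ\N\rangle$, the rightmost leaf $\phi$ of the subtree rooted at $\rho$ satisfies $(\phi)^{\tt s}=\langle\mathcal{S},\mathcal{C},\mathcal{A}''\succ{\tt b}\rangle$ for some $\mathcal{A}''$, where ${\tt b}$ is the boolean in the conclusion of $\rho$. This is proved by induction on the subtree: the rightmost leaf of a $(\beta)$ or $(h)$ node is that of its unique premise, whose translation carries the same $\mathcal{S}$ and $\mathcal{C}$; the rightmost leaf of an $(\ifz\0)/(\ifz\1)$ node is that of its \emph{second} premise, whose translation again leaves $\mathcal{S}$ and $\mathcal{C}$ unchanged; and the value is preserved since every rule copies the boolean of the premise on the rightmost path to its conclusion. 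Applying this invariant to $\rho=N_1$, the first premise of the branching ancestor $N$, whose translation is $\langle\mathcal{S}_N\cdot\mathcal{A}'_N,\ \mathcal{C}_N[(\lif{[\circ]}{\N_{\0}}{\N_{\1}}){\tt V}_1\cdots{\tt V}_m],\ \varepsilon\succ\M\rangle$, gives $(\phi)^{\tt s}=\langle\mathcal{S}_N\cdot\mathcal{A}'_N,\ \mathcal{C}_N[(\lif{[\circ]}{\N_{\0}}{\N_{\1}}){\tt V}_1\cdots{\tt V}_m],\ \mathcal{A}''\succ{\tt b}\rangle$ with ${\tt b}$ the value of the test. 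A single application of $(r_{\tt b})$ then pops the top m-context and the top $\bt$-context layer and yields exactly $\langle\mathcal{S}_N,\mathcal{C}_N,\mathcal{A}'_N\succ\N_{\tt b}{\tt V}_1\cdots{\tt V}_m\rangle=(\psi)^{\tt s}$, as required.

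I expect the main obstacle to be case (B), and specifically the reconciliation of a traversal backtracking step — which may climb several levels of the tree at once — with the \emph{single} popping step $(r_{\tt b})$. The auxiliary invariant is precisely what makes this work: however deep the first-premise subtree of $N$ is, its rightmost leaf is reached with the m-stack and the $\bt$-context restored to those installed when $N$ was entered, so that one $(r_{\tt b})$ step suffices. Some care is also needed to justify that the backtracking ancestor $N$ is well defined and branching, which holds because $(\beta)$, $(h)$ and $(Ax)$ are non-branching and therefore never leave an unvisited right sibling.
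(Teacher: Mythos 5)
Your proof is correct and follows essentially the same route as the paper's: a case split between descending to the leftmost premise (checked rule by rule against the small-step rules) and backtracking from an $(Ax)$ leaf to the second premise of the nearest $\ifz$-ancestor, resolved by one $(r_{\tt b})$ step. Your auxiliary invariant about rightmost leaves is just a more explicit rendering of the paper's observation that the path from the branching ancestor down to the leaf never crosses an $\ifz$-rule through its left premise, so the m-stack and $\bt$-context are preserved; the substance is the same.
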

\begin{proof}
We proceed by induction on the 
height of $\nabla$.
%length $n$ of the left in-depth 
%visiting of $\nabla$ in order to reach $\psi$.
The base case is easy, since 
$\nabla$ is an application of the $(Ax)$ rule, 
hence there are no configurations $\phi,\psi\in \nabla$ such that $\phi\neq\psi$.
Consider now the case where the height of $\nabla$ is greater than 1.
If the rule with conclusion $\phi$ is not an axiom, then $\psi$
coincides with one of its premises. Let us consider all possible cases.
%Then, $\psi$ is the conclusion of a a rule $(R)$ like 
%the following 
%$$
%\infer[(R)]{\psi\Yright \mathcal{C},\mathcal{A}\models \M'\Downarrow {\tt b}}{ \mathcal{C}_1,\mathcal{A}_1\models \N_1\Downarrow {\tt b}_1
%&\cdots &
%\mathcal{C}_k,\mathcal{A}_k\models \N_k\Downarrow {\tt b}_k
%}
%$$
%for $0\leq k\leq 2$.
%We now proceed by case on $(R)$. The case $(R)$ is an $(Ax)$ rule is the more involved one, we leave it as the last case.\\
Consider the case where the rule with conclusion $\phi$ is $(\beta)$. Then, we are in a situation as:
$$\infer[(\beta)]{\phi\Yright \mathcal{C},\mathcal{A} \models (\lambda \x. {\tt P})\N{\tt V}_1\cdots {\tt V}_m\Downarrow {\tt b}}
{%\x\in\FV(\M)& 
\psi\Yright \mathcal{C},\mathcal{A}@\{\x':=\N\}\models {\tt P}[\x'/\x] {\tt V}_1\cdots {\tt V}_m\Downarrow {\tt b} }
$$
then % we clearly have:
\begin{multline*}
(\phi)^{\tt s}=\langle \mathcal{S},\mathcal{C},\mathcal{A}' \succ (\lambda\x. {\tt P})\N{\tt V}_1\cdots {\tt V}_m\rangle
\mapsto\\
\langle
 \mathcal{S},\mathcal{C},\mathcal{A}'@\{\x':=\N\}\succ {\tt P}[\x'/\x] {\tt V}_1\cdots {\tt V}_m\rangle 
=(\psi)^{\tt s}
\end{multline*}
where $\mathcal{A}=\overline{\mathcal{S}\cdot\mathcal{A}'}$.
Consider the case where the rule with conclusion $\phi$ is:
$$
\infer[(h)]{\phi\Yright \mathcal{C},\mathcal{A}\models \x{\tt V}_1\cdots {\tt V}_m\Downarrow {\tt b}}{
%x\in \FV(\mathcal{C})\cup\FV(\mathcal{A})\cup\FV(\vec{{\tt V}}) & 
\{\x:=\N\}\in \mathcal{A} &
\psi\Yright \mathcal{C},\mathcal{A}\models \N {\tt V}_1\cdots {\tt V}_m\Downarrow {\tt b} }
$$
then %we clearly have:
%\begin{multline*}
$$
(\phi)^{\tt s}=\langle \mathcal{S},\mathcal{C},\mathcal{A}' \succ \x{\tt V}_1\cdots {\tt V}_m\rangle
\mapsto
\langle
 \mathcal{S},\mathcal{C},\mathcal{A}'\succ {\tt N}{\tt V}_1\cdots {\tt V}_m\rangle 
=(\psi)^{\tt s}
$$
%\end{multline*}
thanks to the fact that $\{\x:=\N\}\in \mathcal{A}=\overline{\mathcal{S}\cdot\mathcal{A}'}$.\\
If  the rule with conclusion $\phi$ is:
$$
\ \infer[(\ifz\0)]{\phi\Yright \mathcal{C},\mathcal{A} \models(\lif{\N}{\N_{\0}}{\N_{\1}}){\tt V}_1\cdots {\tt V}_m \Downarrow \tt b}
{\psi\Yright \mathcal{C}[(\lif{[\circ]}{\N_{\0}}{\N_{\1}}){\tt V}_1\cdots {\tt V}_m],\mathcal{A}\models \N \Downarrow \tt \0
& 
\mathcal{C},\mathcal{A}\models \N_{\0}{\tt V}_1\cdots{\tt V}_m \Downarrow \tt b
}
$$
then %we have:
\begin{multline*}
(\phi)^{\tt s}=\langle \mathcal{S}, \mathcal{C},\mathcal{A}' \succ (\lif{\N}{\N_{\0}}{\N_{\1}}){\tt V}_1\cdots {\tt V}_m\rangle
\\
\mapsto
\langle
\mathcal{S}, \mathcal{C}[\lif{[\circ]}{\N_{\0}}{\N_{\1}}){\tt V}_1\cdots {\tt V}_m],\mathcal{A}'\succ \N
\rangle 
=(\psi)^{\tt s}
\end{multline*}
The case  of the rule $(\ifz\1)$ is analogous.\\
Now consider the case $\phi$ is the conclusion of an axiom rule, i.e.:
$$
\infer[(Ax)]{\phi\Yright \mathcal{C},\mathcal{A}\models {\tt b}\Downarrow {\tt b}}{}
$$ 
If $\mathcal{C}$ is empty, then $\phi$ is the last configuration
in the left-depth-first visit of $\nabla$, hence 
there is no configuration $\psi\in \nabla$ following $\phi$ such that $\phi\neq\psi$.
Otherwise, 
%there 
%is a unique configuration $\phi'\in\nabla$ being the conclusion of a rule 
%$(\ifz{\tt b})$,
%such that 
$\nabla$ has a subderivation $\Diamond$ of the shape: 
%Then, since $n=p+1$
%it is easy to verify that we have a subderivation 
%$\Diamond$ of $\nabla$ ending as
$$
\infer[(\ifz{\tt b})]{\phi'\Yright\mathcal{C}',\mathcal{A}' \models(\lif{\N}{\N_{\0}}{\N_{\1}}){\tt V}_1\cdots {\tt V}_m \Downarrow {\tt b}'}
{
\infer{\qquad \vdots\qquad}{ \infer[(Ax)]{\phi\Yright \mathcal{C},\mathcal{A}\models {\tt b}\Downarrow {\tt b}}{}}
& 
\psi\Yright \mathcal{C}',\mathcal{A}'\models \N_{\tt b}{\tt V}_1\cdots{\tt V}_m \Downarrow \tt b'
}
$$
where by definition of left-depth-first visit $\psi$ is the configuration 
following $\phi$. %in the left in-depth visiting of $\nabla$.
Note in particular that %there are no
${\tt path}_{\Diamond}(\phi)$ does not cross 
any $\tt if$-rule by following its left premise. 
%Note that conversly it can cross 
%$\tt if$-rules by following the right premise.
In particular, by definition of the translation $(-)^{\tt s}$ this means that if 
$(\phi')^{\tt s}=
\langle \mathcal{S}, \mathcal{C},\mathcal{A}' \succ
(\lif{\N}{\N_{\0}}{\N_{\1}}){\tt V}_1\cdots {\tt V}_m\rangle$
then 
$(\phi)^{\tt s}=
\langle \mathcal{S}\cdot \mathcal{A}', \mathcal{C}[(\lif{[\circ]}{\N_{\0}}{\N_{\1}}){\tt V}_1\cdots {\tt V}_m],\mathcal{A}'' \succ {\tt b}
\rangle$ for some $\mathcal{A}''$
so in particular we have
\begin{multline*}
(\phi)^{\tt s}=\langle  \mathcal{S}\cdot \mathcal{A}', \mathcal{C}[(\lif{[\circ]}{\N_{\0}}{\N_{\1}}){\tt V}_1\cdots {\tt V}_m],\mathcal{A}'' \succ {\tt b} \rangle
\\
\mapsto
\langle
\mathcal{S}, \mathcal{C},\mathcal{A}'\succ \N_{\tt b}{\tt V}_1\cdots {\tt V}_m
\rangle 
=(\psi)^{\tt s}
\end{multline*}
and the proof is given.
% We here proceed by induction on the number $r$ of $\ifz$-rules 
% in ${\tt path}_{\Diamond}(\psi)$ in order to show that
% $$
% (\psi)^{\tt s}\mapsto^r (\phi)^{\tt s}
% $$ 
% If $r=1$ then $\Diamond$ is as follow:
% $$
% \infer{\mathcal{C}',\mathcal{A}' \models(\lif{\N}{\N_{\0}}{\N_{\1}}){\tt V}_1\cdots {\tt V}_m \Downarrow {\tt b}'}
% { \infer{\psi\Yright \mathcal{C}'[(\lif{[\circ ]}{\N_{\0}}{\N_{\1}}){\tt V}_1\cdots {\tt V}_m],\mathcal{A}'\models {\tt b}\Downarrow {\tt b}}{}
% & 
% \phi\Yright \mathcal{C}',\mathcal{A}'\models \N_{\tt b}{\tt V}_1\cdots{\tt V}_m \Downarrow \tt b'
% }
% $$
% So in particular we have
\end{proof}

We can now use this result to prove that computations
in the big step machine correspond to computations in the small step one.
\begin{theorem}Let $\M\in\prog$. Then:
$$\models \M\ev{\tt b} \textrm{ implies }  \M \mapsto^{*} {\tt b}$$
\end{theorem}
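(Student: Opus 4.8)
The plan is to thread the small step reductions through the left-depth-first visit of the big step computation, using Lemma \ref{lem:left-depth-first} as the single inductive engine. First I would fix a computation $\nabla\mathop{::} \models \M\ev {\tt b}$ and enumerate its configurations in the order of the left-depth-first visit, say $\phi_0, \phi_1, \ldots, \phi_N$. By construction $\phi_0$ is the initial configuration $\circ, \varepsilon \models \M \ev {\tt b}$, and by definition of the translation $(\phi_0)^{\tt s} = \langle \epsilon, \circ, \varepsilon \succ \M\rangle$. Since every $\phi_{i+1}$ immediately follows $\phi_i$, Lemma \ref{lem:left-depth-first} gives $(\phi_i)^{\tt s}\mapsto (\phi_{i+1})^{\tt s}$ for each $i<N$, and hence by transitivity $\langle \epsilon, \circ, \varepsilon \succ \M\rangle \mapsto^{*} (\phi_N)^{\tt s}$.

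It then remains to identify the last configuration $\phi_N$ and to check that its translation has the shape $\langle \epsilon, \circ, \mathcal{A} \succ {\tt b}\rangle$ required by the definition of $\M\mapsto^{*}{\tt b}$. I would argue as follows. The last node of a left-depth-first (i.e.\ pre-order) visit is a leaf, and the only leaves of a $\mathrm{K}_{\mathcal{B}}^{\mathcal{C}}$ computation are instances of the $(Ax)$ rule; moreover it is the \emph{rightmost} leaf, the one reached from the root by repeatedly selecting the rightmost premise. Inspecting the rules, the boolean conclusion of each rule coincides with the boolean of its rightmost premise, and the $\bt$-context of the conclusion coincides with the $\bt$-context of its rightmost premise, since only the left premise of a conditional rule enlarges the $\bt$-context. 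As the root carries the boolean ${\tt b}$ and the empty $\bt$-context $\circ$, both are preserved down the rightmost branch, so $\phi_N \Yright \circ, \mathcal{A} \models {\tt b}\ev {\tt b}$; this also matches the observation already used in the proof of Lemma \ref{lem:left-depth-first} that an axiom with empty $\bt$-context is exactly the final configuration of the visit.

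Finally I would pin down the m-stack component. Here the key is the invariant that for every configuration $\phi \in \nabla$ the translation $(\phi)^{\tt s} = \langle \mathcal{S}, \mathcal{C}, \mathcal{A}' \succ \N\rangle$ satisfies that the number of m-contexts stacked in $\mathcal{S}$ equals $\#(\mathcal{C})$, the nesting depth of the $\bt$-context. This is immediate by induction on the length of $\pathi_{\nabla}(\phi)$, following the same case analysis used to define $(-)^{\tt s}$: the $(\beta)$ and $(h)$ clauses touch neither component, while the left-premise clause of the conditional rule pushes one m-context onto $\mathcal{S}$ and simultaneously wraps one atomic context around $\mathcal{C}$. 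Applying the invariant to $\phi_N$, whose $\bt$-context is $\circ$ with $\#(\circ)=0$, forces $\mathcal{S}=\epsilon$. Hence $(\phi_N)^{\tt s} = \langle \epsilon, \circ, \mathcal{A} \succ {\tt b}\rangle$, and combining with the reduction chain above yields $\langle \epsilon, \circ, \varepsilon \succ \M\rangle \mapsto^{*} \langle \epsilon, \circ, \mathcal{A} \succ {\tt b}\rangle$, that is $\M\mapsto^{*}{\tt b}$.

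The main obstacle I expect is precisely this bookkeeping about the final configuration: Lemma \ref{lem:left-depth-first} delivers the reduction sequence essentially for free, but the definition of $\M\mapsto^{*}{\tt b}$ insists on terminating in a configuration with empty m-stack and empty $\bt$-context, so the proof stands or falls on the two preservation facts along the rightmost branch together with the invariant relating the height of $\mathcal{S}$ to $\#(\mathcal{C})$. Everything else is a routine transitive composition of single steps.
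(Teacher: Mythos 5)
Your proposal is correct and follows exactly the paper's route: the paper's entire proof is ``by repeatedly applying Lemma \ref{lem:left-depth-first}'' along the left-depth-first visit. The additional bookkeeping you supply --- that the rightmost branch preserves the boolean and the empty $\bt$-context, and that the invariant $\#(\mathcal{S})=\#(\mathcal{C})$ forces the final m-stack to be empty --- is exactly the detail the paper leaves implicit, and your treatment of it is sound.
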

\begin{proof} 
By repeatedly applying Lemma \ref{lem:left-depth-first}. 
\end{proof}
A converse of the above lemma can be easily obtained. Nevertheless,
the previous result is sufficient in order to show that our space
measures are sound.
Indeed, Lemma \ref{lem:left-depth-first}, if repeatedly applied, allows us to define 
the execution in the $\mathrm{K}_{\mathcal{B}}^{\mathcal{C}}$ machine as a sequence of configurations
corresponding to the left-depth-first visit of the derivation tree. 
Moreover, since clearly in the small step machine every step depends
only on the previous one, the definition of the translation $(-)^{\tt s}$ 
%Marco: The following is False! we give the translation on the full computation in the big step machine
%(which is completely local)
and Lemma \ref{lem:left-depth-first}
imply that also in $\mathrm{K}_{\mathcal{B}}^{\mathcal{C}}$ every execution step depends only on the previous one.
\begin{example}
By returning to the computation example in Table \ref{TableCompEx}, it is worth 
noting that to pass from the configuration $\phi$ to the configuration 
$\psi$ all necessary information are already present in the configuration 
$\phi$ itself. We can view such a step as a $\reddelta$ step
$(\lif{\0}{\x_1}{\x_1})^{\mathcal{A}_{3}}\reddelta (\x_1)^{\mathcal{A}_3}$
noting that $(\x_1)^{\mathcal{A}_3}\equiv(\x_1)^{\mathcal{A}_2}$.
\end{example}
In fact, the behaviour shown in the above example can be generalized, so in this sense we don't need neither 
mechanism for backtracking nor the memorization of parts of the computation 
tree.
Using this property, we can define in a similar way the notion of space used to evaluate a term in the two machines.
%%% Local Variables: 
%%% mode: latex
%%% TeX-master: "main"
%%% End: 

% We can now define the space effectively used to evaluate a term. It is very easy to define how many
% space is necessary for a computation in the small step machine $\mathrm{k}_{\mathcal{B}}^{\mathcal{C}}$,
% since it is evident that every configuration depends only on the previous one. But, in order to prove that programs 
% can be evaluated in polynomial time, it is easier to reason on the big step machine 
% $\mathrm{K}_{\mathcal{B}}^{\mathcal{C}}$, since it is closer to the type assignment system $\BSTA$.
% So we will give a definition of the needed space in both machines, 
% and we 
% will prove that the space of a computation in $\mathrm{K}_{\mathcal{B}}^{\mathcal{C}}$ is greater than (or equal to) the space of the same computation in $\mathrm{k}_{\mathcal{B}}^{\mathcal{C}}$. This will allows us to 
% prove the pspace soundness just for the big step machine.\\
%reason
%in the following only on the first machine.
Let us first define the \emph{size} of a configuration in both the machines.
\begin{definition}$ $
\begin{enumerate}
\item If $\langle \mathcal{S},\mathcal{C},\mathcal{A}\succ \M\rangle$ is a configuration in 
$\mathrm{k}_{\mathcal{B}}^{\mathcal{C}}$, then its {\em size} is $|\mathcal{S}|+|\mathcal{C}|+|\mathcal{A}|+|\M|$.
\item If $\phi\Yright \mathcal{C},\mathcal{A}\models \M\ev{\tt b}$ is a configuration in 
$\mathrm{K}_{\mathcal{B}}^{\mathcal{C}}$, then its {\em size} (denoted by
$| \phi |$) is $|\mathcal{C}|+|\mathcal{A}|+|\M|$.
\end{enumerate}
\end{definition}
We can now define the \emph{required space} in both the machines 
as the maximal size of a configuration in the computation.
\begin{definition}$ $
\begin{enumerate}
\item Let $\langle \varepsilon, [\circ],\epsilon\succ \M\rangle \mapsto^{*} \tt b$ 
be a computation in $\mathrm{k}_{\mathcal{B}}^{\mathcal{C}}$. Then
its \emph{required space}, denoted by ${\tt space}_{s}(\M)$, is 
the maximal size of a configuration in it. 
\item 
Let $\nabla:: [\circ],\epsilon\models \M\ev{\tt b}$ be a computation in 
$\mathrm{K}_{\mathcal{B}}^{\mathcal{C}}$. Then
its \emph{required space}, denoted by ${\tt space}(\M)$, is the maximal 
size of a configuration in $\nabla$. 
\end{enumerate}
\end{definition}
We can now show that the relation on the required space of the two machines is the expected one.
% \begin{property}
% Let $\langle [\circ],\epsilon\succ \M\rangle \mapsto^{*} \tt b$. Then the space occupation by this computation
% depends only on $\M$, and it belongs to $O(|\mathcal{C}|+|\mathcal{A}|+|\N|)$, where 
% $|\mathcal{C}|+|\mathcal{A}|+|\N|$ is the maximal size of a configuration in the computation.
% \end{property}
% \begin{proof}
% The result follows easily by the fact that the machine is deterministic, and every configuration depends only on
% the previous one.
% \end{proof}
%Let us denote by ${\tt space}_{s}(\M)$ the space occupation of the computation of the program $\M$ by
%the machine $\mathrm{k}_{\mathcal{B}}^{\mathcal{C}}$.
% Note that every configuration of the machine
%depends uniquely on the previous one (thanks to the $\bt$-context), so the space used by the machine for
%evaluating a program turns out to be the maximum space used by one of its configurations.  
% \begin{definition}
% Let $\nabla:: [\circ],\epsilon\models \M\ev{\tt b}$ be a computation in 
% $\mathrm{K}_{\mathcal{B}}^{\mathcal{C}}$. Then
% its \emph{space occupation}, denoted by ${\tt space}(\M)$, is the maximal 
% size of a configuration in $\nabla$. 
% \end{definition}
%The following result holds.
\begin{lemma}Let $\M\in\prog$. Then:
$${\tt space}_{s}(\M)\leq{\tt space}(\M)$$
\end{lemma}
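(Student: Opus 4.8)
The plan is to turn the small step computation of $\M$ into the left-depth-first enumeration of the big step computation tree via the translation $(-)^{\tt s}$, and then to compare the two machines one configuration at a time. First I would fix the computation $\nabla :: \models \M \ev {\tt b}$, which exists because $\M\in\prog$, and list its configurations $\phi_1,\ldots,\phi_k$ in the order of the left-depth-first visit, so that $\phi_1$ is the initial configuration $\circ,\varepsilon\models\M\ev{\tt b}$ and $\phi_k$ is the rightmost $(Ax)$ leaf. By Lemma~\ref{lem:left-depth-first} we have $(\phi_i)^{\tt s}\mapsto(\phi_{i+1})^{\tt s}$ for every $i<k$, and since $(\phi_1)^{\tt s}=\langle \epsilon,\circ,\varepsilon\succ\M\rangle$, the sequence $(\phi_1)^{\tt s},\ldots,(\phi_k)^{\tt s}$ is exactly the run of $\mathrm{k}_{\mathcal{B}}^{\mathcal{C}}$ on $\M$. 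Hence ${\tt space}_s(\M)=\max_{1\le i\le k}|(\phi_i)^{\tt s}|$, while ${\tt space}(\M)=\max_{1\le i\le k}|\phi_i|$ by definition, and it suffices to bound $|(\phi)^{\tt s}|$ by $|\phi|$ for each configuration $\phi\in\nabla$.

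For the configuration comparison I would use the invariant built into the definition of $(-)^{\tt s}$: if $\phi\Yright\mathcal{C},\mathcal{A}\models\N\ev{\tt b'}$ then $(\phi)^{\tt s}=\langle\mathcal{S},\mathcal{C},\mathcal{A}'\succ\N\rangle$ with $\overline{\mathcal{S}\cdot\mathcal{A}'}=\mathcal{A}$. The term $\N$ and the $\bt$-context $\mathcal{C}$ are literally shared between $\phi$ and $(\phi)^{\tt s}$, so they contribute the same amount to both sizes; the only possible discrepancy lies in the memory devices. The key observation is that concatenation of pairwise disjoint m-contexts is additive on sizes, so $|\mathcal{A}|=|\overline{\mathcal{S}\cdot\mathcal{A}'}|=|\overline{\mathcal{S}}|+|\mathcal{A}'|$, while the size of an m-stack is the sum of the sizes of its component m-contexts, that is $|\mathcal{S}|=|\overline{\mathcal{S}}|$. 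Combining these,
$$|(\phi)^{\tt s}|=|\mathcal{S}|+|\mathcal{C}|+|\mathcal{A}'|+|\N|=|\overline{\mathcal{S}}|+|\mathcal{A}'|+|\mathcal{C}|+|\N|=|\mathcal{A}|+|\mathcal{C}|+|\N|=|\phi|,$$
so every translated configuration has exactly the same size as its source, which gives the stated inequality a fortiori.

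Taking the maximum over $\phi_1,\ldots,\phi_k$ then yields ${\tt space}_s(\M)\le{\tt space}(\M)$ (indeed equality). The main delicate point is not the arithmetic but the bookkeeping on the two memory devices: I must ensure that the m-contexts stacked in $\mathcal{S}$ stay pairwise disjoint and disjoint from $\mathcal{A}'$ throughout the computation, since this is exactly what makes $\overline{\mathcal{S}\cdot\mathcal{A}'}$ a well-formed m-context and the size additive, and I must ensure that iterating Lemma~\ref{lem:left-depth-first} produces the genuine $\mathrm{k}_{\mathcal{B}}^{\mathcal{C}}$-run, with no extra intermediate configurations. Both facts follow from the fresh-variable side condition of the $(\beta)$ rule and from the precise matching between the small step rules $(\ifz)$, $(r_\0)$, $(r_\1)$ and the two premises of the big step rules $(\ifz\0)$, $(\ifz\1)$, already verified in the proof of Lemma~\ref{lem:left-depth-first}.
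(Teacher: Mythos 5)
Your proposal is correct and follows essentially the same route as the paper, which proves this lemma directly from the definition of the translation $(-)^{\tt s}$ and Lemma~\ref{lem:left-depth-first}; you have simply made explicit the size bookkeeping (in fact establishing equality of configuration sizes) and the determinism argument that identifies the translated sequence with the actual $\mathrm{k}_{\mathcal{B}}^{\mathcal{C}}$-run.
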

\begin{proof}
 By definition of the translation $(-)^{\tt s}$ and Lemma \ref{lem:left-depth-first}.
\end{proof}
% The above lemma proves that also in the big step machine every configuration
% depends only on the previous one, where obviously the notion of previous configuration
% is not immediate.
% By returning to the computation of Example \ref{compEx}, it is worth 
% noting that to pass from the configuration $\phi$ to the configuration 
% $\psi$ all necessary information are already present in the configuration 
% $\phi$ itself. We can view such a step as a $\reddelta$ step
% $(\lif{\0}{\x_1}{\x_1})^{\mathcal{A}_{3}}\reddelta (\x_1)^{\mathcal{A}_3}$
% noting that $(\x_1)^{\mathcal{A}_3}\equiv(\x_1)^{\mathcal{A}_2}$.
% In fact this can be generalized, so in this sense we don't need neither 
% mechanism for backtracking nor the memorization of parts of the computation 
% tree.\\
So from now on we can restrict our attention 
to prove the polynomial space measure soundness in the case of the big step evaluation machine.

%\begin{example}
% JY :  I am lost because \phi is on the top of the example, I
% should write it on the node below \psi.

%\end{example}
%\input{example}
\subsection{Space Measures}
In this subsection we will connect the space measure of the big step machine with the one of the typable term to be evaluated. In particular, we emphasize the relations between machine computations and type derivations.\\
In what follows we introduce some relations between 
the size of the contexts and the behaviour of the machine, which will be useful later.
\begin{definition}Let $\nabla$ be a computation and $\phi\in \nabla$ a configuration. Then:
\begin{itemize}
\item the symbol $\#_{\beta}(\phi)$ denotes the number of applications of the $(\beta)$ 
rule in $\pathi(\phi)$,
\item the symbol $\#_{h}(\phi)$ denotes the number of applications of the $(h)$ rule in $\pathi(\phi)$, 
\item the symbol $\#_{\tt if}(\phi)$ denotes the number of applications of $(\tt if\ \0)$ and
$(\tt if \ \1)$ rules in $\pathi(\phi)$. 
\end{itemize}
\end{definition}
The cardinality of the contexts in a configuration $\phi$ is a measure of the number of some rules performed by the machine in the path to reach $\phi$.
\begin{lemma}
\label{size_number}
Let $\nabla \mathop{::}\models \M\ev {\tt b}$ be a computation. Then, for each configuration
$\phi \Yright \mathcal{C}_{i},\mathcal{A}_i\models {\tt P}_i\ev {\tt b}'\in \nabla$:
 \begin{enumerate}
\item  $\#(\mathcal{A}_i)= \#_{\beta}(\phi)$
\item  $\#(\mathcal{C}_i)= \#_{\tt if}(\phi)$ 
\end{enumerate}
\end{lemma}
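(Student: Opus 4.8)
The plan is to prove both equalities simultaneously by induction on the length of $\pathi_{\nabla}(\phi)$, reading the computation from the initial configuration (the conclusion, i.e. the root of $\nabla$) towards the leaves. The guiding intuition is that the two memory devices grow independently and each records, in its cardinality, the number of rules of one specific kind encountered so far: the m-context $\mathcal{A}$ is enlarged by exactly one assignment precisely when a $(\beta)$ rule is traversed, while the $\bt$-context $\mathcal{C}$ is enlarged by exactly one atomic context precisely when a conditional rule descends into the premise that evaluates its test. So proving the lemma amounts to checking that these are the only events that modify the respective cardinalities and that they are in lockstep with $\#_{\beta}$ and $\#_{\tt if}$.

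For the base case, $\phi$ is the initial configuration $\circ,\varepsilon\models\M\ev{\tt b}$, whose path contains no rule application; here $\#(\mathcal{A})=\#(\varepsilon)=0=\#_{\beta}(\phi)$ and $\#(\mathcal{C})=\#(\circ)=0=\#_{\tt if}(\phi)$, so both claims hold. For the inductive step, let $\phi'$ be the configuration immediately preceding $\phi$ on $\pathi_{\nabla}(\phi)$, so that $\phi$ is a premise of the rule $R$ whose conclusion is $\phi'$; since $\pathi_{\nabla}(\phi')$ is a proper prefix of $\pathi_{\nabla}(\phi)$, the induction hypothesis gives $\#(\mathcal{A}_{\phi'})=\#_{\beta}(\phi')$ and $\#(\mathcal{C}_{\phi'})=\#_{\tt if}(\phi')$. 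I would then case on $R$ according to Table \ref{BigStepB}. If $R$ is $(\beta)$, the premise carries $\mathcal{A}_{\phi}=\mathcal{A}_{\phi'}@\{\x':=\N\}$ with $\x'$ fresh, and $\mathcal{C}_{\phi}=\mathcal{C}_{\phi'}$; hence $\#(\mathcal{A}_{\phi})=\#(\mathcal{A}_{\phi'})+1$, and since exactly one $(\beta)$ has been added, $\#_{\beta}(\phi)=\#_{\beta}(\phi')+1$, while $\mathcal{C}$ and $\#_{\tt if}$ are untouched. If $R$ is $(h)$, the premise reuses the same $\mathcal{A}$ and $\mathcal{C}$ and no $(\beta)$ or conditional rule is added, so all four quantities are unchanged and the equalities persist.

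The interesting case is a conditional rule $(\ifz\0)$ or $(\ifz\1)$, where the two premises must be treated separately. Descending into the test premise replaces $\mathcal{C}_{\phi'}$ by $\mathcal{C}_{\phi'}[(\lif{[\circ]}{\N_{\0}}{\N_{\1}}){\tt V}_1\cdots{\tt V}_m]$, so $\#(\mathcal{C})$ increases by one, matching the one conditional contributed to $\#_{\tt if}$, while $\mathcal{A}$ and $\#_{\beta}$ stay fixed. Descending instead into the selected-branch premise restores the $\bt$-context of the conclusion, so the relevant conditional has been resolved and $\mathcal{C}$, $\mathcal{A}$, $\#_{\beta}$ and $\#_{\tt if}$ all remain as at $\phi'$. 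In every case the two invariants $\#(\mathcal{A})=\#_{\beta}$ and $\#(\mathcal{C})=\#_{\tt if}$ are preserved, which closes the induction.

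The delicate point, and the heart of the argument, is precisely this asymmetry of the conditional rule: only the test branch enlarges $\mathcal{C}$, whereas the branch that continues the computation reuses the original $\bt$-context. Recognising that $\#(\mathcal{C}_{\phi})$ therefore measures exactly the conditionals whose test is still under evaluation along $\pathi_{\nabla}(\phi)$ — and aligning the counter $\#_{\tt if}$ with this reading — is what makes claim (2) work; by contrast, claim (1) and the $(\beta)$/$(h)$ cases are routine once the induction on the path length is in place.
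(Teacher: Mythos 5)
Your proof is correct and takes essentially the same route as the paper, which disposes of both items with ``easy, by induction on the length of $\pathi(\phi)$'' plus the observation that m\mbox{-}contexts grow only at $(\beta)$ and $\bt$\mbox{-}contexts only at the conditional rules; your case analysis is just the worked-out version of that. The one point worth recording is the caveat you yourself raise: the equality in item~2 holds only if $\#_{\tt if}(\phi)$ is read as counting the conditionals whose \emph{test} premise lies on $\pathi(\phi)$ --- under the literal reading (every $(\ifz\0)/(\ifz\1)$ application crossed, including those where the path descends into the selected branch) one only gets $\#(\mathcal{C}_i)\leq \#_{\tt if}(\phi)$, as the paper's own computation in Table~\ref{TableCompEx} shows, and that inequality is in fact all that the later size bounds (Lemmas~\ref{m-context-size} and~\ref{grow}) require.
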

\begin{proof}$ $
\begin{enumerate}
\item Easy, by induction on the length of $\pathi(\phi)$,
since m-contexts can grow only by applications 
of the $(\beta)$ rule.
\item Easy, by induction on the length of $\pathi(\phi)$,
since $\bt$-contexts can grow only by applications 
of $(\ifz\0)$ and $(\ifz\1)$ rules. \qed
\end{enumerate}
\end{proof}
The following is a key property for proving soundness.
\begin{property} 
\label{property_subterms}
Let $\M\in \prog$ and 
$\nabla \mathop{::}\models \M\ev {\tt b}$ then for each
$\phi\Yright \mathcal{C},\mathcal{A}\models {\tt P}\ev
{\tt b}'\in \nabla$ if 
$\{x_j:=\N_j\}\in\mathcal{A}$ then $\N_j$ is
an  instance (possibly with fresh variables) of a subterm of $\M$.
\end{property}
\begin{proof}
The property is proven by contradiction. Take the configuration $\psi$ with 
minimal path 
from it to the root of $\nabla$, such that in its m-context $\mathcal{A_\psi}$
there is $\x_j:=\N_j$, where $\N_j$ is not an instance of a 
subterm of $\M$. Let $p$ be the length of this path. 
Since the only rule
that makes the m-context grow is a $(\beta)$ rule we are in a situation like
the following:
$$
\infer{
\mathcal{C},\mathcal{A}'\models (\lambda \x.{\tt P})\N_j {\tt V}_1\cdots{\tt V}_n \ev {\tt b}}{
\psi\Yright\mathcal{C},\mathcal{A}'@\{\x_j:=\N_j\}\models { \tt P}[\x_j/\x]{\tt V}_1\cdots{\tt V}_n \ev {\tt b}}
$$
If $\N_j$ is not an instance of a subterm of $\M$ it has been obtained
by a substitution.
Substitutions can be made only through applications of rule $(h)$
replacing the head variable.
Hence, by the shape of  $(\lambda \x.{\tt P})\N_j {\tt V}_1\cdots{\tt V}_n$,
the only possible situation is that there exists an application of rule $(h)$ 
as:
$$
\infer{
\mathcal{C},\mathcal{A}'\models \y{\tt V'}_1\cdots{\tt V'}_m \ev {\tt b}}
{
[\y:={\tt M'}]\in \mathcal{A}'& \mathcal{C},\mathcal{A}'\models {\tt M}' {\tt V'}_1\cdots{\tt V'}_m \ev {\tt b}}
$$
with $\N_j$ a subterm of $\tt M'$. But this implies $\tt M'$ is not an instance
of a  subterm of $\M$ 
and it has been introduced by a rule 
of a path of length less than 
$p$, contradicting the hypothesis.
\end{proof}
The next lemma gives upper bounds to the size of the m-context, 
of the $\bt$-context and
of the subject of a configuration.\\
%A similar upper bound for the size of $\bt$-contexts can be shown
% but in a more difficult way:
%it will be given in the next section.
\begin{lemma}
\label{m-context-size}
 Let $\M\in \prog$ and 
$\nabla ::\models \M\ev {\tt b}$ then for each configuration
$\phi\Yright\mathcal{C},\mathcal{A}\models {\tt P}\ev {\tt b}'\in \Pi$: 
\begin{enumerate}
\item $|\mathcal{A}|\leq \#_{\beta}(\phi)(|\M|+1)$
\item $|{\tt P}|\leq (\#_{h}(\phi)+1)|\M|$
\item $|\mathcal{C}|\leq \#_{\tt if}(\phi)(\max\{|\N| \ |\  
\psi \Yright\mathcal{C}',\mathcal{A}'\models {\tt N}\ev {\tt b}''\in  \pathi(\phi)
 \})$
\end{enumerate}
\end{lemma}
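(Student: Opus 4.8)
The plan is to prove the three inequalities independently, each by exploiting the fact that along a single path the subject and the two contexts can only change in tightly controlled ways, together with two facts already available: Lemma~\ref{size_number}, which identifies $\#(\mathcal{A})$ with $\#_{\beta}(\phi)$ and $\#(\mathcal{C})$ with $\#_{\tt if}(\phi)$, and Property~\ref{property_subterms}, which guarantees that every term $\N_j$ recorded in the m-context is an instance of a subterm of $\M$ and hence satisfies $|\N_j|\leq|\M|$ (renaming of variables preserves size). For point~(1) these two facts combine directly: by Lemma~\ref{size_number}.1 the m-context $\mathcal{A}$ consists of exactly $\#_{\beta}(\phi)$ assignments $\x_j:=\N_j$, each contributing $|\N_j|+1\leq|\M|+1$ to $|\mathcal{A}|$ by Property~\ref{property_subterms}; summing over all assignments gives $|\mathcal{A}|\leq\#_{\beta}(\phi)(|\M|+1)$.

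For point~(2) I would argue by induction on the length of $\pathi(\phi)$, inspecting how each rule modifies the subject. In the initial configuration the subject is $\M$ and $\#_{h}(\phi)=0$, so $|{\tt P}|=|\M|\leq(\#_{h}(\phi)+1)|\M|$. For the inductive step, a direct size computation shows that a $(\beta)$ step and both premises of an $(\ifz)$ rule never enlarge the subject while leaving $\#_{h}(\phi)$ unchanged, so the bound is preserved. The only rule that can enlarge the subject is $(h)$, which replaces the head variable $\x$ by the associated term $\N$: this increases the subject size by $|\N|-1\leq|\M|$, again thanks to Property~\ref{property_subterms}, while incrementing $\#_{h}(\phi)$ by one. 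Since the right-hand side $(\#_{h}(\phi)+1)|\M|$ grows by exactly $|\M|$ under such a step, the increase is absorbed and the bound follows.

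Point~(3) is the most delicate, and I expect the structural analysis of the $\bt$-context to be the main obstacle. The idea is to decompose $\mathcal{C}$ into its $\#(\mathcal{C})=\#_{\tt if}(\phi)$ (by Lemma~\ref{size_number}.2) nested atomic contexts of the shape $(\lif{[\circ]}{\N_{\0}}{\N_{\1}}){\tt V}_1\cdots{\tt V}_n$. A first auxiliary computation, by induction on the nesting, shows that plugging one atomic context into the hole of another replaces a single variable occurrence (of size $1$) by a subcontext, so the total size $|\mathcal{C}|$ is bounded above by the sum of the sizes of the individual atomic contexts. Each such atomic context was pushed by an application of an $(\ifz)$ rule at some configuration $\psi_i$; since an atomic context survives into $\mathcal{C}$ only along the left, test-evaluating branch of its $(\ifz)$ rule, that $\psi_i$ necessarily lies on $\pathi(\phi)$, and its subject is exactly $(\lif{\M_i}{\N_{\0}}{\N_{\1}}){\tt V}_1\cdots{\tt V}_n$ with $|\M_i|\geq1$. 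Replacing the test $\M_i$ by the hole can only decrease size, so the pushed atomic context has size at most $|(\lif{\M_i}{\N_{\0}}{\N_{\1}}){\tt V}_1\cdots{\tt V}_n|$, one of the quantities ranged over in the stated maximum. Summing the $\#_{\tt if}(\phi)$ contributions yields the claim. The care required is precisely in justifying the additivity of the atomic-context sizes and in verifying that each $\psi_i$ indeed belongs to $\pathi(\phi)$.
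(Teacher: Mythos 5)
Your proof is correct and follows essentially the same route as the paper's: the paper likewise disposes of all three points by inspecting which rules can enlarge the m-context, the subject, and the $\bt$-context, and then invoking Lemma~\ref{size_number} and Property~\ref{property_subterms}. You simply spell out the rule-by-rule size accounting (and the decomposition of $\mathcal{C}$ into atomic contexts) that the paper leaves implicit.
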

\begin{proof}\
\begin{enumerate}
\item  By inspection of the  rules of Table \ref{BigStepB} 
it is easy to verify that m-contexts can grow only by applications 
of the $(\beta)$ rule.
So the conclusion follows by Lemma \ref{size_number}.1 and Property \ref{property_subterms}.
\item By inspection of the rules of Table \ref{BigStepB} 
it is easy to verify that the subject can grow only by substitutions through 
applications of the $(h)$ rule. So 
the conclusion follows by Property \ref{property_subterms}.
\item By inspection of the rules of Table \ref{BigStepB} 
it is easy to verify that $\bt$-contexts can grow only by applications 
of $(\tt if \ \0)$ and
$(\tt if \ \1)$ rules.
So the conclusion follows directly by Lemma \ref{size_number}.2. \qed
\end{enumerate}
\end{proof}

%%% Local Variables: 
%%% mode: latex
%%% TeX-master: "main"
%%% End: 

\section{PSPACE Soundness}
\label{correctness}
In this section we will show that $\BSTA$ is correct for polynomial space 
computations.
The degree of a type derivation, i.e. the maximal nesting of applications
of the rule $(sp)$ in it, is the key notion in order to obtain the 
correctness. In fact, we will prove that each program typable through 
a derivation with degree $d$ can be executed on the machine
$\mathrm{K}_{\mathcal{B}}^{\mathcal{C}}$ in space polynomial in its size, where
the maximum exponent of the polynomial is $d$. 
%The degree of a derivation
%counts the maximum nesting of applications of the rule $(sp)$ in it.
So, by considering fixed degrees we get PSPACE soundness. 
Considering a fixed $d$ is not a limitation. Indeed 
until now, in $\BSTA$ programs we have not distinguished between the program code and 
input data. But it will be shown in Section \ref{completeness}
that data types are typable through derivations with degree 0. Hence,
the degree can be considered as a real characteristic of the program code.\\
Moreover, every $\BSTA$ program can be typed through 
derivations with different degrees, nevertheless for each program
there is a sort of minimal derivation for it, with respect to the degree.
So, we can stratify programs with respect to the degree of their derivations,
according to the following definition.
\begin{definition}\
\begin{enumerate}
\item
Let $\Pi$ be a type derivation. The \emph{degree} of $\Pi$, denoted ${\tt d}(\Pi)$ 
is the maximal nesting of 
applications of rule $(sp)$ in $\Pi$. 
It is inductively defined on the height of $\Pi$ as follows:
\begin{itemize}
\item if $\Pi$ consists of a $(Ax)$ or of a $(\bt_{\tt b}I)$  rule
then ${\tt d}(\Pi)=0$
\item if $\Pi$ ends by a rule 
$$
\infer[(R)]{\Gamma\vdash \M:\sigma}{\Sigma}
$$
where $(R)\in\{(w),(\lin I),(m),(\forall E),(\forall I)\}$ then
${\tt d}(\Pi)={\tt d}(\Sigma)$ 
\item 
If $\Pi$ ends by a rule
$$
\infer[(\lin E)]{\Gamma,\Delta\vdash \M\N:A}{\Sigma\pder\Gamma\vdash \M:\sigma\lin A & \Theta\pder\Delta \vdash \N:\sigma}
$$
then 
${\tt d}(\Pi)=\max\{{\tt d}(\Sigma),{\tt d}(\Theta) \}$
\item 
If $\Pi$ ends by a rule
$$
\infer[(\bt E)]{\Gamma\vdash \lif{\M}{\N_\0}{\N_\1}:\sigma}{\Sigma\pder\Gamma\vdash \M:\bt & \Theta_\0\pder\Gamma \vdash \N_\0:\sigma & \Theta_\1\pder\Gamma\vdash \N_\1:\sigma}
$$
then 
${\tt d}(\Pi)=\max\{{\tt d}(\Sigma),{\tt d}(\Theta_\0),{\tt d}(\Theta_\1) \}$
\item if $\Pi$ ends by a rule 
$$
\infer[(sp)]{!\Gamma\vdash \M:!\sigma}{\Sigma\pder \Gamma\vdash \M:\sigma}
$$
then
${\tt d}(\Pi)={\tt d}(\Sigma)+1$ 
\end{itemize}
\item
For each $d\in\mathbb{N}$ the set $\prog_d$ is the set of $\BSTA$ programs typable through derivation
with degree $d$. $$\prog_d=\{ \M \mid\ \Pi\pder \vdash \M:\bt\ \land\
{\tt d}(\Pi)=d\}$$
\end{enumerate}
\end{definition}
Clearly $\mathcal{P}$ corresponds to the union for $n\in \mathbb{N}$
of the different $\mathcal{P}_n$. Moreover if $\M\in \mathcal{P}_d$ then 
$\M\in \mathcal{P}_e$ for every $e\geq d$.\\

This section is divided into two subsections. In the first, we will prove an intermediate result, namely
we will give the notion of space weight of a derivation, and we will prove that
the subject reduction does not increase it. 
Moreover, this result is extended to the machine $\mathrm{K}_{\mathcal{B}}^{\mathcal{C}}$.
In the second subsection, the 
soundness with respect to PSPACE will be proved. 
\subsection{Space and $\BSTA$}
We need to define measures
of both terms and proofs, which are an adaptation of those given by Lafont in
\cite{Lafont04}.
\begin{definition}$ $
\begin{itemize}
\item 
The \emph{rank} of a rule $(m)$:
$$\infer[(m)]{\A, {\tt x}:!\sigma\der {\tt M}[{\tt x}/{\tt x}_1,\cdots,{\tt x}/{\tt x}_n ]:\mu}
{\A,{\tt x}_1:\sigma,\ldots,{\tt x}_n:\sigma\der {\tt M}:\mu}$$ 
is the number $k\leq n$ of variables $\x_i$ such that $\x_i$ belongs to the free
variables of $\M$.
Let $r$ be the
the maximal 
rank of a rule $(m)$ in $\Pi$.
Then, the rank of $\Pi$ is ${\tt rk}(\Pi)=\max(r,1)$.
%If $r\geq 1$, then the rank ${\tt rk}(\Pi)$ of $\Pi$ is $r$, otherwise the rank ${\tt rk}(\Pi)$ of $\Pi$ is $1$.
\item 
Let $r$ be a natural number. The \emph{space weight} $\delta(\Pi,r)$ of $\Pi$ with respect
to $r$ is defined inductively as follows:
\begin{itemize}
\item If $\Pi$ consists of a $(Ax)$ or of a $(\bt_{\tt b} I)$ rule, then 
$\delta(\Pi,r)=1$.
\item
If $\Pi$ ends by a rule
$$
\infer[(\lin I)]{\Gamma \der \lambda \x.\M:\sigma \lin A }
{\Sigma\pder \Gamma,\x:\sigma\der \M:A }
$$ 
then $\delta(\Pi,r)=\delta(\Sigma,r)+1$.
\item
If $\Pi$ ends by a rule
$$
\infer[(sp) ]{!\Gamma \der \M:!\sigma }{\Sigma\pder \Gamma \der \M:\sigma  } 
$$
then $\delta(\Pi,r)=r \delta(\Sigma,r)$.
\item
If $\Pi$ ends by a rule
$$
\infer[(\lin E)]{\Gamma,\Delta \der \M\N:A }{\Sigma\pder \Gamma \der \M:\mu\lin A &
\Theta\pder \Delta\der \N:\mu}
$$
then $\delta(\Pi,r)=\delta(\Sigma,r)+  \delta(\Theta,r)+1$.
\item 
If $\Pi$ ends by a rule
$$
\infer{\A \der \lif{{\tt M}}{{\tt N_{\0}}}{{\tt N_{\1}}}:A}
{\Sigma\pder \A\der {\tt M}:\bt &\Theta_{\0}\pder \A\der {\tt N_{\0}}:A  & \Theta_{\1}\pder \A\der {\tt N_{\1}}:A}
$$
then $\delta(\Pi,r)=\max \{\delta(\Sigma,r),\delta(\Theta_{\0},r),\delta(\Theta_{\1},r)\}+1$
\item
In any other case $\delta(\Pi,r)=\delta(\Sigma,r)$ where $\Sigma$
is the unique premise derivation.
\end{itemize}
\end{itemize}
\end{definition}
In order to prove that the subject reduction does not increase the space weight of a derivation,
we need to rephrase the Substitution Lemma taking into account this measure. 
\begin{lemma}[Weighted Substitution Lemma]
\label{weightedSubstitution}
Let $\Pi\pder \A, \x:\mu \der \M:\sigma$ and $\Sigma\pder \B \der \N:\mu$ 
such that $\A\#\B$. 
There exists $\Theta \pder \A,\B \der \M[\N/\x]:\sigma$ such that if 
$r\geq {\tt rk}(\Pi)$:
$$\delta(\Theta,r)\leq \delta(\Pi,r)+\delta(\Sigma,r)$$
\end{lemma}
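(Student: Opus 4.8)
The plan is to reprove the Substitution Lemma (Lemma~\ref{lem:substitution}) while keeping track of the space weight, proceeding by the \emph{same} induction on the height of $\x$ in $\Pi$. The derivation $\Theta$ I build is exactly the one produced in Lemma~\ref{lem:substitution}; the only new content is the inequality on $\delta$, which I verify rule by rule using the recursive clauses defining $\delta(\cdot,r)$. Throughout, whenever I invoke the induction hypothesis on a premise derivation $\Pi'$ of $\Pi$, I use that $\rk(\Pi')\le\rk(\Pi)\le r$ (the $(m)$ rules of $\Pi'$ are a subset of those of $\Pi$), so the rank side-condition is met with the same $r$.

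For the base cases $(Ax)$ and $(w)$ the bound is immediate, since $\delta(\Theta,r)$ equals $\delta(\Sigma,r)$ or $\delta(\Pi',r)$, both $\le\delta(\Pi,r)+\delta(\Sigma,r)$. The cases where $\Pi$ ends in $(\forall I)$, $(\forall E)$, $(\lin I)$ or $(\lin E)$ follow mechanically: each of these clauses of $\delta$ is additive in its premises (a sum, possibly $+1$), so the extra $\delta(\Sigma,r)$ supplied by the induction hypothesis is simply carried through; e.g. for $(\lin E)$ the extra weight lands in a single premise and $\delta(\Theta,r)=\delta(\Theta_1,r)+\delta(\Pi_2,r)+1\le\delta(\Pi,r)+\delta(\Sigma,r)$. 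The $(\bt E)$ case is the one that genuinely exploits the additive management of contexts: here $\x$ occurs in all three premises, the \emph{same} $\Sigma$ is substituted into each, and $\delta$ takes a $\max$; applying the induction hypothesis to each branch gives $\delta(\Theta,r)=\max_i\delta(\Theta_i,r)+1\le\max_i\delta(\Pi_i,r)+\delta(\Sigma,r)+1=\delta(\Pi,r)+\delta(\Sigma,r)$, so the three simultaneous substitutions cost only one $\delta(\Sigma,r)$. The $(sp)$ case uses Lemma~\ref{lem:generation}.3 to reshape $\Sigma$ into an $(sp)$ over $\Sigma'\pder\B'\der\N:\mu'$ followed by weight-neutral $(w)/(m)$ rules, giving $\delta(\Sigma,r)=r\,\delta(\Sigma',r)$; the induction hypothesis yields $\Theta'$ with $\delta(\Theta',r)\le\delta(\Pi',r)+\delta(\Sigma',r)$, and re-applying $(sp)$ multiplies by $r$, so $\delta(\Theta,r)=r\,\delta(\Theta',r)\le r\,\delta(\Pi',r)+r\,\delta(\Sigma',r)=\delta(\Pi,r)+\delta(\Sigma,r)$, where $r\ge 1$ is used.

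The main obstacle is the $(m)$ case, which is exactly where the rank enters. As in Lemma~\ref{lem:substitution}, I reshape $\Sigma$ via Lemma~\ref{lem:generation}.3 so that $\delta(\Sigma,r)=r\,\delta(\Sigma',r)$, and I replace the axioms for the contracted variables $\x_1,\dots,\x_m$ by fresh copies $\Sigma'_j$ of $\Sigma'$, each with $\delta(\Sigma'_j,r)=\delta(\Sigma',r)$, finishing with weight-neutral $(m)$ and $(w)$ rules. The point to get right is the \emph{count}: a fresh copy is needed only for the $\x_j$ that occur free in $\M$, and this number is precisely the rank $k$ of the final $(m)$ rule, whence $k\le\rk(\Pi)\le r$. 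Applying the induction hypothesis once per such variable and summing the contributions gives $\delta(\Theta,r)\le\delta(\Pi',r)+k\,\delta(\Sigma',r)\le\delta(\Pi',r)+r\,\delta(\Sigma',r)=\delta(\Pi,r)+\delta(\Sigma,r)$.

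The delicate bookkeeping, which I expect to be the hardest part, is twofold. First, I must justify that the $k$ substitutions may be charged \emph{independently}: they act on disjoint axiom occurrences, so each contributes at most $\delta(\Sigma',r)$ and the total added weight is bounded by $k\,\delta(\Sigma',r)$ rather than compounding; in doing the substitutions one at a time I must also check that the rank of the intermediate derivations stays $\le r$, so that the induction hypothesis remains applicable. Second, I must verify that the reshaping $\Sigma\leadsto\Sigma''$ of Lemma~\ref{lem:generation}.3 is weight-preserving, which holds because it only commutes, deletes or inserts weight-neutral structural rules, leaving every $(sp)$ (and hence every multiplicative contribution to $\delta$) in place. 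It is precisely in this case that the hypothesis $r\ge\rk(\Pi)$ is used in an essential way, bounding the number of copies of $\Sigma'$ against the factor $r$ coming from the $(sp)$ clause of $\delta$.
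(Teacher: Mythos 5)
Your proposal is correct and follows essentially the same route as the paper's proof: the same induction on the height of $\x$ in $\Pi$, the same use of Lemma~\ref{lem:generation}.3 in the $(sp)$ and $(m)$ cases, and the same $\max$-based argument showing that $(\bt E)$ costs only one copy of $\delta(\Sigma,r)$. Your bookkeeping in the $(m)$ case is in fact slightly tighter than the paper's (charging only the $k\le\rk(\Pi)$ variables actually free in $\M$, where the paper loosely writes $m$), but this is a refinement of the same argument, not a different one.
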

\begin{proof}
%We postpone the proof to Appendix \ref{proofweightedsubstitution}.
It suffices to verify how the weights are modified by the proof
of Lemma \ref{lem:substitution}. We proceed 
by induction on the height of $\x$ in $\Pi$.
%We will use exactly the same notation as in
%the lemma.\\
Base cases are trivial and in the cases where $\Pi$ ends by 
$(\lin I),(\forall I), (\forall E)$ and $(\lin E)$ rules the conclusion follows 
directly by induction hypothesis.\\
Consider the case $\Pi$ ends by: 
$$\infer[(sp)]{\A, \x:\mu \der \M:\sigma}{\Pi'\pder \Gamma', \x:\mu' \der \M:\sigma'}$$ 
Then 
by Lemma \ref{lem:generation}.3 $\Sigma\leadsto \Sigma ''$ which is composed 
by 
a subderivation ending with an $(sp)$ rule with premise $\Sigma'\pder \B'\der \N:\mu' $ followed by a sequence of rules $(w)$ and/or $(m)$. By induction 
hypothesis we have a derivation $\Theta'\pder \A',\B'\der \M[\N/\x]:\sigma' $.
By applying the rule $(sp)$ and the sequence of $(w)$ and/or $(m)$ rules we obtain 
$\Theta\pder \A,\B\der \M[\N/\x]:\sigma$. Now, $\delta(\Pi,r)=r\delta(\Pi',r)$ and 
$\delta(\Sigma,r)=r\delta(\Sigma',r)$.
By the induction hypothesis
$\delta(\Theta',r)\leq \delta(\Pi',r)+\delta(\Sigma',r)$
and applying $(sp)$:
$$\delta(\Theta,r)\leq r(\delta(\Pi',r)+\delta(\Sigma',r))=\delta(\Pi,r)+\delta(\Sigma,r)$$
Consider the case $\Pi$ ends by:
$$
\infer[(\bt E)]{\A,\x:\mu \der \lif{{\M_{0} }}{{\M_{1}}}{{\M_{2}}}:A}{\Pi_{0}\pder \A,\x:\mu\der {\M_{0}}:\bt &  \Pi_{1}\pder \A,\x:\mu\der {\M_{1}}:A  &\Pi_{2}\pder \A,\x:\mu\der {\M_{2}}:A}
$$
Then, by the induction hypothesis there are derivations
$\Theta_0\pder \Gamma,\Delta \der\M_{0}[\N/\x]:\bt$, 
$\Theta_1\pder \Gamma,\Delta \der\M_{1}[\N/\x]:A$ and 
$\Theta_2\pder \Gamma,\Delta \der\M_{2}[\N/\x]:A$
such that 
$\delta(\Theta_i,r)\leq \delta(\Pi_i,r)+\delta(\Sigma,r)$ 
for $0\leq i\leq 2$.
By applying a $(\bt E)$ rule we obtain a derivation $\Theta$ with conclusion:
$$\Gamma,\B \der \lif{{\M_{0}[\N/\x] }}{{\M_{1}[\N/\x]}}{{\M_{2}[\N/\x]}}:A$$
Since, by definition $\delta(\Pi,r)=\max_{0\leq i\leq 2}(\delta(\Pi_i,r))+1$, we have
$$\delta(\Theta,r)\leq\max_{0\leq i\leq 2}(\delta(\Pi_i,r)+\delta(\Sigma,r))+1= \max_{0\leq i\leq 2}(\delta(\Pi_i,r))+1+\delta(\Sigma,r)=\delta(\Pi,r)+\delta(\Sigma,r)$$
Consider the case $\mu\equiv !\mu'$ and  $\Pi$ ends by: 
$$
\infer[(m)]{\A, {\tt x}:!\mu'\der {\tt M}[{\tt x}/{\tt x}_1,\cdots,{\tt x}/{\tt x}_m ]:\sigma}{\Pi'\pder \A,{\tt x}_1:\mu',\ldots,{\tt x}_m:\mu'\der {\tt M}:\sigma}
$$
By Lemma \ref{lem:generation}.3, $\Sigma\leadsto \Sigma ''$ ending 
by an 
$(sp)$ rule with premise $\Sigma'\pder \B'\der \N:\mu' $ followed
by a sequence of  rules $(w)$ and/or $(m)$. 
Hence, $\delta(\Sigma,r)=r\delta(\Sigma',r)$.
Consider fresh copies of the derivation
$\Sigma'$ i.e. $\Sigma'_{j}\pder \B_{j}'\der \N_{j}:\mu' $
 where $\N_j$ and $\B_j'$ 
 are fresh copies of $\N$ and $\B'$ respectively, trivially 
$\delta(\Sigma',r)=\delta(\Sigma'_j,r)$ ($1 \leq j \leq m$).
\\
Let $\x_i$ be such that its height is maximal between the heights 
of all $\x_j$ ($1\leq j\leq m$).
By induction hypothesis there is a  derivation: 
$$\Theta_i\pder \A,\x_1:\mu',\ldots,\x_{i-1}:\mu',\x_{i+1}:\mu',\ldots,\x_m:\mu',\B'_i\der \M[\N_i/\x_i]:\sigma $$
%% where the height of $\x_i$ in $\Pi'$ is maximal 
%% between the heights of $\x_j$ in $\Pi'$ for $1 \leq j \leq m$.
%% Since the heights of $\x_j$ in $\Theta_i$ continue to be smaller
%% than the height of $\x$ in $\Pi$ we can repeatedly 
%% apply induction  hypothesis to obtain a derivation 
and since $\delta(\Pi,r)=\delta(\Pi',r)$, we have 
$\delta(\Theta_i,r)\leq \delta(\Pi',r)+\delta(\Sigma',r)$.
Then, we can repeatedly 
apply induction  hypothesis to obtain a derivation
$\Theta'\pder \A,\B'_1,\ldots ,\B'_m\der \M[\N_1/\x_1,\cdots,\N_m/\x_m]:\sigma$.
such that 
$\delta(\Theta',r)\leq \delta(\Pi',r)+m\delta(\Sigma',r)$ and since
$r\geq {\tt rk}(\Pi)$ then:
$$\delta(\Theta',r)\leq \delta(\Pi',r)+r\delta(\Sigma',r)
=\delta(\Pi,r)+\delta(\Sigma,r)$$
Finally
 by applying  repeatedly  the rules $(m)$ and $(w)$ 
that leave
the space weight $\delta$ unchanged,
the conclusion follows.
\end{proof}
% \begin{proof}
% %We postpone the proof to Appendix \ref{proofweightedsubstitution}.
% It suffices to verify how the weights are modified by the proof
% of Lemma \ref{lem:substitution}. 
% We will use exactly the same notation as in
% the lemma.\\
% Base cases are trivial and in the cases where $\Pi$ ends by 
% $(\lin I),(\forall I), (\forall E)$ and $(\lin E)$ rules the conclusion follows 
% directly by induction hypothesis.\\
% If $\Pi$ ends by $(sp)$ rule: 
% $\delta(\Pi,r)=r\delta(\Pi',r)$ and 
% $\delta(\Sigma,r)=r\delta(\Sigma',r)$.
% By the induction hypothesis
% $\delta(\Theta',r)\leq \delta(\Pi',r)+\delta(\Sigma',r)$
% and applying $(sp)$:
% $$\delta(\Theta,r)\leq r(\delta(\Pi',r)+\delta(\Sigma',r))=\delta(\Pi,r)+\delta(\Sigma,r)$$
% If $\Pi$ ends by $(\bt E)$: $\delta(\Pi,r)=\max_{0\leq i\leq 2}(\delta(\Pi_i,r))+1$.
%  By induction hypothesis we have derivations
% $\delta(\Theta_i,r)\leq \delta(\Pi_i,r)+\delta(\Sigma,r)$ for $0\leq i\leq 2$.
% and applying a $(\bt E)$ rule:
% $$\delta(\Theta,r)\leq\max_{0\leq i\leq 2}(\delta(\Pi_i,r)+\delta(\Sigma,r))= \max_{0\leq i\leq 2}(\delta(\Pi_i,r))+\delta(\Sigma,r)$$
% If $\Pi$ ends by $(m)$:
% $\delta(\Pi,r)=\delta(\Pi',r)$ and 
% $\delta(\Sigma,r)=r\delta(\Sigma',r)$.
% Clearly $\delta(\Sigma',r)=\delta(\Sigma'_j,r)$
% so
% $\delta(\Theta',r)\leq \delta(\Pi',r)+m\delta(\Sigma',r)$ and since
% $r\geq {\tt rk}(\Pi)$ then:
% $$\delta(\Theta',r)\leq \delta(\Pi',r)+r\delta(\Sigma',r)
% =\delta(\Pi,r)+\delta(\Sigma,r)$$
% Now the rules $(m)$ and $(w)$  leave
% the space weight $\delta$ unchanged hence the conclusion follows.
% \end{proof}

We are now ready to show that the space weight $\delta$ gives a bound on 
the number of both $\beta$ and $\tt if$ rules in a computation path of the machine 
$\mathrm{K}_{\mathcal{B}}^{\mathcal{C}}$.
%The following lemma show that the weight of a beta redex 

\begin{lemma}
\label{lambda-weight-decrease}
Let ${\tt P}\in \prog$ and $\nabla\mathop{::}\models {\tt P}\ev {\tt b}$. 
\begin{enumerate}
\item Consider an occurrence in $\nabla$ of the rule:
$$
\infer[(\beta)]{\mathcal{C},\mathcal{A}\models (\lambda \x. \M)\N{\tt V}_1\cdots {\tt V}_m\Downarrow {\tt b}}{
\mathcal{C},\mathcal{A}@ \{\x':=\N\}\models \M[\x'/\x] {\tt V}_1\cdots {\tt V}_m\Downarrow {\tt b} }
$$
Then, for every derivation %$\Sigma$ and $\Theta$ such that 
$\Sigma\pder \der ((\lambda \x. \M)\N{\tt V}_1\cdots {\tt V}_m)^{\mathcal{A}}:\bt$
there exists a derivation 
$\Theta\pder \der (\M[\x'/\x] {\tt V}_1\cdots {\tt V}_m)^{\mathcal{A}@ \{\x':=\N\}}:\bt$
such that for every $r\geq{\tt rk}(\Sigma)$: $$\delta(\Sigma,r)>\delta(\Theta,r)$$
\item Consider an occurrence in $\nabla$ of 
an $\tt if$ rule as:
$$
\infer[(\ifz\ {\tt b})]{\mathcal{C},\mathcal{A}\models(\lif{\M}{\N_{\0}}{\N_{\1}}){\tt V}_1\cdots {\tt V}_m \Downarrow \tt b'}
{\mathcal{C}',\mathcal{A}\models \M \Downarrow {\tt b}
& 
\mathcal{C},\mathcal{A}\models \N_{\tt b}{\tt V}_1\cdots{\tt V}_m \Downarrow {\tt b'}
}
$$
where $\mathcal{C}'\equiv \mathcal{C}[(\lif{[\circ]}{\N_{\0}}{\N_{\1}}){\tt V}_1\cdots {\tt V}_m]$. 
Then, for each derivation
%$\Sigma,\Theta$ and $\Pi$ such that 
$\Sigma\pder \der ((\lif{\M}{\N_{\0}}{\N_{\1}}){\tt V}_1\cdots {\tt V}_m )^{\mathcal{A}}:\bt$
there are derivations
$\Theta\pder \der (\M)^{\mathcal{A}}:\bt$ and
$\Pi\pder \der (\N_{\tt b}{\tt V}_1\cdots {\tt V}_m )^{\mathcal{A}}:\bt$
such that for every $r\geq{\tt rk}(\Sigma)$: 
$$\delta(\Sigma,r)>\delta(\Theta,r)\quad \textrm{and}\quad 
\delta(\Sigma,r)>\delta(\Pi,r)$$
\end{enumerate}
\end{lemma}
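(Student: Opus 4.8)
The plan is to reduce both items to statements about the \emph{closed} terms obtained by the closure maps, and then to prove each by induction on the number $m$ of trailing arguments $\V_1,\dots,\V_m$, generalising the conclusion from the ground type $\bt$ to an arbitrary judgement $\A\der\cdot:A$ so that the induction goes through (the actual lemma is the instance with $\A$ empty and $A\equiv\bt$). First I would record the closure bookkeeping. Since $\x'$ is fresh, the term reached by the $(\beta)$ rule satisfies $(\M[\x'/\x]\V_1\cdots\V_m)^{\mathcal{A}@\{\x':=\N\}}\equiv(\M[\N/\x]\V_1\cdots\V_m)^{\mathcal{A}}$, which is exactly the one-step $\beta$-reduct of $((\lambda\x.\M)\N\V_1\cdots\V_m)^{\mathcal{A}}$; and since the closure distributes over all term constructors, for the $\tt if$ rule the terms $(\M)^{\mathcal{A}}$ and $(\N_{\tt b}\V_1\cdots\V_m)^{\mathcal{A}}$ are literally the test and the selected branch sitting under the head redex of $((\lif{\M}{\N_\0}{\N_\1})\V_1\cdots\V_m)^{\mathcal{A}}$.

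For the second item I would prove: if $\Sigma\pder\A\der(\lif{\M}{\N_\0}{\N_\1})\V_1\cdots\V_m:A$, then $\Sigma$ contains a subderivation $\Theta\pder\A'\der\M:\bt$ and one can build $\Pi\pder\A\der\N_{\tt b}\V_1\cdots\V_m:A$ with $\delta(\Sigma,r)>\delta(\Theta,r)$ and $\delta(\Sigma,r)>\delta(\Pi,r)$. In the base case $m=0$, after using Lemma \ref{lem:generation} (and the fact that $\leadsto$ leaves $\delta$ unchanged) to assume $\Sigma$ ends in $(\bt E)$ with premises $\Sigma_\M,\Sigma_{\N_\0},\Sigma_{\N_\1}$, I take $\Theta\equiv\Sigma_\M$ and $\Pi\equiv\Sigma_{\N_{\tt b}}$; since $\delta(\Sigma,r)=\max\{\delta(\Sigma_\M,r),\delta(\Sigma_{\N_\0},r),\delta(\Sigma_{\N_\1},r)\}+1$, both strict inequalities follow from the $+1$ contributed by the $(\bt E)$ rule. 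In the inductive step I peel off the last argument through an $(\lin E)$ rule, apply the induction hypothesis to the function premise, and reapply $(\lin E)$ with the unchanged subderivation of $\V_m$; the added summand $\delta(\Sigma_{\V_m},r)+1$ is the same on both sides, so the strict inequality is preserved.

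For the first item I would prove the analogous spine statement for the head $\beta$-redex, again by induction on $m$. In the base case $m=0$ I assume, via Lemma \ref{lem:generation}.2, that $\Sigma$ ends in an $(\lin E)$ whose function premise ends in $(\lin I)$ with body premise $\Sigma_1''\pder\A_1,\x:\sigma\der\M:A$, and whose argument premise is $\Sigma_2\pder\A_2\der\N:\sigma$. The Weighted Substitution Lemma \ref{weightedSubstitution} then yields $\Theta\pder\A_1,\A_2\der\M[\N/\x]:A$ with $\delta(\Theta,r)\leq\delta(\Sigma_1'',r)+\delta(\Sigma_2,r)$ whenever $r\geq\rk(\Sigma)\geq\rk(\Sigma_1'')$, whereas $\delta(\Sigma,r)=(\delta(\Sigma_1'',r)+1)+\delta(\Sigma_2,r)+1$; hence $\delta(\Sigma,r)\geq\delta(\Theta,r)+2>\delta(\Theta,r)$. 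The inductive step peels the trailing argument exactly as in the second item, transporting the strict decrease through the extra $(\lin E)$.

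I expect the main obstacle to be the two bookkeeping points hidden behind the phrase \emph{assume $\Sigma$ ends in $(\lin E)$, respectively $(\bt E)$}. First, one must justify reorganising $\Sigma$ so that the outermost rule acting on the head redex is the expected elimination: this is handled, just as in the proof of Subject Reduction \ref{lem:subjectReduction}, by treating the pass-through rules $(w),(m),(\forall I),(\forall E)$ uniformly (apply the induction hypothesis to the premise and reapply the rule) and by observing that none of them, nor the commutations and weakenings recorded by $\leadsto$, changes the value of $\delta$; note also that $(sp)$ cannot be the last rule at these positions, since every type occurring along the application spine is linear, so no multiplicative factor $r$ is introduced. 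Second, one must check the closure and substitution identities relating $(\cdot)^{\mathcal{A}@\{\x':=\N\}}$ to $(\cdot)^{\mathcal{A}}$, which are routine once $\x'$ is taken fresh. The genuinely quantitative content is entirely local: it is the single $(\bt E)$ rule (for the $\tt if$ case) and the matched $(\lin I)$/$(\lin E)$ pair together with Lemma \ref{weightedSubstitution} (for the $\beta$ case) that supply the strict decrease, everything else being carried unchanged by the induction on $m$.
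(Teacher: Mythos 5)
Your proof is correct and follows essentially the same route as the paper's: induction on the number of trailing arguments, the Weighted Substitution Lemma combined with the matched $(\lin I)$/$(\lin E)$ pair for the $\beta$ case, and the $+1$ contributed by the $(\bt E)$ clause of $\delta$ for the $\tt if$ case (which the paper dispatches with ``directly by the definition of $\delta$''). The only divergence is cosmetic: the paper states the generalized claim for an arbitrary type $\sigma\equiv{!^{n}A}$ and explicitly carries the factor $r^{n}$ from a possible $(sp)^{n}$ prefix through the inequality, whereas you keep the induction invariant at linear types and observe that $(sp)$ can then never be the last rule along the application spine --- both treatments are sound.
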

\begin{proof}\ 
\begin{enumerate}
\item We proceed by induction on $m$.
Consider the case $m=0$. We need to prove that if
$\Pi\pder \A \der (\lambda \x.\M)\N:\sigma$, then there exists 
$\Pi'\pder \A\der \M[\N/\x]:\sigma $ with 
${\tt rk}(\Pi)\geq {\tt rk}(\Pi')$
such that for $r\geq {\tt rk}(\Pi)$:
$$
\delta (\Pi, r)> \delta (\Pi',r )
$$
Since $(\forall R),(\forall L),(m)$ and $(w)$ rules do not change 
the space weight $\delta$, without loss of generality 
we can assume that $\Pi$ ends as follows:
$$
\infer[(sp)^{n}]{!^{n}\A_1,!^{n}\A_2\der {\tt (\lambda \x. M)N}:!^{n}A}{
  \infer[(\lin E)]{\A_1,\A_2\der {\tt (\lambda x.M)N}:A}{
    \infer[(\lin I)]{\A_1\der {\tt \lambda x.M}:\sigma\lin A}{
      \Pi_1\pder \A_1, {\tt x}:\sigma\der {\tt M}:A  
    }
    & 
    \Pi_2\pder  \A_2 \der {\tt N}:\sigma 
  }
}
$$
where $\A_1\#\A_2$, $\A=!^{n}\A_1,!^{n}\A_2$, $\sigma\equiv !^{n}A$ for $n\geq 0$. 
Clearly, by definition of the space weight $\delta$, 
we have
$\delta(\Pi,r)=r^{n}(\delta(\Pi_1,r)+1 +\delta(\Pi_2,r))$.
Since $r\geq \rk(\Pi)\geq \rk(\Pi_1)$, by Lemma \ref{weightedSubstitution} there exists a derivation 
$\Pi_3 \pder \A\der \M[\N/\x]:A $
such that 
$
\delta (\Pi_3, r)\leq \delta(\Pi_1,r)+\delta(\Pi_2,r)
$. 
Hence, we can construct $\Pi'$ ending as:
$$
\infer[(sp)^{n}]{!^{n}\A_1,!^{n}\A_2\der {\tt \M[\N/\x]}:!^{n}A}{
  \Pi_3\pder \A,\B\der \M[\N/\x]:A
}
$$
Clearly $\delta (\Pi', r)\leq r^{n}(\delta(\Pi_1,r)+\delta(\Pi_2,r))<\delta(\Pi,r)$ and
so the conclusion follows.\\
The inductive step $m=k+1$ follows easily by the induction hypothesis.

\item It follows directly by the definition of the space weight 
$\delta$. \qed
 \end{enumerate}
 \end{proof}
It is easy to verify that $(h)$ rules leave the space weight 
unchanged, since $({\tt x}{\tt V}_1\cdots{\tt V}_m)^{\mathcal{A}}\equiv({\tt N}{\tt V}_1\cdots{\tt V}_m)^{\mathcal{A}}$ if $\{{\tt x}:={\tt N}\}\in \mathcal{A}$. Hence, a direct consequence of the above lemma is the following.
\begin{lemma}
\label{rulesnumber}
Let $\Pi\pder \M:\bt$ and $\nabla\mathop{::}\models \M\ev {\tt b}$. Then for each 
$\phi\in\nabla$ such that  $\phi\Yright \mathcal{C},\mathcal{A}\models\N\ev {\tt b}'$ if $r\geq \rk(\Pi)$:
$$\#_{\beta}(\phi)+\#_{\tt if}(\phi)\leq \delta(\Pi,r)$$
\end{lemma}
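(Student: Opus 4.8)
The plan is to fix an arbitrary $r\geq\rk(\Pi)$ and to prove, by induction on the length of $\pathi(\phi)$, a strengthened invariant: for every configuration $\phi\Yright \mathcal{C},\mathcal{A}\models \N\ev {\tt b}'$ in $\nabla$ there is a derivation $\Theta_\phi\pder \der (\N)^{\mathcal{A}}:\bt$ with $\rk(\Theta_\phi)\leq\rk(\Pi)$ such that
$$\#_{\beta}(\phi)+\#_{\tt if}(\phi)+\delta(\Theta_\phi,r)\leq \delta(\Pi,r).$$
Since every space weight is at least $1$, this invariant immediately yields the claimed bound $\#_{\beta}(\phi)+\#_{\tt if}(\phi)\leq\delta(\Pi,r)$. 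Carrying a derivation of the m-context-closed subject $(\N)^{\mathcal{A}}$ along the path is the crucial device: it is exactly the object whose weight Lemma \ref{lambda-weight-decrease} controls, and tracking its rank guarantees that Lemma \ref{lambda-weight-decrease} keeps applying with the same fixed $r$.

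For the base case $\phi$ is the root $\circ,\varepsilon\models \M\ev {\tt b}$; then $(\M)^{\varepsilon}\equiv\M$, one takes $\Theta_\phi=\Pi$, and since $\#_{\beta}(\phi)=\#_{\tt if}(\phi)=0$ the invariant holds. For the inductive step let $\psi$ be the parent of $\phi$ in $\nabla$ (so $\phi$ is a premise of the rule whose conclusion is $\psi$) and apply the induction hypothesis to $\psi$ to obtain $\Theta_\psi$. I would then split on the rule with conclusion $\psi$. If it is $(h)$, the closed subjects before and after coincide, since $({\tt x}{\tt V}_1\cdots{\tt V}_m)^{\mathcal{A}}\equiv(\N{\tt V}_1\cdots{\tt V}_m)^{\mathcal{A}}$ when $\{{\tt x}:=\N\}\in\mathcal{A}$, so $\Theta_\phi=\Theta_\psi$ works and $\#_{\beta},\#_{\tt if},\delta$ are all unchanged. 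If it is $(\beta)$, I apply Lemma \ref{lambda-weight-decrease}.1 to $\Sigma=\Theta_\psi$ (legitimate because $r\geq\rk(\Theta_\psi)$), getting $\Theta_\phi$ for the new closed subject with $\delta(\Theta_\phi,r)<\delta(\Theta_\psi,r)$ and $\rk(\Theta_\phi)\leq\rk(\Theta_\psi)$; since $\#_{\beta}$ increases by exactly $1$ while $\delta$ drops by at least $1$, the sum does not increase. If it is an $\tt if$ rule, I apply Lemma \ref{lambda-weight-decrease}.2 to $\Theta_\psi$, which produces strictly lighter derivations both of $(\M)^{\mathcal{A}}$ (the left premise, whose enlarged $\bt$-context does not enter the weight) and of $(\N_{\tt b}{\tt V}_1\cdots {\tt V}_m)^{\mathcal{A}}$ (the right premise); choosing $\Theta_\phi$ according to which premise $\phi$ is, the increase of $\#_{\tt if}$ by $1$ is again absorbed by the strict decrease of $\delta$.

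The only delicate bookkeeping is the rank condition. Because Lemma \ref{lambda-weight-decrease} requires $r$ to dominate the rank of the \emph{current} derivation, I must ensure that the ranks of the intermediate derivations $\Theta_\phi$ never exceed $\rk(\Pi)$ along the whole path, so that the single fixed $r$ remains admissible at every step. This follows from the constructions inside the proof of Lemma \ref{lambda-weight-decrease}: the $(\beta)$ case produces a derivation of non-increasing rank (the substitution of Lemma \ref{weightedSubstitution} introduces no $(m)$-rule of larger rank), and the $\tt if$ case merely extracts subderivations of $\Theta_\psi$. Hence $\rk(\Theta_\phi)\leq\rk(\Pi)\leq r$ holds throughout, and the induction goes through. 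I expect this rank-monotonicity, rather than the weight arithmetic (which is routine once the invariant is chosen), to be the main point requiring care.
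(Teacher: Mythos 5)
Your proposal is correct and is essentially the paper's own argument: the paper proves this lemma in one line by invoking Lemma \ref{lambda-weight-decrease} together with the preceding remark that $(h)$ steps leave the closed subject (hence its weight) unchanged, which is exactly the descent you carry out. Your explicit invariant $\#_{\beta}(\phi)+\#_{\tt if}(\phi)+\delta(\Theta_\phi,r)\leq\delta(\Pi,r)$ and the attention to rank non-increase along the path (guaranteed by the $\rk(\Pi)\geq\rk(\Pi')$ clauses in Lemma \ref{weightedSubstitution} and Property \ref{prop:weightedSubjectReduction}) simply make precise what the paper leaves implicit.
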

\begin{proof}
Easy, by Lemma \ref{lambda-weight-decrease}.
\end{proof}
Subject reduction does not increase the space weight.
\begin{property}
\label{prop:weightedSubjectReduction}
Let $\Pi\pder \A \der \M:\sigma$ and $\M\red^{*} \N$. Then there exists 
$\Pi'\pder \A\der \N:\sigma $ with ${\tt rk}(\Pi)\geq {\tt rk}(\Pi')$
such that for each $r\geq {\tt rk}(\Pi)$:
$$
\delta (\Pi, r)\geq \delta (\Pi',r )
$$
\end{property}
\begin{proof}
By Lemma \ref{weightedSubstitution} and definition of $\delta$.
\end{proof}
It is worth noticing that a reduction inside an $\tt if$ does not necessarily decrease the space weight $\delta$. This is the reason why we
consider a non-strict inequality in the statement of the above property. \\
The previous result can be extended to the machine $\mathrm{K}_{\mathcal{B}}^{\mathcal{C}}$
in the following way.
% JY : Here the context C is not defined. I guess C_i = C_i'[C]
\begin{property}
Let $\Pi\pder \der \M:\bt$ and $\nabla\mathop{::} \models \M\ev {\tt b}$. For each configuration
$\phi\in \nabla$ such that $\phi\Yright \mathcal{C},\mathcal{A}\models \N\ev {\tt b}'$ 
and $\mathcal{C}\not\equiv \circ$
there exist derivations
%$\Diamond$ such that 
$\Sigma\pder \vdash (\mathcal{C}[\N])^{\mathcal{A}}:\bt$ and
 $\Theta\pder \vdash (\N)^{\mathcal{A}}:\bt$ such that  $\Theta$ is a proper subderivation 
of $\Sigma$
 and for each  $r\geq \rk(\Pi)$:
$$\delta(\Pi,r)\geq\delta(\Sigma,r)> \delta(\Theta,r)$$ 
\end{property}
\begin{proof}
 Easy. 
\end{proof}
Note that in the above property we ask for $\mathcal{C}\neq \circ$ just in order to make the second inequality strict.
\subsection{Proof of PSPACE Soundness}
As defined in the previous section, the space used by the machine $\mathrm{K}_{\mathcal{B}}^{\mathcal{C}}$
is the maximum space used by its configurations. In order to give an account of this space, we need to measure
the increasing of the size of a term during its evaluation. 
The key notion for realizing this measure is that of
sliced occurrence of a variable, which takes into account that in performing an $\tt if$ reduction
a subterm of the subject is erased.
In particular, by giving a bound on the number of sliced occurrences of variables we 
obtain a bound on the number of applications of the $h$ rule in a path.
\begin{definition}
The number of \emph{sliced occurrences} $n_{so}(\x,\M)$ of  the variable
$\x$ in $\M$ is defined as:
\begin{center}
\begin{tabular}{c}
$\ n_{so}(\x,\x)=1,\ n_{so}(\x,{\tt y})=
n_{so}(\x,\0)= 
n_{so}(\x,\1)=0,\ $\\
\\
$n_{so}(\x,{\tt MN})=n_{so}(\x,\M)+n_{so}(\x,\N), \ $
$n_{so}(\x,\lambda{\tt y}.\M)= n_{so}(\x,\M),\ $ \\
\\
$n_{so}(\x,\lif{\M}{\N_{\0}}{\N_{\1}})=\max\{ n_{so}(\x,\M),
n_{so}(\x,\N_{\0}),n_{so}(\x,\N_{\1})\}
$
\end{tabular}
\end{center}
\end{definition}
A type derivation gives us some information about the number of sliced occurrences of a 
free variable $\x$ in its subject $\M$.
\begin{lemma}
\label{number-sliced-occurrences}
Let $\Pi\pder \A, \x:!^{n} A\der \M:\sigma$ then $n_{so}(\x,\M)\leq {\tt rk}(\Pi)^{n}$.
\end{lemma}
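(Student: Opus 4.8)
The plan is to prove the statement by induction on the derivation $\Pi$, treating $\x$ as an arbitrary variable whose type in the context is $!^nA$; whenever $\x\notin\FV(\M)$ we have $n_{so}(\x,\M)=0$ and there is nothing to prove (recall $\rk(\Pi)\geq 1$), so I only need to track the cases in which $\x$ genuinely occurs. Before starting the induction I would isolate a purely combinatorial \emph{substitution bound}: if $\x\notin\FV(\M)$ and $\x_1,\dots,\x_k$ are distinct variables, then
$$n_{so}(\x,\M[\x/\x_1,\dots,\x/\x_k])\ \leq\ \sum_{i=1}^{k} n_{so}(\x_i,\M).$$
This is proved by a straightforward induction on $\M$; the only interesting clause is the conditional, where it reduces to the elementary inequality $\max_j\sum_i a_{ij}\leq \sum_i\max_j a_{ij}$. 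This inequality is exactly what reconciles the $\max$ in the definition of $n_{so}$ on $\ifz$ with the additive sum produced by a contraction.

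With this in hand, the inductive cases split according to how $n_{so}$ and the context behave. The rules that neither duplicate nor branch ($(\lin I),(\forall I),(\forall E),(w)$) are immediate from the induction hypothesis, since they change neither the type of $\x$ nor $n_{so}(\x,-)$ and do not decrease the rank. The case $(\lin E)$ is where multiplicativity pays off: because the two premise contexts are disjoint, $\x$ occurs in only one of them, hence $n_{so}(\x,\N)=0$ for the other argument $\N$, and the additive clause $n_{so}(\x,\M\N)=n_{so}(\x,\M)+n_{so}(\x,\N)$ collapses to a single summand, so the bound is inherited unchanged. The case $(\bt E)$ is dual: the context, and thus the type $!^nA$ of $\x$, is shared by all three premises, and since $n_{so}$ takes the $\max$ over the three subterms, the bound $\rk(\Pi)^n$ follows from the three induction hypotheses together with $\rk(\Pi)\geq\max\{\rk(\Sigma),\rk(\Theta_\0),\rk(\Theta_\1)\}$. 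The $(sp)$ case strips one modality: in the premise $\x$ has type $!^{n-1}A$, so the induction hypothesis gives a bound $\rk^{n-1}\leq\rk^{n}$.

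The \textbf{main obstacle} is the $(m)$ case when $\x$ is the contracted variable. Here the conclusion carries $\x:!^nA$ while the premise carries the $k$ copies $\x_1,\dots,\x_k:!^{n-1}A$, and the subject of the conclusion is $\M[\x/\x_1,\dots,\x/\x_k]$ (note $\x\notin\FV(\M)$ by the free variable lemma, since $\x$ is outside the domain of the premise context). Applying the induction hypothesis to each copy yields $n_{so}(\x_i,\M)\leq\rk(\Pi')^{n-1}$ for the premise derivation $\Pi'$; combining this with the substitution bound gives
$$n_{so}(\x,\M[\x/\x_1,\dots,\x/\x_k])\ \leq\ \sum_{i:\,\x_i\in\FV(\M)} n_{so}(\x_i,\M)\ \leq\ \rho\cdot\rk(\Pi')^{n-1},$$
where $\rho$ is the number of $\x_i$ actually free in $\M$, i.e.\ precisely the rank of this $(m)$ instance. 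Since $\rho\leq\rk(\Pi)$ and $\rk(\Pi')\leq\rk(\Pi)$, the right-hand side is bounded by $\rk(\Pi)\cdot\rk(\Pi)^{n-1}=\rk(\Pi)^{n}$, which closes the case; this is exactly the point where one factor of $\rk(\Pi)$ is consumed in exchange for one discarded $!$. The remaining $(m)$ subcase, where $\x$ is not the contracted variable, is routine: the renaming does not touch the occurrences of $\x$, so $n_{so}(\x,-)$ is unchanged and the induction hypothesis applies directly.
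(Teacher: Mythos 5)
Your proof is correct and follows essentially the same route as the paper's: a case analysis on the last rule of $\Pi$ in which $(\bt E)$ is absorbed by the $\max$ in $n_{so}$, $(\lin E)$ by context disjointness, and $(m)$ trades one $!$ for one factor of ${\tt rk}(\Pi)$ via the rank of the contraction. The only differences are organizational: the paper phrases this as an induction on $n$ with a nested induction on $\Pi$ rather than a single structural induction, and it leaves implicit the $\max$--$\sum$ substitution inequality that you isolate as a preliminary lemma.
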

\begin{proof}\
By induction on $n$. \\
Case $n=0$. The conclusion follows easily by 
induction on $\Pi$. 
Base cases are trivial. In the case 
$\Pi$ ends by $(\bt E)$, the conclusion follows by $n_{so}(\x,\M)$ definition 
and induction hypothesis. The other cases follow directly from the induction 
hypothesis remembering the side condition $\Gamma\#\Delta$ in $(\lin E)$ case.\\
Case $n>0$. By induction on $\Pi$. Base case is trivial. 
Let the
last rule of $\Pi$ be:
$$
\infer[(\bt E)]{\A \der \lif{{\tt M'}}{{\tt N_{\0}}}{{\tt N_{\1}}}:B}{\Sigma\pder \A\der {\tt M'}:\bt &  \Theta_{\0}\pder \A\der {\tt N_{\0}}:B  &\Theta_{\1}\pder \A\der {\tt N_{\1}}:B}
$$
where $x:!^nA \in \A$.  By induction hypothesis 
$n_{so}(\x,\M')\leq {\tt rk}(\Sigma)^{n}$
and 
$n_{so}(\x,\N_{i})\leq {\tt rk}(\Theta_{i})^{n}$ for $i\in\{\0,\1\}$.
By definition of rank ${\tt rk}(\Pi)=\max\{{\tt rk}(\Sigma),{\tt rk}(\Theta_{\0}),{\tt rk}(\Theta_{\1})\}$
and since by definition
$n_{so}(\x,\lif{\M}{\N_{\0}}{\N_{\1}})$ is equal to $\max\{ n_{so}(\x,\M),
n_{so}(\x,\N_{\0}),n_{so}(\x,\N_{\1})\}
$, then the conclusion follows.\\
Let the
last rule of $\Pi$ be:
$$
\infer[(m)]{\A, {\tt x}:!^{n}A\der {\tt N}[{\tt x}/{\tt x}_1,\cdots,{\tt x}/{\tt x}_m ]:\mu}{\Sigma\pder\A,{\tt x}_1:!^{n-1}A,\ldots,{\tt x}_m:!^{n-1}A\der {\tt N}:\mu}
$$
where ${\tt N}[{\tt x}/{\tt x}_1,\cdots,{\tt x}/{\tt x}_m ]\equiv {\tt M}$.
By induction hypothesis  $n_{so}(\x_i,\N)\leq {\tt rk}(\Sigma)^{n-1}$ for
$1\leq i\leq m$ and since ${\tt rk}(\Sigma)\leq {\tt rk}(\Pi)$ 
the conclusion follows easily.
In every other case the conclusion follows directly by induction hypothesis.
\end{proof}
It is worth noting that the above lemma and the subject reduction 
property gives dynamical informations about the number of sliced 
occurrences of a variable. 
\begin{lemma}
\label{dynamicSliced}
Let $\Pi\pder \A, \x:!^{n}A \der \M:\sigma$ and $\M\red \N$. Then, 
$n_{so}(\x,\N)\leq {\tt rk}(\Pi)^{n}$.
\end{lemma}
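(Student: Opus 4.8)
The plan is to reduce this dynamic statement to the static bound of Lemma \ref{number-sliced-occurrences} by transporting the type derivation along the reduction step, the only delicate point being that the rank must not grow when we pass from $\M$ to $\N$. The single step $\M\red\N$ is in particular an instance of $\M\red^{*}\N$, so Property \ref{prop:weightedSubjectReduction} applies to $\Pi\pder \A,\x:!^{n}A\der\M:\sigma$.

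First I would invoke Property \ref{prop:weightedSubjectReduction} to obtain a derivation $\Pi'\pder \A,\x:!^{n}A\der\N:\sigma$ with ${\tt rk}(\Pi)\geq{\tt rk}(\Pi')$. The crucial observation here is that subject reduction preserves the conclusion verbatim, so in $\Pi'$ the variable $\x$ still carries exactly the modal type $!^{n}A$; this is what allows the static lemma to be applied afterwards with the very same exponent $n$.

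Next I would apply Lemma \ref{number-sliced-occurrences} directly to $\Pi'$, which gives $n_{so}(\x,\N)\leq{\tt rk}(\Pi')^{n}$. Combining this with the rank inequality ${\tt rk}(\Pi')\leq{\tt rk}(\Pi)$, and using that the $n$-th power is monotone on naturals, I would conclude $n_{so}(\x,\N)\leq{\tt rk}(\Pi')^{n}\leq{\tt rk}(\Pi)^{n}$, which is the desired bound.

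The main obstacle is precisely the control of the rank under reduction: a naive use of plain subject reduction (Lemma \ref{lem:subjectReduction}) would produce some derivation of $\N$ but with no a priori guarantee on its rank, and then the comparison ${\tt rk}(\Pi')\leq{\tt rk}(\Pi)$ could fail. This is exactly why the statement is phrased in terms of $\Pi$ and its rank, and why the argument must go through Property \ref{prop:weightedSubjectReduction} (which itself rests on the Weighted Substitution Lemma \ref{weightedSubstitution}) rather than through bare subject reduction. Once that rank-preservation is in hand, the rest is an immediate composition of two already-established lemmas.
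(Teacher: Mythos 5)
Your proposal is correct and follows exactly the paper's own route: the paper proves this lemma by citing Property \ref{prop:weightedSubjectReduction} (to get a derivation for $\N$ with non-increasing rank and the same conclusion, hence the same assumption $\x:!^{n}A$) and then Lemma \ref{number-sliced-occurrences}, concluding by monotonicity of $r\mapsto r^{n}$. Your write-up simply makes explicit the steps the paper leaves implicit, including the key point that the rank must be controlled via the weighted subject reduction rather than plain subject reduction.
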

\begin{proof}
  Easy, by Property \ref{prop:weightedSubjectReduction} and Lemma \ref{number-sliced-occurrences}.
\end{proof}
The lemma above is essential to prove the following important property.
\begin{lemma}
\label{hnumber}
Let $\M\in \prog_{d}$ and 
$\nabla \mathop{::}\models \M\ev {\tt b}$ then for each
$\phi\Yright\mathcal{C},\mathcal{A}\models {\tt P}\ev {\tt b}'\in \nabla$: 
$$\#_{h}(\phi)\leq\#(\mathcal{A})|\M|^{d}$$
\end{lemma}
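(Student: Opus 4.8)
The plan is to bound, for each variable assignment recorded in the final m-context $\mathcal{A}$, how many times the corresponding head substitution can fire along the path to $\phi$, and then to sum these contributions over the $\#(\mathcal{A})$ assignments.

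First I would reduce the claim to a per-variable count. Inspecting the rules of Table \ref{BigStepB}, the m-context grows monotonically along any path and is extended only by the $(\beta)$ rule, while the $(h)$ rule can fire only when its head variable is already assigned in the current m-context. Hence every application of $(h)$ in $\pathi(\phi)$ uses a head variable assigned in $\mathcal{A}$, so, writing $h_\x$ for the number of $(h)$ steps in $\pathi(\phi)$ whose head variable is $\x$, we have
$$\#_h(\phi)=\sum_{[\x:=\N]\in\mathcal{A}}h_\x.$$
Since this sum has exactly $\#(\mathcal{A})$ summands, it suffices to prove $h_\x\leq|\M|^d$ for each $[\x:=\N]\in\mathcal{A}$.

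Next I would fix such an $\x$ and bound $h_\x$ by a count of sliced occurrences. The variable $\x$ is a fresh variable introduced by one $(\beta)$ step, after which the subject becomes some term $\R_\x$ containing $\x$. The key observation is that no later machine step duplicates or creates occurrences of $\x$: by Property \ref{property_subterms} together with the freshness convention, the terms substituted by later $(h)$ steps (on $\x$ itself or on any other head variable) are instances of subterms of $\M$ with fresh variables, hence do not contain $\x$, whereas $(\ifz)$ steps only discard subterms. Moreover the branch $\pathi(\phi)$ from the root descends into exactly one premise at every rule, in particular into exactly one of the test or the selected branch at each conditional. Thus each $(h)$ step with head $\x$ consumes one distinct occurrence of $\x$, and the consumed occurrences respect the ``one alternative per conditional'' structure; since $n_{so}$ is defined by taking the maximum over the test and the two branches, this yields $h_\x\leq n_{so}(\x,\R_\x)$. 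Because the subjects along the path are reducts of $\R_\x$, Lemma \ref{dynamicSliced} guarantees that this sliced-occurrence count does not grow as the subject evolves, so the bound established at $\R_\x$ remains valid wherever the $(h)$ steps actually occur.

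Finally I would bound $n_{so}(\x,\R_\x)$ using typing. Fix $\Pi\pder\der\M:\bt$ with ${\tt d}(\Pi)=d$. By subject reduction the relevant closure $(\R_\x)^{\mathcal{A}'}$ is a program (Lemma \ref{lem:redProperties}), so $\R_\x$ is typable in a context assigning $\x$ a type $!^nA$, and since the nesting of $(sp)$ in $\Pi$ is $d$ we have $n\leq d$. Lemma \ref{number-sliced-occurrences} then gives $n_{so}(\x,\R_\x)\leq\rk(\Pi)^{n}$, and as the rank of each $(m)$ rule counts distinct merged variables occurring in its subject we have $\rk(\Pi)\leq|\M|$; hence $n_{so}(\x,\R_\x)\leq|\M|^{n}\leq|\M|^{d}$. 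Combining with the displayed sum gives $\#_h(\phi)\leq\#(\mathcal{A})|\M|^d$. The main obstacle is the middle step: making rigorous that $h_\x\leq n_{so}(\x,\R_\x)$ requires a careful occurrence-tracking argument, showing both that occurrences of a fixed fresh variable are never duplicated (via freshness and Property \ref{property_subterms}) and that a single root-to-$\phi$ path enters only one alternative of each conditional, which is exactly why the max-based measure $n_{so}$ rather than the total count $n_o$ is the correct one.
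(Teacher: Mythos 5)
Your proposal is correct and follows essentially the same route as the paper's proof: decompose $\#_h(\phi)$ per assigned variable, bound the number of $(h)$ firings on each fresh variable by its sliced-occurrence count via Lemma \ref{dynamicSliced} applied to the body of the originating $\beta$-redex, and conclude with $\rk(\Pi)\leq|\M|$ and $n\leq d$. Your write-up is in fact more explicit than the paper's (notably on why $h_\x\leq n_{so}(\x,\R_\x)$, i.e.\ no duplication of fresh-variable occurrences and one branch per conditional along a path), but the underlying argument is the same.
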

\begin{proof}
For each $[{\tt x'}:=\N]\in \mathcal{A}$ the variable $\tt \x'$ is a fresh copy
of a variable ${\tt x}$ originally bound in $\M$.  Hence, 
$\M$ contains a subterm $(\lambda{\tt x}.{\tt P}){\tt Q}$ and
there exists a derivation $\Pi$ such that 
$\Pi\pder  {\tt x}:!^{n}A\der {\tt P}:B$.\\
By Lemma
\ref{dynamicSliced} for every $\tt P'$ such that $\tt P\red^{*} P'$
we have   $n_{so}(\x,{\tt P}')\leq {\tt rk}(\Pi)^{n}$.
% $and  consider the configuration 
%$\phi'\Yright\mathcal{C}',\mathcal{A}'@[\x:=\N]\models {\tt P}'\ev {\tt b}''\in \Pi$
%where $[\x:=\N]$ has been firstly introduced in the $m$-context.
%It is easy to verify that there exists 
%$(\Pi) \x:\tau\der \M'$, it
So, in particular the number of applications of $h$ rules on the 
variable $\tt \x'$ is bounded by ${\tt rk}(\Pi)^{n}$. % JY missing .
%the number of sliced  variables in every redex
%of $(\mathcal{C}[{\tt P}'])^{\mathcal{A}'}$. Hence by lemma
%\ref{dynamicSliced} it is bound by ${\tt rk}(\Pi)^{n}$.
Since $|\M|\geq \rk(\Pi)$ and $d\geq n$, the conclusion follows.
\end{proof}
The following lemma relates the space weight with both the size of the term and the degree
of the derivation. 
\begin{lemma} \label{delta-properties}
Let $\Pi\pder \Gamma \der \M:\sigma$.
\begin{enumerate}
\item $\delta(\Pi,1)\leq |\M|$
\item $\delta(\Pi,r)\leq \delta(\Pi,1)\times r^{{\tt d}(\Pi)}$
\item $\delta(\Pi,{\tt rk}(\Pi))\leq |\M|^{{\tt d}(\Pi)+1}$
\end{enumerate}
\end{lemma}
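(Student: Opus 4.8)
The plan is to prove the three inequalities with a common backbone: parts (1) and (2) by structural induction on the derivation $\Pi$, and then part (3) simply by composing them at the value $r=\rk(\Pi)$. The natural order is to establish (1) and (2) first, since (3) is their combination together with one elementary auxiliary bound.

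For (1) I would induct on $\Pi$, comparing clause by clause the recursive definition of $\delta(\Pi,1)$ with that of the size $|\M|$. The decisive observation is that at $r=1$ the $(sp)$ clause is neutral, since $\delta(\Pi,1)=1\cdot\delta(\Sigma,1)=\delta(\Sigma,1)$ while the subject is unchanged; likewise $(w)$, $(m)$, $(\forall E)$ and $(\forall I)$ leave both $\delta(\cdot,1)$ and the size unchanged, so the induction passes transparently through them. The base cases $(Ax)$ and $(\bt_{\tt b}I)$ give $\delta=1$ and size $1$. The remaining clauses are each bounded by the corresponding clause of $|\cdot|$: the $(\lin I)$ clause adds $1$, the $(\lin E)$ clause corresponds to the size of an application, and the $(\bt E)$ clause takes a maximum and adds $1$, which is dominated by the size $|\M|+|\N_{\0}|+|\N_{\1}|+1$ of the conditional.

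For (2) I would again induct on $\Pi$, this time tracking how the factor $r^{{\tt d}(\Pi)}$ accumulates, using throughout that $r\geq 1$. The only rule that introduces $r$ is $(sp)$, and this is the crux: there $\delta(\Pi,r)=r\,\delta(\Sigma,r)$ with ${\tt d}(\Pi)={\tt d}(\Sigma)+1$ and $\delta(\Pi,1)=\delta(\Sigma,1)$, so the hypothesis $\delta(\Sigma,r)\leq\delta(\Sigma,1)\,r^{{\tt d}(\Sigma)}$ yields exactly $\delta(\Pi,r)\leq\delta(\Sigma,1)\,r^{{\tt d}(\Sigma)+1}=\delta(\Pi,1)\,r^{{\tt d}(\Pi)}$. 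In the branching clauses $(\lin E)$ and $(\bt E)$ I bound each premise's factor $r^{{\tt d}(\text{premise})}$ by $r^{{\tt d}(\Pi)}$ using ${\tt d}(\text{premise})\leq{\tt d}(\Pi)$ together with $r\geq 1$, and I absorb the additive $+1$ using $1\leq r^{{\tt d}(\Pi)}$; all the other rules preserve $\delta(\cdot,1)$, $\delta(\cdot,r)$ and ${\tt d}$ simultaneously, so the hypothesis transfers unchanged.

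Finally (3) follows by instantiating (2) at $r=\rk(\Pi)$ and then invoking (1): $\delta(\Pi,\rk(\Pi))\leq\delta(\Pi,1)\,\rk(\Pi)^{{\tt d}(\Pi)}\leq |\M|\cdot\rk(\Pi)^{{\tt d}(\Pi)}$. The one extra ingredient is the bound $\rk(\Pi)\leq|\M|$, which then gives $|\M|\cdot\rk(\Pi)^{{\tt d}(\Pi)}\leq|\M|^{{\tt d}(\Pi)+1}$. I would justify $\rk(\Pi)\leq|\M|$ by recalling that the rank of an $(m)$ rule counts exactly those contracted variables that occur free in the subject, and each such variable contributes at least one leaf to the term; hence the rank of every $(m)$ rule, and so $\rk(\Pi)=\max(r,1)$, is at most $|\M|$ (using $|\M|\geq 1$). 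I expect the main obstacles to be the exponent bookkeeping in the $(sp)$ case of (2) and the auxiliary estimate $\rk(\Pi)\leq|\M|$ for (3); everything else reduces to routine clause-matching against the definitions of $\delta$, ${\tt d}$ and $|\cdot|$.
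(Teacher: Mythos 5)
Your proposal is correct and follows essentially the same route as the paper: parts (1) and (2) by structural induction on $\Pi$ with the $(sp)$ clause as the only point where the factor $r$ enters, and part (3) by combining them at $r=\rk(\Pi)$ via the auxiliary bound $\rk(\Pi)\leq|\M|$. The paper's own proof is only a sketch, and your clause-by-clause details (including the justification that each contracted variable counted by the rank contributes a leaf to the subject) fill it in accurately.
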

\begin{proof}$ $
\begin{enumerate}
\item By induction on $\Pi$. Base cases are trivial. Cases 
$(sp),(m),(w),(\forall I)$ and $(\forall E)$ follow directly by induction 
hypothesis. The other cases follow by definition of $\delta(\Pi,1)$.
\item By induction on $\Pi$. Base cases are trivial. Cases 
$(m),(w),(\forall I)$ and $(\forall E)$ follow directly by induction 
hypothesis. The other cases follow by induction hypothesis and the definitions of $\delta(\Pi,r)$ and ${\tt d}(\Pi)$.
\item By definition of rank it is easy to verify that 
${\tt rk}(\Pi)\leq |\M|$, hence by the previous two points the conclusion follows. \qed
\end{enumerate}
\end{proof}

The next lemma gives a bound on the dimensions of all the components of a machine configuration, namely the term,
the m-context and the $\bt$-context.
\begin{lemma}\label{grow}
Let $\M\in\prog_{d}$ and
$\nabla \mathop{::}   \models \M\ev {\tt b}$. Then for each
$\phi\Yright\mathcal{C},\mathcal{A}\models \N\ev {\tt b}'\in \nabla$: 
\begin{enumerate}
\item $|\mathcal{A}|\leq 2|\M|^{ d+2}$
\item $|\N|\leq 2|\M|^{2d+2}$
\item $|\mathcal{C}|\leq 2|\M|^{3d+3}$
\end{enumerate}
\end{lemma}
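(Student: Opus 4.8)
The plan is to assemble the three bounds by combining the counting lemmas proved just above, threading everything through a single choice of rank parameter. Fix a derivation $\Pi\pder \der \M:\bt$ with ${\tt d}(\Pi)=d$, and set $r={\tt rk}(\Pi)$; recall from the proof of Lemma \ref{delta-properties}.3 that $r\leq|\M|$, and that $|\M|\geq 1$ always. The master estimate I would establish first is a uniform bound on the two rule counts: by Lemma \ref{rulesnumber}, for $r\geq{\tt rk}(\Pi)$ we have $\#_{\beta}(\phi)+\#_{\tt if}(\phi)\leq \delta(\Pi,r)$, so taking $r={\tt rk}(\Pi)$ and applying Lemma \ref{delta-properties}.3 yields $\delta(\Pi,{\tt rk}(\Pi))\leq |\M|^{d+1}$. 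Hence both $\#_{\beta}(\phi)\leq|\M|^{d+1}$ and $\#_{\tt if}(\phi)\leq|\M|^{d+1}$ for every configuration $\phi\in\nabla$.

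With this in hand, the three bounds follow in a cascade, each feeding the next. For point (1), Lemma \ref{m-context-size}.1 gives $|\mathcal{A}|\leq \#_{\beta}(\phi)(|\M|+1)\leq |\M|^{d+1}(|\M|+1)$, and since $|\M|+1\leq 2|\M|$ this is $\leq 2|\M|^{d+2}$. For point (2), I would bound the number of head reductions: by Lemma \ref{size_number}.1 we have $\#(\mathcal{A})=\#_{\beta}(\phi)\leq|\M|^{d+1}$, so Lemma \ref{hnumber} gives $\#_{h}(\phi)\leq\#(\mathcal{A})|\M|^{d}\leq|\M|^{2d+1}$. Then Lemma \ref{m-context-size}.2 yields $|\N|\leq(\#_{h}(\phi)+1)|\M|\leq(|\M|^{2d+1}+1)|\M|\leq 2|\M|^{2d+2}$, again absorbing the $+1$ using $|\M|\geq 1$. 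Finally, for point (3), Lemma \ref{m-context-size}.3 bounds $|\mathcal{C}|$ by $\#_{\tt if}(\phi)$ times the maximal subject size along $\pathi(\phi)$; since that maximal subject size is itself bounded by $2|\M|^{2d+2}$ from point (2) (which holds for every configuration, hence for all $\psi\in\pathi(\phi)$), and $\#_{\tt if}(\phi)\leq|\M|^{d+1}$, we get $|\mathcal{C}|\leq |\M|^{d+1}\cdot 2|\M|^{2d+2}=2|\M|^{3d+3}$.

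There is no serious obstacle here: the lemma is essentially a bookkeeping assembly, and the only points requiring care are ensuring the rank parameter is specialized consistently to $r={\tt rk}(\Pi)$ so that Lemma \ref{delta-properties}.3 applies, and respecting the dependency order of the three estimates — the bound on $|\mathcal{A}|$ (equivalently on $\#_{\beta}$) must be derived before the bound on $\#_h$, which in turn must precede the bound on $|\N|$, which is finally reused in the bound on $|\mathcal{C}|$. The minor arithmetic subtlety is that the factor $2$ in each target inequality is exactly what is needed to absorb the $(|\M|+1)$ and $(\cdots+1)$ terms via $|\M|\geq 1$, so no sharper control of those low-order terms is required.
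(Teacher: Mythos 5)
Your proof is correct and follows essentially the same route as the paper's: bound $\#_{\beta}(\phi)$ and $\#_{\tt if}(\phi)$ by $\delta(\Pi,{\tt rk}(\Pi))\leq|\M|^{d+1}$ via Lemmas \ref{rulesnumber} and \ref{delta-properties}.3, then chain Lemmas \ref{m-context-size}, \ref{size_number} and \ref{hnumber} through the three points in the same dependency order, absorbing the additive low-order terms into the factor $2$. No gaps.
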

\begin{proof}\
\begin{enumerate}
\item By Lemma \ref{m-context-size}.1, 
Lemma \ref{rulesnumber} and Lemma 
\ref{delta-properties}.3. 
$$|\mathcal{A}|\leq \#_{\beta}(\phi)(|\M|+1)
\leq \delta(\Pi,\rk(\Pi))(|\M|+1)\leq |\M|^{{\tt d}+1}(|\M|+1)
%\leq (\#(\mathcal{A})|\M|^{d} +1)|\M|
\leq
 2|\M|^{ d+2}$$
\item By Lemma \ref{m-context-size}.2,  Lemma \ref{hnumber}, Lemma \ref{size_number}.1, Lemma \ref{rulesnumber} and Lemma \ref{delta-properties}.3:
$$|\N|\leq(\#_{h}(\phi)+1)|\M|
\leq (\#(\mathcal{A})|\M|^{d} +1)|\M|
\leq
 \#_{\beta}(\phi)|\M|^{d+1}+|\M|\leq
2|\M|^{2d+2}$$
\item By Lemma \ref{m-context-size}.3, 
Lemma \ref{size_number}.2, the previous point of this lemma,   
Lemma \ref{rulesnumber} and Lemma \ref{delta-properties}.3:
\begin{multline*}|\mathcal{C}|\leq
\#_{\tt if}(\phi)(\max\{|\N| \ |\  
\psi \Yright\mathcal{C}',\mathcal{A}'\models {\tt N}\ev {\tt b}''\in  \pathi(\phi)
 \})\\
\leq
\#(\mathcal{C})2|\M|^{2d+2}
\leq
 |\M|^{d+1}2|\M|^{2d+2}\leq
2|\M|^{3d+3}\qed
\end{multline*}
\end{enumerate}
\end{proof}
The PSPACE soundness follows immediately from the definition of ${\tt space}(\nabla)$, for a
machine evaluation $\nabla$, and from the previous lemma.

\begin{theorem}[Polynomial Space Soundness]\ \\
Let $\M\in \prog_d$. Then:
$$
{\tt space}(\M)\leq 6|\M|^{3d+3}
$$
\end{theorem}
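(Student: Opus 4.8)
The plan is to read the result off directly from the three bounds already established in Lemma~\ref{grow}, since essentially all the work has been done there. Recall that ${\tt space}(\M)$ is by definition the maximal size of a configuration occurring in the computation $\nabla\mathop{::}[\circ],\epsilon\models\M\ev{\tt b}$, and that the size of a configuration $\phi\Yright\mathcal{C},\mathcal{A}\models\N\ev{\tt b}'$ is $|\mathcal{C}|+|\mathcal{A}|+|\N|$. Hence it suffices to bound this sum uniformly over all $\phi\in\nabla$.

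First I would fix an arbitrary configuration $\phi\Yright\mathcal{C},\mathcal{A}\models\N\ev{\tt b}'\in\nabla$ and apply the three points of Lemma~\ref{grow} to obtain $|\mathcal{A}|\leq 2|\M|^{d+2}$, $|\N|\leq 2|\M|^{2d+2}$ and $|\mathcal{C}|\leq 2|\M|^{3d+3}$. The only remaining observation is to uniformize the exponents: since every term has size at least $1$ we have $|\M|\geq 1$, and since $d\geq 0$ the exponents satisfy $d+2\leq 3d+3$ and $2d+2\leq 3d+3$. Therefore each of the three summands is itself bounded by $2|\M|^{3d+3}$, and adding them gives
$$|\mathcal{C}|+|\mathcal{A}|+|\N|\leq 2|\M|^{3d+3}+2|\M|^{3d+3}+2|\M|^{3d+3}=6|\M|^{3d+3}.$$
Since $\phi$ was arbitrary, the maximum over all configurations of $\nabla$ is bounded by the same quantity, which is exactly the claim ${\tt space}(\M)\leq 6|\M|^{3d+3}$.

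There is essentially no obstacle left at this stage: the genuinely delicate arguments --- bounding the number of $(\beta)$, $(h)$ and $\tt if$ rules via the space weight $\delta$ and the degree, and controlling how substitutions through the $(h)$ rule enlarge the subject --- are all encapsulated in Lemma~\ref{grow} and in the lemmas feeding into it (notably Lemma~\ref{rulesnumber}, Lemma~\ref{hnumber} and Lemma~\ref{delta-properties}). The only point requiring a moment's care is to ensure the exponent comparisons genuinely use $d\geq 0$ and $|\M|\geq 1$, so that the dominant term $2|\M|^{3d+3}$ really absorbs the other two. Fixing the degree $d$ then yields PSPACE soundness proper, as the bound is polynomial in $|\M|$ with the degree appearing only in the exponent.
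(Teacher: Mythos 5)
Your proof is correct and follows exactly the route the paper takes: the paper's own proof is the one-line ``By definition of ${\tt space}(\M)$ and Lemma~\ref{grow},'' and your write-up simply makes explicit the summation of the three bounds and the absorption of the lower-order exponents using $|\M|\geq 1$ and $d\geq 0$. No discrepancy to report.
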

\begin{proof}
By definition of $\tt space(M)$ and Lemma \ref{grow}.
\end{proof}

%%% Local Variables: 
%%% mode: latex
%%% TeX-master: "main"
%%% End: 

\section{PSPACE completeness}
\label{completeness}
A well known result of the seventies states that the class of problem decidable by
a Deterministic Turing Machine (DTM) in space polynomial in the length
of the input coincides with the class of problems
decidable by an Alternating Turing Machine (ATM) \cite{ChKS81}
in time polynomial in the length of the input.
$$
  \text{PSPACE} = \text{APTIME}
$$
% space is alwayas polynomial beacuse time is polynomial
We use this result, and we prove that each polynomial time 
 ATM  $\mathcal{M}$ can be simulated by a term typable in $\BSTA$.
In order to do this, we will use a result already obtained 
by two of the authors of this paper  
\cite{GaboardiRonchi07csl,Gaboardi07phd},
namely that $\STA$, the type assignment system for the $\lambda$-calculus on which $\BSTA$ is based, characterizes all the polynomial 
time functions. In particular, we use the same encoding as in 
\cite{GaboardiRonchi07csl,Gaboardi07phd} for the representation of the polynomials. Notice that the data types are coded by means of terms that are typable in a uniform way through derivations of degree $0$.
This approach ensures that the degree of the polynomial space 
bound does 
not depends on the input data.

%  it has been shown that polytime
% DTM are definable by $\lambda$-terms typable in $\STA$.
% Analogously here we show that polytime ATM
% are definable by programs of $\BSTA$. 
% We achieve such a result by 
% %considering a notion of \emph{function programmable} in $\BSTA$
% %We 
% showing that for each polynomial time ATM  $\mathcal{M}$ we can 
% define a recursive evaluation procedure
% which behaves as $\mathcal{M}$.
% In particular, we consider data types representations by means of 
% terms 
% that are typable in a uniform way through derivations of degree $0$.
% This approach ensures that the degree of the polynomial space 
% bound does 
% not depends on the input data.
%We show that for each polynomial time ATM  $\mathcal{M}$ we can 
%define a recursive evaluation procedure
%which behaves as $\mathcal{M}$.
%Clearly by Lemma \ref{betareduction} a programmable function is also definable but not vice-versa.
%Analogously to what has been done in  \cite{GaboardiRonchicsl07} for STA.
\subsection*{Some syntactic sugar }
Let $\circ$ denotes composition. In particular $\tt M\circ N$ stands for $\tt \lambda z. M(Nz)$ and 
$\M_1\circ \M_2\circ \cdots\circ \M_n$ stands for 
$\lambda {\tt z}. \M_1(\M_2(\cdots(\M_n{\tt z})))$.\\
Tensor product is definable as 
$\sigma \otimes \tau \df \forall \alpha. (\sigma\lin \tau\lin\alpha)\lin \alpha $.
In particular 
$\tt \langle M,N \rangle$ stands for $\tt \lambda x.xMN$ and 
$\tt \letp{z}{x}{y}{N}$ stands for  $\tt z(\lambda x.\lambda y.N)$.
Note that, since $\BSTA$ is an affine system, tensor product enjoys some properties
of the additive conjunction, as to permit the projections:
as usual $\pi_1(\M)$ stands for $\M(\lambda \x.\lambda \y.\x)$
and $\pi_2(\M)$ stands for $\M(\lambda \x.\lambda \y.\y)$.
The $n$-ary tensor product can be easily defined through the binary one 
and we use $\sigma^n$ to denote $\sigma\otimes \cdots\otimes \sigma$ $n$-times. In the sequel we sometimes consider tensor product modulo associativity.
\subsection*{$\bt$-programmable functions}
We need both to generalize the usual notion of lambda definability, given in 
\cite{Barendregt84}, to different kinds of input data, and to specialize it to our typing system. 
\begin{definition}
Let $f:\mathbb{I}_1\times \cdots \times \mathbb{I}_n\to \mathbb{O}$ be a total
function, let $\mathbf{O},\mathbf{I}_1,\ldots,\mathbf{I}_n\in\mathcal{T}_{\bt}$ and let elements $o\in\mathbb{O}$ and $i_j\in\mathbb{I}_j$, for 
$0\leq j\leq n$, be encoded by terms $\underline{\tt o}$ and 
$\underline{{\tt i}_j}$
such that $\vdash \underline{\tt o}:\mathbf{O}$ and $\vdash \underline{{\tt i}_j}:\mathbf{I}_j$.
\begin{itemize}
\item[(i)]
The function $f$ is $\bt$-\emph{definable} if there is a term $\underline{\tt f}\in \Lambda_{\mathcal{B}}$ such that
$\vdash \underline{\tt f} \underline{{\tt i}_1}\cdots \underline{{\tt i}_n}:\mathbf{O}$ and:
$$
 f i_1\cdots i_n=o \iff
\underline{\tt f} \underline{{\tt i}_1}\cdots \underline{{\tt i}_n}=_{\beta}
\underline{\tt o}
$$
\item[(ii)] Let $\mathbb{O}=\bt$.
The function $f$ is $\bt$-programmable if there is a term  $\underline{\tt f}\in \Lambda_{\mathcal{B}}$ 
such that 
$\underline{\tt f}\underline{{\tt i}_{1}}\ldots \underline{{\tt i}_{n}}\in \prog$ and:
$$f(i_1,\ldots i_n)=b\ \iff\ \models\underline{\tt f}\underline{{\tt i}_{1}}\ldots \underline{{\tt i}_{n}}\ev\tt \underline{b} $$ 
\end{itemize}
\end{definition}
% The polynomial time completeness in \cite{GaboardiRonchi07csl} relies on
% the notion of \emph{$\lambda$-definability}, given in \cite{Barendregt84},  
% generalized to different kinds of data.\\
% The same can be done here for $\BSTA$, by using a generalization 
% of $\lambda$-definability to the set of terms
% $\Lambda_{\mathcal{B}}$. 
%The above notion of definability does not suffices 
%Nevertheless this is not sufficient, 
%since we want to show that 
%polynomial time ATM can be defined by programs of $\BSTA$.
%For this reason, we have the following definition.
% \begin{definition}
% Let $f:\mathbb{I}_{1}\times ... \times \mathbb{I}_{n}\to \mathbb{B}$ and let elements in $ \mathbb{I}_{j}$  
% be representable by terms ${\tt i}_{j}$($1\leq j \leq n$).
% The function $f$ is \emph{programmable} if, for a term ${\tt f}\in \Lambda_{\mathcal{B}}$,
% ${\tt f}{\tt i}_{1}\ldots {\tt i}_{n}\in \prog$ and:
% $$f(i_1,\ldots i_n)=b\ \iff\ \models{\tt f}{\tt i}_{1}\ldots {\tt i}_{n}\ev\tt b $$ 
% % Let $f:\overbrace{\mathbb{S}\times ... \times \mathbb{S}}^{n\textrm{
%-times}}\to \mathbb{B}$ and let 
% every string  $s\in \mathbb{S}$  
% be representable by terms ${\tt s}$). Then,
% $f$ is \emph{programmable} if, there exists a term ${\tt f}\in \Lambda_{\mathcal{B}}$,
% such that 
% ${\tt f}{\tt s}_{1}\ldots {\tt s}_{n}\in \prog$ and:
% $$f(s_1,\ldots s_n)=b\ \iff\ \models{\tt f}{\tt s}_{1}\ldots {\tt s}_{n}\ev\tt b $$ 
%\end{definition}
\subsection*{Natural numbers and strings of booleans}
Natural numbers, as usual in the $\lambda$-calculus, are represented by Church numerals, i.e. 
$\underline{\tt n}\df \lambda {\tt s}.\lambda {\tt z }.{\tt s}^{n}({\tt z})$.
Each Church numeral $\underline{\tt n}$ is such that $\vdash \underline{\tt n}:\mathbf{N}_i$ for every $i\geq 1$ where the 
\emph{indexed type} $\mathbf{N}_i$ is defined as:
%d operations on them are 
%typable by  indexed  types as
$$\mathbf{N}_i\df \forall \alpha .!^{i}
(\alpha\lin\alpha)\lin\alpha\lin\alpha$$ 
It is easy to check that $\underline{\tt n}$ is typable by means
of derivations with degree $0$.
We  simply use $\mathbf{N}$ to mean $\mathbf{N}_1$.\\
The standard terms
$
{\tt suc} \df \tt\lambda n. \lambda s.\lambda z.  s(nsz),\
{\tt add} \df \tt\lambda n.\lambda m. \lambda s.\lambda z.  n s(msz)$
 and ${\tt mul} \df \tt\lambda n.\lambda m. \lambda s. n(ms)$,
% $$
% \begin{array}{rcl}
% {\tt suc} &=& \tt\lambda n. \lambda s.\lambda z.  s(nsz)\\
% {\tt add} &=& \tt\lambda n.\lambda m. \lambda s.\lambda z.  n s(msz)\\
% {\tt mul} &=& \tt\lambda n.\lambda m. \lambda s. n(ms)\\
% \end{array}
% $$
defining successor, addition and  multiplication,
analogously to what happens in STA,
are typable as:
$
\vdash {\tt suc}  : \mathbf{N}_i\lin \mathbf{N}_{i+1},\
\vdash {\tt add}  : \mathbf{N}_i\lin \mathbf{N}_{j}\lin \mathbf{N}_{\max(i,j)+1}$ and
$\vdash {\tt mul}  : \mathbf{N}_i\lin !^{i}\mathbf{N}_{j}\lin \mathbf{N}_{i+j}
$.
% $$
% \begin{array}{rcl}
% \vdash {\tt suc} & :& \mathbf{N}_i\lin \mathbf{N}_{i+1}\\
% \vdash {\tt add} & :& \mathbf{N}_i\lin \mathbf{N}_{j}\lin \mathbf{N}_{\max(i,j)+1}\\
% \vdash {\tt mul} & :& \mathbf{N}_i\lin !^{i}\mathbf{N}_{j}\lin \mathbf{N}_{i+j}\\
% \end{array}
% $$
From this we have for $\BSTA$ the following completeness for polynomials.
\begin{lemma}[\cite{GaboardiRonchi07csl}]
\label{polynomials}
Let $P$ be a polynomial and  $deg(P)$  its degree. 
Then there is a term $\tt P$ defining $P$
typable as:
 $$
\vdash {\tt P}:!^{deg(P)}\mathbf{N}\lin \mathbf{N}_{2deg(P)+1}
$$
%for a variable $x$.
\end{lemma}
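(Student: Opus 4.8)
The plan is to build the term $\tt P$ by induction on the structure of the polynomial $P$, reusing the typed arithmetic combinators $\tt add$ and $\tt mul$ whose typings are recalled just above. Since $\BSTA$ contains $\STA$ as the fragment on pure $\lambda$-terms, and since the only coercions I need rely on the fact that each Church numeral $\underline{\tt n}$ inhabits $\mathbf{N}_i$ for \emph{every} $i\geq 1$, the whole construction can be carried out inside the $!$/$\mathbf{N}_i$ fragment, exactly as in \cite{GaboardiRonchi07csl}.

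First I would put $P$ in Horner form, $P(\x)=a_0+\x\cdot Q(\x)$, where $Q$ is a polynomial of degree $deg(P)-1$ with natural coefficients. The base case $deg(P)=0$ is the constant polynomial $a_0$: the term $\lambda\x.\underline{a_0}$ has type $\mathbf{N}\lin\mathbf{N}_1$, the discarding of $\x$ being legal because $\BSTA$ is affine (free $(w)$), and indeed $2\cdot 0+1=1$. For the inductive step I would take, by induction hypothesis, a term $\tt Q$ with $\der \tt Q:!^{deg(Q)}\mathbf{N}\lin\mathbf{N}_{2deg(Q)+1}$ and combine it with one further multiplication by the variable and one addition of $\underline{a_0}$, i.e. essentially $\lambda\x.\,{\tt add}\ \underline{a_0}\ ({\tt mul}\ \ldots\ ({\tt Q}\ \x))$. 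Tracking the indices, the multiplication moves the index from $2deg(Q)+1=2deg(P)-1$ up to $2deg(P)$, and the subsequent addition moves it to $2deg(P)+1$, as required; and because the second argument of $\tt mul$ sits under one extra $!$, the modal depth of the input is raised from $!^{deg(Q)}\mathbf{N}$ to $!^{deg(P)}\mathbf{N}$, matching the stated premise.

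The hard part will be the bookkeeping of the exponential modalities and of the indices $i$ of the indexed type $\mathbf{N}_i$. The delicate point is that the single input variable $\x$ occurs in the body both as the new factor and (after duplication) inside $\tt Q$, and these occurrences land at \emph{different} banged levels; to merge them with the contraction rule $(m)$ I must first bring them to a uniform level. Here I would use systematically that $\underline{\tt n}:\mathbf{N}_i$ holds for all $i$, which supplies the coercions between indexed types needed to align the levels, while a promotion $(sp)$ pushes the subderivation for $\tt Q\ \x$ under one more $!$ and a single final $(m)$ collapses the copies into the one assumption $\x:!^{deg(P)}\mathbf{N}$ before the outer $\lambda$-abstraction. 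I would finally verify that the promotions introduced number exactly $deg(P)$, so that the derivation has degree $deg(P)$ in accordance with the $!^{deg(P)}\mathbf{N}$ premise and the conclusion type $\mathbf{N}_{2deg(P)+1}$ is met precisely.
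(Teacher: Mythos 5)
The paper itself does not prove this lemma --- it imports it from \cite{GaboardiRonchi07csl} --- so your construction has to stand on its own, and as written it has a genuine gap in the modal bookkeeping. First, the coercion mechanism you invoke does not exist: the fact that a \emph{closed} numeral $\underline{\tt n}$ inhabits $\mathbf{N}_i$ for every $i$ says nothing about a bound variable occurrence. In $\BSTA$ there is no dereliction and no definable coercion of type $\mathbf{N}_i\lin\mathbf{N}_{i+1}$ or $\mathbf{N}\lin\, !\mathbf{N}$; the only way to ``align levels'' of context assumptions is the unary instance of the rule $(m)$, which raises an assumption $\x:\sigma$ to $\x:!\sigma$, while the output index $\mathbf{N}_i$ can only be padded upwards, e.g.\ by dummy additions. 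So the alignment is possible, but not for the reason you give.

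Second, and more seriously, the Horner induction does not close with the invariant $\der {\tt Q}:!^{deg(Q)}\mathbf{N}\lin\mathbf{N}_{2deg(Q)+1}$. Write $d=deg(P)$. To type ${\tt mul}\;\x_1\;({\tt Q}\,\x_2)$ you must instantiate ${\tt mul}:\mathbf{N}_1\lin\, !\mathbf{N}_{2d-1}\lin\mathbf{N}_{2d}$, and producing the argument type $!\mathbf{N}_{2d-1}$ forces a promotion $(sp)$ of the whole subderivation of ${\tt Q}\,\x_2$, which pushes $\x_2$ from $!^{d-1}\mathbf{N}$ to $!^{d}\mathbf{N}$ in the context. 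You must then still merge $\x_1$ with $\x_2$, and $(m)$ produces $!\sigma$ from copies all at the \emph{same} type $\sigma$; since $\x_2$ already sits at $!^{d}\mathbf{N}$ and cannot be lowered, the merged assumption necessarily lands at $!^{d+1}\mathbf{N}$. Your ``single final $(m)$ collapsing the copies into $\x:!^{deg(P)}\mathbf{N}$'' is therefore not derivable, and the scheme only yields $!^{deg(P)+1}\mathbf{N}\lin\mathbf{N}_{2deg(P)+1}$ --- an off-by-one that matters, since the degree governs the exponent in the soundness bound. The repair is to strengthen the induction invariant: build the term keeping all occurrences of the input as \emph{distinct} free variables whose modal depths stay bounded by $deg(P)-1$ (closed subterms such as coefficients being promoted for free, with empty context), and perform one global contraction only at the very end, which then lands exactly at $!^{deg(P)}\mathbf{N}$. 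This is essentially what the monomial-by-monomial encoding of \cite{GaboardiRonchi07csl} does, and it is incompatible with contracting at every level of a Horner recursion.
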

Strings of booleans are represented by terms
of the shape 
$\lambda {\tt c}.\lambda {\tt z}.{\tt c}{\tt b}_0(\cdots({\tt c}{\tt b}_n{\tt z})\cdots)$
where ${\tt b}_i\in \{\0,\1\}$. Such terms are typable
by the indexed type $\mathbf{S}_i\df  \forall \alpha .!^{i}
(\bt\lin \alpha\lin\alpha)\lin\alpha\lin\alpha$. 
Again, we write $\mathbf{S}$ to mean $\mathbf{S}_1$.
Moreover, there is a term $\tt len \df \lambda {\tt c}.\lambda{\tt s}. {\tt c}(\lambda \x.\lambda \y.{\tt s}\y) $ typable as 
$\vdash {\tt len}:\mathbf{S}_i\lin \mathbf{N}_i$
that given a string of booleans returns its length.
Note that the data types defined above can be typed in $\BSTA$ by derivations
with degree $0$.
\subsection*{Boolean connectives}
It is worth noting that due to the presence of the $(\bt E)$ rule it is possible to 
define the usual boolean connectives.
Remembering that in our language $\0$ denotes ``true'' while 
$\1$ denotes ``false'', we have the following terms:
$$\M\ {\tt and}\ \N\df\lif{\M}{(\lif{\N}{\0}{\1})}{\1}$$
$$\M\ {\tt or}\ \N\df\lif{\M}{\0}{(\lif{\N}{\0}{\1})}$$ 
It is worth noticing that due to the presence of the $(\bt E)$ rule,
the following rules with an additive management of contexts are derivable in $\BSTA$:
$$
\infer{\Gamma\vdash \M\ {\tt and}\ \N:\bt }{\Gamma \vdash \M:\bt & \Gamma\vdash \N:\bt}
\qquad 
\infer{\Gamma\vdash \M\ {\tt or}\ \N:\bt }{\Gamma \vdash \M:\bt & \Gamma\vdash \N:\bt}
$$
Moreover, there is a term $\tt not$ defining the expected boolean function.
\subsection*{ATMs Configurations}
The encoding of Deterministic Turing Machine configuration given in \cite{GaboardiRonchi07csl}
can be adapted
in order to encode Alternating Turing Machine configurations.
In fact, an ATM  configuration can be viewed as a DTM 
configuration with an extra information about the state.
There are four kinds of state: 
\emph{accepting $(\tt A)$, rejecting $(\tt R)$, universal $(\land)$} and \emph{existential $(\lor)$ }. We can encode
such information by tensor pairs of booleans. In particular:
$$
\begin{tabular}{|c|c||c|c||c|c||c|c|}
\hline
$\langle \1,\0\rangle$ & $\tt A$ &$\langle \1,\1\rangle$ & $\tt R$ &
 $\langle \0,\1\rangle$ & $\land$ & $\langle \0,\0\rangle$ & $\lor$ \\
\hline
\end{tabular}
$$
We say that a configuration is accepting, rejecting, universal or  existential
depending on the kind of its state.\\
%Analogously to what has been done in \cite{GaboardiRonchicsl07} 
We can encode ATM configurations
by terms of the shape:
$$
{\tt  \lambda c}.\langle {\tt c}{\tt b}_0^l\circ\cdots \circ {\tt c}{\tt b}_n^l,{\tt c}{\tt b}_0^r\circ \cdots\circ {\tt c}{\tt b}_m^r,\langle {\tt Q},{\tt k}\rangle \rangle 
$$ 
where ${\tt c}{\tt b}_0^l\circ\ldots\circ {\tt c}{\tt b}_n^l$ and 
 ${\tt c}{\tt b}_0^r\circ \ldots\circ {\tt c}{\tt b}_n^r$ are respectively the
 left and right hand side words on the ATM tape,  
$\tt Q$ is a tuple of length $q$ encoding the state and  $\tt k\equiv \langle {\tt k}_1,{\tt k}_2\rangle$ is the tensor pair 
encoding the kind of the state. 
%Analogously to the case of Deterministic
%Turing machines,  
By convention, the left part of the tape is 
represented in a reversed order,
 the alphabet is composed by the two symbols $\0$ and
$\1$, the scanned symbol is the first symbol in the right part 
and final states are divided in accepting and rejecting.\\
Each term representing a configuration can be typed by indexed 
types (for every $i\geq 1$) as:
$$
\mathbf{ATM}_i\df\forall \alpha. !^i(\mathbf{B}\lin \alpha\lin\alpha)\lin ((\alpha\lin\alpha)^{2}\otimes\bt^{q+2} )
$$
%It is easy to adapt the term described in \cite{GaboardiRonchi07csl}
%dealing with TM to the case of ATM. 
We need some terms defining operations on ATM. 
%, for this reason we adapt
In particular, the term 
${\tt Init} \df\lambda {\tt t} .\lambda {\tt c}. \langle \lambda {\tt z}.{\tt z}, \lambda {\tt z}. {\tt t}({\tt c}\mathbf{0}){\tt z},\langle \underline{{\tt Q}_0}, \underline{{\tt k}_0} \rangle \rangle$
defines the initialization function that takes in input a Church numeral $\underline{\tt n}$ and
% defining a polynomial $Q$,
gives as output a 
Turing machine with tape of length $n$ filled by $\mathbf{0}$'s
in the initial state
${\tt Q}_0\equiv \langle {\tt q}_0,\ldots,{\tt q}_n\rangle $ of kind
 ${\tt k}_0\equiv \langle {\tt k}_0',{\tt k}_0''\rangle$
and with the head at the 
beginning of the tape. It is easy to verify that 
${\tt Init}:  \mathbf{S}_i\lin\mathbf{ATM}_i$ for every $i\geq 1$.\\
An ATM transition relation $\delta$ can be considered as the union of
the transition functions $\delta_1,\ldots,\delta_n $ of its components. So, we need to
show that transition functions are definable.
We decompose an ATM transition step in two stages.
In the first stage, the ATM configuration is decomposed  
to extract the information needed by the transition relation. 
In the second one, the previously
obtained information are combined, 
depending on the considered transition function $\delta_j$, in order to build
the new ATM configuration. 
% In order to type the decomposition of a TM configuration we will use the 
% type  $\mathbf{ID}_i$ defined as:
% $$%\begin{array}{c}
% \mathbf{ID}_i\df\forall \alpha. !^i(\mathbf{B}\lin \alpha\lin\alpha)\lin ((\alpha\lin\alpha)^2\otimes((\mathbf{B}\lin \alpha\lin\alpha) \otimes \mathbf{B})^2
% %\otimes(\mathbf{B}\lin \alpha\lin\alpha) \otimes \mathbf{B}
% \otimes \mathbf{B}^q\otimes \mathbf{B}^2)
% %\end{array}
% $$
% We decompose an ATM transition step in two stages.,
%  we combine two terms:
% one that decompose an ATM configuration and another one that 
% combines the obtained terms in the new ATM configuration.
The term performing the decomposition stage is:
% For the decomposition stage we have a term
$$\begin{array}{l}
{\tt Dec}\df\lambda {\tt s}.\lambda {\tt c}.\letp{{\tt s}({\tt F[c]})}{{\tt l},{\tt r}}{{\tt p}}{} {\tt let}\  {\tt p}
\ {\tt be} \ {\tt q},{\tt k}\ {\tt in}\ 
 {\tt let}\  {\tt l}\langle {\tt I},\lambda \x.{\tt I}, \mathbf{0} \rangle\\
\quad {\tt be} \ {\tt t}_l,{\tt c}_l,{\tt b}_0^l\ {\tt in}\ 
\letp{{\tt r}\langle {\tt I},\lambda \x.{\tt I}, \mathbf{0} \rangle}{{\tt t}_r,{\tt c}_r}{{\tt b}_0^r}{}
\langle {\tt t}_l,{\tt t}_r,{\tt c}_l,{\tt b}_0^l,{\tt c}_r,{\tt b}_0^r ,{\tt q},{\tt k}\rangle 
\end{array}
$$
where 
$
\tt F[c]\df\lambda b.\lambda z.\letp{z}{g,h}{i}{
\langle hi\circ g,c,b  \rangle}
$.
It is boring but easy to check that the term $\tt Dec$ 
can be typed as $\vdash {\tt Dec}:\mathbf{ATM}_i \lin \mathbf{ID}_i$, where the indexed type $\mathbf{ID}_i$ is used to type the intermediate configuration decomposition and it is defined as
$%
\mathbf{ID}_i\df\forall \alpha. !^i(\mathbf{B}\lin \alpha\lin\alpha)\lin ((\alpha\lin\alpha)^2\otimes((\mathbf{B}\lin \alpha\lin\alpha) \otimes \mathbf{B})^2
%\otimes(\mathbf{B}\lin \alpha\lin\alpha) \otimes \mathbf{B}
\otimes \mathbf{B}^q\otimes \mathbf{B}^2)
$.
%as $\vdash {\tt Dec}:\mathbf{ATM}_i \lin \mathbf{ID}_i$% where
% the indexed type  $\mathbf{ID}_i$ is defined as:
% $%
% \mathbf{ID}_i\df\forall \alpha. !^i(\mathbf{B}\lin \alpha\lin\alpha)\lin ((\alpha\lin\alpha)^2\otimes((\mathbf{B}\lin \alpha\lin\alpha) \otimes \mathbf{B})^2
% %\otimes(\mathbf{B}\lin \alpha\lin\alpha) \otimes \mathbf{B}
% \otimes \mathbf{B}^q\otimes \mathbf{B}^2)
% $
The behaviour of $\tt Dec$ is the following:
%to decompose a configuration as:
\begin{multline*}
{\tt Dec}\ (\lambda {\tt c}.\langle {\tt cb}_0^l\circ\cdots \circ {\tt cb}_n^l,{\tt cb}_0^r\circ\cdots \circ {\tt cb}_m^r, \langle {\tt Q}, {\tt k}\rangle \rangle )\to^{*}_{\beta}\\
\lambda {\tt c}.\langle {\tt cb}_1^l\circ\cdots \circ {\tt cb}_n^l , {\tt cb}_1^r\circ\cdots \circ {\tt cb}_m^r,{\tt c},{\tt b}_0^l,{\tt c},{\tt b}_0^r, {\tt Q}, {\tt k} \rangle
\end{multline*}
The transition combination stage is performed by the term 
%we have a term
$$\begin{array}{l}
{\tt Com}\df\lambda {\tt s}.\lambda{\tt c}.\letp{{\tt sc}}{{\tt l,r,c}_l,{\tt b}_l,{\tt c}_r,{\tt b}_r}{{\tt q},{\tt k}}{}\\
\quad\letp{\underline{\tt \delta}_j\langle {\tt b}_r,{\tt q},{\tt k}\rangle }{{\tt b',q',k'}}{{\tt m}}{}
 ({\tt if}\ {\tt m} \ {\tt then} \ {\tt R} \ {\tt else} \ 
{\tt L}) {\tt b'q'k'}
\langle {\tt l,r,c}_l,{\tt b}_l,{\tt c}_r\rangle
\end{array}
$$
where
${\tt R}\df
\lambda {\tt b'}.\lambda{\tt q'}.\lambda{\tt k'}.\lambda{\tt s}. \letp{{\tt s}}{{\tt l,r,c}_l,{\tt b}_l}{{\tt c}_r}{
\langle {\tt c}_r{\tt b'}\circ {\tt c}_l{\tt b}_l\circ {\tt l} ,{\tt r},\langle{\tt q}',{\tt k'}\rangle\rangle}$, 
${\tt L}\df\lambda {\tt b'}.\lambda{\tt q'}.\lambda{\tt k'}.\lambda{\tt s}. \letp{{\tt s}}{{\tt l,r,c}_l,{\tt b}_l}{{\tt c}_r}{
\langle {\tt l, c}_l{\tt b}_l\circ {\tt c}_r{\tt b'}\circ {\tt r},\langle{\tt q'}, {\tt k'}\rangle\rangle}$
and $\underline{\tt \delta}_j$ is a term defining the $\delta_j$
 component of the transition relation $\delta$.
The term $\tt Com$ 
can be typed as $\vdash {\tt Com}: \mathbf{ID}_i\lin\mathbf{ATM}_i$.
It  combines the symbols obtained after the decomposition stage depending on the considered component $\delta_j$ and 
returns the new ATM configuration. 
If $\delta_j({\tt b}_0^r,{\tt Q}, {\tt k})=({\tt b}',{\tt Q}', {\tt k}',\textrm{Right}) $, then 
\begin{multline*}
{\tt Com}\ (\lambda {\tt c}.\langle {\tt cb}_1^l\circ\cdots \circ {\tt cb}_n^l , {\tt cb}_1^r\circ\cdots \circ {\tt cb}_m^r,{\tt c},{\tt b}_0^l,{\tt c},{\tt b}_0^r, \langle {\tt Q}, {\tt k}\rangle \rangle)
\to^{*}_{\beta}\\
 \lambda {\tt c}.\langle {\tt cb}'\circ {\tt cb}_0^l\circ {\tt cb}_1^l\circ\cdots \circ {\tt cb}_n^l , {\tt cb}_1^r\circ\cdots \circ {\tt cb}_m^r,\langle {\tt Q}',{\tt k}'\rangle \rangle\
\end{multline*}
otherwise, if $\delta_j({\tt b}_0^r,{\tt Q}, {\tt k})=({\tt b}',{\tt Q}',{\tt k}',\textrm{Left}) $ then 
\begin{multline*}
{\tt Com}\ (\lambda {\tt c}.\langle {\tt cb}_1^l\circ\cdots \circ {\tt cb}_n^l , {\tt cb}_1^r\circ\cdots \circ {\tt cb}_m^r,{\tt c},{\tt b}_0^l,{\tt c},{\tt b}_0^r, \langle {\tt Q}, {\tt k}\rangle \rangle)
\to^{*}_{\beta}\\
\to^{*}_{\beta}\lambda {\tt c}. \langle  {\tt cb}_1^l\circ\cdots \circ {\tt cb}_n^l ,{\tt cb}_0^l\circ {\tt cb}'\circ {\tt cb}_1^r\circ\cdots \circ {\tt cb}_m^r,\langle {\tt Q}', {\tt k}'\rangle\rangle 
\end{multline*}
% $$\begin{array}{l}
% {\tt Com}\ (\lambda {\tt c}.\langle {\tt cb}_1^l\circ\cdots \circ {\tt cb}_n^l , {\tt cb}_1^r\circ\cdots \circ {\tt cb}_m^r,{\tt c},{\tt b}_0^l,{\tt c},{\tt b}_0^r, \langle {\tt Q}, {\tt k}\rangle \rangle)\\
% \to^{*}_{\beta}
%  \lambda {\tt c}.\langle {\tt cb}'\circ {\tt cb}_0^l\circ {\tt cb}_1^l\circ\cdots \circ {\tt cb}_n^l , {\tt cb}_1^r\circ\cdots \circ {\tt cb}_m^r,\langle {\tt Q}',{\tt k}'\rangle \rangle\ \textrm{ if }  \delta_j({\tt b}_0^r,{\tt Q}, {\tt k})=({\tt b}',{\tt Q}', {\tt k}',\textrm{Right}) \\
% \textrm{or}
% \\
% %\mathbf{Com}\ (\lambda c.\langle cb_1^l\circ\cdots \circ cb_n^l , cb_1^r\circ\cdots \circ cb_m^r,c,b_0^l,c,b_0^r, Q \rangle)\\
% \to^{*}_{\beta}\lambda {\tt c}. \langle  {\tt cb}_1^l\circ\cdots \circ {\tt cb}_n^l ,{\tt cb}_0^l\circ {\tt cb}'\circ {\tt cb}_1^r\circ\cdots \circ {\tt cb}_m^r,\langle {\tt k}',{\tt Q}'\rangle\rangle \ \textrm{ if }  \delta_j({\tt b}_0^r,{\tt Q}, {\tt k})=({\tt b}',{\tt Q}',{\tt k}',\textrm{Left}) 
% \end{array}
% $$
% For the above terms we have
% % In is not used
% %\vdash {\tt In} :  \mathbf{S}\lin
% %\mathbf{ATM}_i\lin\mathbf{ATM}_i\quad 
% \begin{align*}
% \vdash & {\tt Init}:  \mathbf{S}_i\lin\mathbf{ATM}_i
%    && \text{Initial configuration}\\
% \vdash & {\tt Tr_1},{\tt Tr_2} : \mathbf{ATM}_i\lin\mathbf{ATM}_i 
%   && \text{Transition functions}
% \end{align*}
%\quad
%\begin{array}{l}
%\vdash \mathbf{Ext}  : \mathbf{ATM}\lin \bt
%\end{array}
The term that takes a configuration and return its kind is:
$$\tt Kind\df\lambda \x.\letp{x(\lambda b.\lambda y.y)}{l,r}{s}{(\letp{s}{q}{k}{k})}$$ 
which is typable as $\vdash \tt Kind: \mathbf{ATM}_i\lin\bt^{2}$.
Finally the term
$$\tt Ext\df\lambda \x.\letp{({\tt Kind\ x})}{l}{r}{\tt r}$$ 
 typable as $\vdash \tt Ext: \mathbf{ATM}_i\lin\bt$,
returns $\0$ or $\1$ according to the fact that a given configuration is either accepting or rejecting.
%$\tt Ext$  respectively.
%Note that the assumption that by considering ATM machine which works in time 
\subsection*{Evaluation function}
Given an ATM $\mathcal{M}$ working in polynomial time 
we define a recursive evaluation procedure 
$\tt eval_{\mathcal{M}}$
that takes a string $\tt s$ and returns $\0$ or $1$ 
if the initial configuration (with the tape filled with $\tt s$) leads to 
an accepting or rejecting configuration respectively.\\
Without loss of generality we consider ATMs with transition relation $\delta$ of
degree two. So in particular, at each step we have two transitions terms ${\tt Tr}^1_{\mathcal{M}}$ and ${\tt Tr}^2_{\mathcal{M}}$ defining the two components $\delta_1$ and $\delta_2$ of the transition relation of $\mathcal{M}$.
We need to define some auxiliary functions.
In particular, we need a function $\alpha$ acting as
$$
\begin{array}{l}
\alpha( {\tt A}, \M_1,\M_2)={\tt A}\\
\alpha( {\tt R}, \M_1,\M_2)={\tt R}
\end{array}
\qquad
\begin{array}{l}
\alpha( \land, \M_1,\M_2)= \M_1\land\M_2\\
\alpha( \lor, \M_1,\M_2)= \M_1\lor\M_2\\
\end{array}
$$
This can be defined by the term
$$
\begin{array}{l}
\tt \alpha(M_0,M_1,M_2)\df \letp{M_0}{a_1}{a_2}{} 
\ifz a_1 \thenz
\tt  ( \ifz a_2 \thenz 
\langle a_1, \\
\tt \pi_2(M_1)\ or\ \pi_2(M_2)\rangle \elsez \langle a_1, \pi_2(M_1)\ and\ \pi_2(M_2)\rangle)
\elsez {\langle a_1, a_2\rangle}
\end{array}
$$
It is worth noting that $\alpha$ has typing:
%defined as above we have the following typing rule:
$$
\begin{array}{l}
\infer{\Gamma \vdash  {\tt \alpha}(\M_0,\M_1,\M_2):\bt^{2}}{
\Gamma \vdash  \M_0:\bt^{2}&
\Gamma \vdash  \M_1:\bt^{2}&
\Gamma \vdash  \M_2:\bt^{2}
}
%\vdash st:\bt^{n}\lin \bt^{2}\\
\end{array}
$$
where the contexts management is additive. This is one of the main reason for introducing  the $\ifz$ rule with an 
additive management of contexts. Moreover, note that we do not need 
any modality here, in particular this means that the $\alpha$ function  can be defined in the linear fragment of the $\BSTA$ system.
\\
%We would now define 
The evaluation function $\tt eval_{\mathcal{M}}$ 
can now be defined as an iteration of an higher order
$\tt Step_{\mathcal{M}}$ function over a $\tt Base$ case. Let $\tt Tr^{1}_{\mathcal{M}}$ and
$\tt Tr^2_{\mathcal{M}}$ be two closed terms  defining the two components
of the transition relation. Let us define
%Then, we have
$$
\begin{array}{l}
\tt Base \df\lambda c. (Kind\ c)\\
\tt Step_{\mathcal{M}}\df\lambda h.\lambda c.\alpha ((Kind\ c),
     ( h (Tr^1_{\mathcal{M}}\ c)), ( h ( Tr^2_{\mathcal{M}}\ c)))\\
\end{array}
$$
It is easy to verify that such terms are typable as:
$$
\begin{array}{l}
\vdash{\tt Base}:\TM_i\lin \bt^{2}\\
\vdash {\tt Step}_{\mathcal{M}}:(\TM_i\lin \bt^2) \lin\TM_i\lin \bt^2\\
\end{array}
$$
%Now we can finally define the evaluation function. 
Let $P$ be a polynomial definable by a term $\tt P$ 
typable as 
$\vdash {\tt P}:!^{deg(P)}\mathbf{N}\lin \mathbf{N}_{2deg(P)+1}$. 
Then, the evaluation function of an ATM $\mathcal{M}$ working in polynomial time $P$ is
definable by the term:
$$\tt eval_{\mathcal{M}}\df \lambda s. Ext ((P\ (len\ s)\ Step_{\mathcal{M}} \ Base )(Init \ s)) 
$$
which is typable in $\BSTA$ as
$
\vdash {\tt eval}_{\mathcal{M}}:!^{t}\mathbf{S}\lin \bt$ where 
$t=\max(deg(P),1)+1
$.\\
Here, the evaluation is performed by a higher order iteration, which
represents a recurrence with parameter substitutions.
%From this we can construct terms that given the input string decide the problem.
Note that by considering an ATM $\mathcal{M}$ that decides a language  $\mathcal{L}$, we have that 
the final configuration is either accepting or rejecting.  Hence
the term $\tt Ext$ can be applied with the intended meaning.
\begin{lemma}
A \emph{decision problem} $\mathcal{D}: \{0,1\}^{*}\to \{0,1\}$ 
decidable by an ATM $\mathcal{M}$ in polynomial time 
is $\bt$-programmable in $\BSTA$. 
\end{lemma}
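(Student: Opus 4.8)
The plan is to exhibit the term ${\tt eval}_{\mathcal{M}}$ constructed above as the witness of $\bt$-programmability and to verify the two clauses of the definition. Let $\mathcal{M}$ decide $\mathcal{D}$ within a polynomial time bound $P$; without loss of generality its transition relation has degree two, with components encoded by the closed terms ${\tt Tr}^1_{\mathcal{M}}$ and ${\tt Tr}^2_{\mathcal{M}}$, so that $\vdash {\tt eval}_{\mathcal{M}}:!^{t}\mathbf{S}\lin\bt$ with $t=\max(deg(P),1)+1$. First I would check that ${\tt eval}_{\mathcal{M}}\,\underline{s}\in\prog$ for every $s\in\{0,1\}^{*}$: the encoding $\underline{s}$ is typable $\vdash\underline{s}:\mathbf{S}$ by a degree $0$ derivation, hence also $\vdash\underline{s}:!^{t}\mathbf{S}$ by $t$ applications of $(sp)$, and then $(\lin E)$ gives $\vdash{\tt eval}_{\mathcal{M}}\,\underline{s}:\bt$, i.e.\ a program.

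For the second clause it suffices, by the completeness direction of the Soundness/Completeness theorem for $\mathrm{K}_{\mathcal{B}}^{\mathcal{C}}$, to prove the purely reductional statement ${\tt eval}_{\mathcal{M}}\,\underline{s}\red^{*}\underline{b}$ with $b=\mathcal{D}(s)$; the evaluation $\models{\tt eval}_{\mathcal{M}}\,\underline{s}\ev\underline{b}$ then follows, and by determinacy of the evaluated boolean this yields the required equivalence. I would first unfold the head redexes: ${\tt P}\,({\tt len}\,\underline{s})$ reduces to the Church numeral $\underline{P(|s|)}$ (using the definability of ${\tt len}$ and of the polynomial ${\tt P}$), so that $({\tt P}\,({\tt len}\,\underline{s})\,{\tt Step}_{\mathcal{M}}\,{\tt Base})$ reduces to the $P(|s|)$-fold iterate $f_{P(|s|)}$ of ${\tt Step}_{\mathcal{M}}$ over ${\tt Base}$, while ${\tt Init}\,\underline{s}$ reduces to the encoding $\underline{C_0}$ of the initial configuration of $\mathcal{M}$ on $s$.

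The core of the argument is an induction on $n$ establishing that, for every reachable configuration $C$ whose alternating computation subtree has height at most $n$, $f_{n}\,\underline{C}\red^{*}\langle b_1,b_2\rangle$ where the component $b_2$ is $\0$ exactly when $C$ is accepting in the alternating sense and $\1$ otherwise. The base case $n=0$ uses ${\tt Base}=\lambda c.({\tt Kind}\,c)$ on a final configuration, where ${\tt Kind}$ returns $\tt A=\langle\1,\0\rangle$ or $\tt R=\langle\1,\1\rangle$, carrying the correct $b_2$. For the inductive step, ${\tt Step}_{\mathcal{M}}\,f_{n}\,\underline{C}$ reduces to $\alpha(({\tt Kind}\,\underline{C}),\,f_{n}({\tt Tr}^1_{\mathcal{M}}\,\underline{C}),\,f_{n}({\tt Tr}^2_{\mathcal{M}}\,\underline{C}))$; here I would use the reduction behaviours of ${\tt Dec}$ and ${\tt Com}$ recorded above to see that ${\tt Tr}^j_{\mathcal{M}}\,\underline{C}$ reduces to the encoding of the $j$-th successor of $C$, and then check that $\alpha$ returns $\tt A$/$\tt R$ on final $C$, the \emph{or} of the two children's $b_2$-components on an existential $C$, and their \emph{and} on a universal $C$ — precisely the alternating acceptance clauses. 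Since $P(|s|)$ bounds the running time, it bounds the height of the whole tree, so $n=P(|s|)$ reaches final configurations on every branch; then $f_{P(|s|)}\,\underline{C_0}$ delivers the alternating decision and the final ${\tt Ext}$ extracts its $b_2$-component, giving $\underline{b}$ with $b=\mathcal{D}(s)$.

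The main obstacle is this correctness induction: one must align the boolean combinators computed by $\alpha$ (recall $\0$ denotes ``true'') with the \emph{and}/\emph{or} semantics of universal/existential nodes and verify that $b_2$ is faithfully propagated upward through the non-final levels, while simultaneously checking — from the stated reductions of ${\tt Dec}$ and ${\tt Com}$ — that ${\tt Tr}^j_{\mathcal{M}}$ really produces the correct successor in the encoded tape representation. A secondary subtlety is that the polynomial must bound \emph{time}, i.e.\ tree height, uniformly over all branches rather than the length of a single run; this is exactly what the APTIME hypothesis supplies, and it is what guarantees that the fixed number $P(|s|)$ of iterations exhausts every path to a final configuration.
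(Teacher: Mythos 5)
Your proposal is correct and follows the same route as the paper: it exhibits ${\tt eval}_{\mathcal{M}}$ as the witnessing term and verifies that it is a well-typed program whose evaluation computes the alternating acceptance value. The paper's own proof is a one-line assertion of the equivalence $\mathcal{D}(s)=b \iff {\tt eval}_{\mathcal{M}}\,\underline{s}\ev \underline{b}$, leaving implicit exactly the typing check and the correctness induction on the iteration of ${\tt Step}_{\mathcal{M}}$ that you spell out.
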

\begin{proof}
%$\vdash {\tt eval}_{\mathcal{M}}:!^{\max(deg(T),deg(S),1)+1}\mathbf{S}\lin \bt$  and:
$
  \mathcal{D}(s)=b \iff \tt eval_{\mathcal{M}} s\ev \tt b$ %\qedhere
\end{proof}
From the well known result of \cite{ChKS81} we can conclude.
\begin{theorem}[Polynomial Space Completeness]
Every decision problem $\mathcal{D}\in\mathrm{\emph{PSPACE}}$ 
is $\bt$-programmable in $\BSTA$.
\end{theorem}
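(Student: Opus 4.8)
The plan is to obtain the theorem as an almost immediate corollary of the preceding lemma, using the complexity-theoretic collapse $\text{PSPACE} = \text{APTIME}$ as the bridge. First I would fix an arbitrary decision problem $\mathcal{D}\in\text{PSPACE}$. By the result of Chandra, Kozen and Stockmeyer cited in~\cite{ChKS81}, the class of problems decidable in polynomial space coincides with the class of problems decidable by an alternating Turing machine in polynomial time; hence there is an ATM $\mathcal{M}$ and a polynomial $P$ such that $\mathcal{M}$ decides $\mathcal{D}$ within time $P(|s|)$ on every input string $s$.

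Next I would invoke the preceding lemma directly. Since $\mathcal{M}$ is a polynomial-time ATM deciding $\mathcal{D}$, the lemma supplies the term ${\tt eval}_{\mathcal{M}}$ together with its typing $\der {\tt eval}_{\mathcal{M}}:!^{t}\mathbf{S}\lin\bt$ (where $t=\max(deg(P),1)+1$) and the behavioural equivalence
$$
\mathcal{D}(s)=b\ \iff\ \models\ {\tt eval}_{\mathcal{M}}\,\underline{s}\ev\tt b .
$$
This is precisely the statement that $\mathcal{D}$ is $\bt$-programmable in $\BSTA$, so the theorem follows.

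The substance of the argument has therefore already been discharged, and I would flag that the only nontrivial points are the ones handled in the construction leading to the lemma: the encoding of ATM configurations by the indexed type $\TM_i$, the $\bt$-definability of each transition component through $\tt Dec$ and $\tt Com$, and—most crucially—the combination of the two alternating branches in the $\alpha$ function. The hard part is not the present theorem but the fact that $\alpha$ relies essentially on the additive $(\bt E)$ rule: the two recursive calls $h({\tt Tr}^1_{\mathcal{M}}\,c)$ and $h({\tt Tr}^2_{\mathcal{M}}\,c)$ inside ${\tt Step}_{\mathcal{M}}$ share the same context, so without the free contraction permitted by $(\bt E)$ one could not type the evaluation of a universal or existential node while keeping the degree fixed. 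I would also make sure the data-type observation is in place—that strings are typed by derivations of degree $0$—so that the degree $t$ of ${\tt eval}_{\mathcal{M}}$, and hence the exponent of the polynomial space bound given by the Polynomial Space Soundness theorem, depends only on $\mathcal{M}$ and not on the input $s$.
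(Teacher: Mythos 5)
Your proposal is correct and follows exactly the paper's route: the theorem is obtained by combining the equality $\text{PSPACE}=\text{APTIME}$ from \cite{ChKS81} with the preceding lemma stating that every polynomial-time ATM-decidable problem is $\bt$-programmable via ${\tt eval}_{\mathcal{M}}$. Your additional remarks about where the real work lies (the additive $(\bt E)$ rule in typing $\alpha$, and the degree-$0$ typing of data) accurately reflect the paper's construction and do not change the argument.
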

%%% Local Variables: 
%%% mode: latex
%%% TeX-master: "main"
%%% End: 

\section{Conclusion}
In this paper we have designed $\BSTA$, a language correct and complete with respect the polynomial space 
computations. Namely, the calculus is an extension of $\lambda$-calculus, and we supplied a type 
assignment system for it, such that well typed programs (closed terms of constant type) can be evaluated
in polynomial space and moreover all polynomial space decision functions can be computed by well 
typed programs. In order to perform the complexity bounded evaluation a suitable 
evaluation machine $\mathrm{K}_{\mathcal{B}}^{\mathcal{C}}$ has been defined, evaluating programs according to the left-most outer-most evaluation strategy and using two memory devices, one in order to make the evaluation space-efficient and the other in order to avoid backtracking.\par
\medskip 

The results presented in this paper have been obtained by exploiting
the equivalence ~\cite{ChKS81}:
$$
  \text{PSPACE} = \text{APTIME}
$$
Indeed, evaluations in the machine $\mathrm{K}_{\mathcal{B}}^{\mathcal{C}}$
can be regarded as computations in Alternating Turing Machines. Moreover, the simulation of big-step evaluations by means of 
small-step reductions is a reminiscence of the simulation 
of ATM by means of Deterministic Turing Machines.
Conversely, the PSPACE completeness is shown by encoding polynomial
time ATM by means of well typed terms. An interesting fact in the 
completeness proof is that the modal part 
of the $\BSTA$ system is only involved in the polynomial iteration, while
the ATM behaviour (i.e. the $\alpha$ function) can be defined in the 
modal free fragment of the system.
On the basis of these facts, we think that % this analogy can be further investigated.
% In particular, we think that 
our tools could be fruitfully used in 
order to revisit some classical complexity results relating time and space \cite{DBLP:journals/tcs/Stockmeyer76}.\par
\medskip

Starting from the type system $\BSTA$ presented in this paper, one would wonder to exploit the proofs-as-programs correspondence in the design of a purely logical characterization of the class PSPACE. 
In particular, one would understand how to do this in sequent calculus
or proof nets,
the two proof formalisms most natural for linear logic.
Unfortunately, the logical sequent calculus system obtained by 
forgetting terms is 
unsatisfactory. Indeed, it looks not so easy to understand how to transfer 
the complexity bound from the term evaluation to the cut-elimination
in a logic.  
Moreover, boolean constants are redundant
and the $\BSTA$ rule $(\bt E)$ has no direct correspondent
in sequent calculus.  All these difficulties suggest that exploring this direction could be a true test for the light logics principles.

% An example is
% in \cite{Schopp07lics} where 
% a logical system  characterizing
% logarithmic space computations is defined, the  
% Stratified Bounded Affine Logic (SBAL). Logarithmic space 
% soundness is proved by considering only proofs of certain sequents to represent
% the functions computable in logarithmic space.\\
%  tool has been used in 
% Maurel's Non Deterministic Light Logic (nLLL) \cite{conf/tlca/Maurel03}
% in order to characterize 
% non deterministic polynomial time. More precisely nLLL introduces 
% an explicit sum rule to deal with 
% non deterministic computation.

%On the other side, light logics have been used for
% A different approach but also based on linear logic is to characterize 
% circuit complexity classes by means of Booleans proof nets 
% \cite{Terui04lics,Mogbil07lfcs}.\\
\medskip

\medskip  
%% We have given a type assignement system $\BSTA$ for a language based on lambda
%% calculus extended by constants for booleans.
%% We have given an operational semantics showed that every program 
%% can be executed in space polynomial in its size. Furthermore we have
%% shown 

%%% Local Variables: 
%%% mode: latex
%%% TeX-master: "main"
%%% End: 

%\input{example}
%\appendix
%\input{substitution}
\bibliographystyle{acmtrans}
\bibliography{soft}
\end{document} 
%%% Local Variables: 
%%% mode: latex
%%% TeX-master: t
%%% End: 